\numberwithin{equation}{section}
\newcommand{\one}{{\mathchoice {\rm 1\mskip-4mu l} {\rm 1\mskip-4mu l}
{\rm 1\mskip-4.5mu l} {\rm 1\mskip-5mu l}}}
\newcommand{\h}{\mathfrak{h}}
\newcommand{\ex}{\mathrm{e}}
\newcommand{\D}{\mathrm{dom}}
\newcommand{\R}{\mathrm{ran}}
\newcommand{\im}{\mathrm{i}}
\newcommand{\Fock}{\mathfrak{F}}
\newcommand{\dG}{\mathrm{d}\Gamma}
\newcommand{\la}{\langle}
\newcommand{\ra}{\rangle}
\newcommand{\Tr}{\operatorname{tr}}
\newcommand{\BbbR}{\mathbb{R}}
\newcommand{\BbbN}{\mathbb{N}}
\newcommand{\BbbZ}{\mathbb{Z}}
\newcommand{\BbbC}{\mathbb{C}}
\newcommand{\vepsilon}{\varepsilon}
\newcommand{\vphi}{\varphi}
\newcommand{\Nf}{N_{\mathrm{f}}}
\newcommand{\Hf}{H_{\mathrm{f}}}
\newcommand{\A}{\mathscr{A}}
\newcommand{\dm}{\mathrm{d}}
\newcommand{\no}{\nonumber \\}
\newcommand{\bs }{\boldsymbol}
\theoremstyle{definition}
\newtheorem{define}{Definition}[section]
\newtheorem{rem}[define]{Remark}
\newtheorem{exam}[define]{Example}
\newtheorem{Thm}[define]{Theorem}
\newtheorem{Prop}[define]{Proposition}
\newtheorem{lemm}[define]{Lemma}
\newtheorem{coro}[define]{Corollary}
\begin{document}

\allowdisplaybreaks

\newcommand{\arXivNumber}{1902.05207}

\renewcommand{\PaperNumber}{036}

\FirstPageHeading

\ShortArticleName{Note on the Retarded van der Waals Potential within the Dipole Approximation}

\ArticleName{Note on the Retarded van der Waals Potential\\ within the Dipole Approximation}

\Author{Tadahiro MIYAO}

\AuthorNameForHeading{T.~Miyao}

\Address{Department of Mathematics, Hokkaido University, Sapporo, Japan}
\Email{\href{mailto:miyao@math.sci.hokudai.ac.jp}{miyao@math.sci.hokudai.ac.jp}}

\ArticleDates{Received February 27, 2019, in final form April 14, 2020; Published online April 26, 2020}

\Abstract{We examine the dipole approximated Pauli--Fierz Hamiltonians of the nonrelativistic QED. We assume that the Coulomb potential of the nuclei together with the Coulomb interaction between the electrons can be approximated by harmonic potentials. By an exact diagonalization method, we prove that the binding energy of the two hydrogen atoms behaves as $R^{-7}$, provided that the distance between atoms $R$ is sufficiently large. We employ the Feynman's representation of the quantized radiation fields which enables us to diagonalize Hamiltonians, rigorously. Our result supports the famous conjecture by Casimir and Polder.}

\Keywords{retarded van der Waals potential; non-relativistic QED; Pauli--Fierz Hamiltonian; dipole approximation}

\Classification{81V10; 81V55; 47A75}

\section{Introduction}

London was the first to explain attractive interactions between neutral atoms or molecules by applying quantum mechanics~\cite{London}. Nowadays, the attractive forces are called the van der Waals--London forces, and are described by the potential energy decaying as~$R^{-6}$ for $R$ sufficiently large.\footnote{More precisely, if one takes the interactions between electrons and the quantized Maxwell field according to non-relativistic QED into account, the $R^{-6}$ behavior is true for the near-field region (very vaguely ``sufficiently large $R$ but not too large'', and discussed on \cite[p.~157]{Craig} and \cite{MS2}), but for the far-field region (where ``retardation effects become important'') the presented results show a $R^{-7}$ behavior. In the approximation where the quantum fluctuations of the Maxwell field are ignored, only the electrostatic Coulomb interaction remains. In this case, the binding energy behaves as $R^{-6}$ provided that $R$ is sufficently large. This $R^{-6}$ behavior is well-understood, mathematically \cite{AMR,AS, LT, MoSi}.} Here, $R$ denotes the distance between two atoms or molecules. It is recognized that these forces come from the quantum fluctuations of the charges inside the atoms. Because even a simple hydrogen atom displays a fluctuating dipole, the van der Waals--London forces are ubiquitous and therefore very fundamental.

Casimir and Polder took the interactions between electrons and the quantized radiation fields
into consideration and perfomed the fourth order perturbative computations~\cite{CP}. They found that
the finiteness of the speed of light weakens the correlation between nearby dipoles and causes
the attractive potential between atoms to behave as
\begin{gather}
V_{\mathrm{CP}}(R)\cong -\frac{23}{4\pi} \bigg(\frac{1}{2\pi}\bigg)^2\frac{1}{R^7} \alpha_A \alpha_B,\qquad R\gg 1, \label{CPpo}
\end{gather}
where $\alpha_A$ and $\alpha_B$ are the static polarizability of the atoms.
The potential $V_{\mathrm{CP}}$ is called the Casimir--Polder potential or the retarded van der Waals potential. For reviews, see, e.g., \cite{BMM,Keller, Ex, MB,Milonni}.
Although this result is plausible, Casimir--Polder's arguments are heuristic, and lack mathematical rigor.

There are few rigorous results concerning the Casimir--Polder potential; In \cite{MS1, MS2}, Miyao and Spohn gave a path integral formula for $V_{\mathrm{CP}}$ and applied it to computing the second cumulant.
Under the assumption that all of higher order cumulants behave as $O\big(R^{-9}\big)$ and their
coefficients are small enough to control, they rigorously refound that $V_{\mathrm{CP}}$ behaves as $R^{-7}$ as $R\to \infty$. Although this assumption appears to be plausible, to prove it is extremely hard.
Therefore, to give a mathematical foundation of the Casimir--Polder potential is an open problem even today.

In the present paper, we will examine the Pauli--Fierz model under the following assumptions \cite[equations~(13.127) and~(13.123)]{Spohn}:
\begin{itemize}\itemsep=0pt
\item[(C.1)] the dipole approxiamtion (see~(\ref{DIPA}));
\item[(C.2)] the electrons are strongly bound around each nucleus (see (\ref{BOUND1}) and (\ref{BOUND2})).
\end{itemize}
The dipole approximation (C.1) is widely accepted as a convenient procedure
in the community of the nonrelativistic QED~\cite{Spohn}. The assumption (C.2) is often useful when we study the low energy behavior of the system. Under the assumptions, we prove that the binding energy for two hydrogen atoms actually behaves as $R^{-7}$. In the context of the Born--Oppenheimer approximation, this indicates that the effective potential between two hydrogen atoms behaves as $R^{-7}$ too.
This result supports our assumptions for the model without dipole
approximation, and is expected to become a starting point for study of the non-approximated model.
 Our proof relies on the fact that the dipole approximated Hamiltoninas can be diagonalized
by applying Feynman's representation of the quantized radiation fields~\cite{Feynman}.
It has been believed that the dipole approximated model also exhibits~$R^{-7}$ behavior by the forth order perturbation theory. However, the arguments concering the error terms are completely missing. Indeed, this part is tacitly assumed to be trivial in literatures. In this paper, we actually perform systematic error estimates which are far from trivial.

In mathematical physics, it is known that rigorous studies of the Pauli--Fierz Hamiltonian require an extra care due to the infamous infrared problem \cite{BFS, GLL, Spohn}.
Fortunetely, within the assumptions (C.1) and (C.2), we can control the problem relatively easily.

Before we proceed, we have additional remarks.
In his Ph.D.~Thesis \cite{Koppen}, Koppen studied the retarded van der Waals potential; he examined the Pauli--Fierz model with the dipole approximation~(C.1), but the condition~(C.2) is not assumed in \cite{Koppen}. In contrast to the present study, he imposed the infrared cutoff $\sigma$ on the Hamiltonian in order to apply the naive perturbation theory and obtained an expansion formula for the binding energy: $E_{\sigma}(R)=\sum\limits_{i=0}^{\infty} e^i V_i^{\sigma}(R)$. Then he removed the infrared cutoff from each term: $V_i(R):=\lim\limits_{\sigma\to +0} V_i^{\sigma}(R)$. Finally, he proved that some $V_i(R)$ satisfies~(\ref{CPpo}). His observation could be regareded as a nice starting point of mathematical analysis of the retarded van der Waals potential, however, there are still some problems to be considered. For example, the magnetic contributions to the~$-R^{-7}$ decay are completely overlooked.
 In addition, in the mathematical study of the Pauli--Fierz model, it is well-known that to prove
that $\lim\limits_{\sigma\to +0}E_{\sigma}(R)=\sum\limits_{i=0}^{\infty} e^i \lim\limits_{\sigma\to +0}V_i^{\sigma}(R)$ is very hard problem, the aforementioned infrared problem.

Our contributions are
\begin{itemize}\itemsep=0pt
\item to provide a minimal QED model which can rigorously explain the Casimir--Polder potential by a relatively simple and easy way;
\item to perform systematic error estimates without the infrared cutoff.
\end{itemize}
In this way, the present paper and the thesis \cite{Koppen} are complementary to each other.

Since the electrons obey Fermi--Dirac statistics, the wave functions of the two-electron system belong to
$(\mathfrak{H}\wedge \mathfrak{H})\otimes \Fock\big(L^2\big(\BbbR^3\times \{1, 2\}\big)\big)$,
where $\mathfrak{H}=L^2\big(\BbbR^3\big)\otimes \BbbC^2$, the Hilbert space with spin $1/2$,
the symbol $\wedge $ indicates the anti-symmetric tensor product and $\Fock\big(L^2\big(\BbbR^3\times \{1, 2\}\big)\big)$ is the Fock space over $L^2\big(\BbbR^3\times \{1, 2\}\big)$.
Usually, the ground state of this system is a spin singlet. Thus, the spatial part of the ground state is symmetric and we can end up with minimizing the energy in an unrestricted manner on $\big(L^2\big(\BbbR^3\big)\otimes L^2\big(\BbbR^3\big)\big)\otimes \Fock\big(L^2\big(\BbbR^3\times \{1, 2\}\big)\big)$. For this reason, we perform our analysis on $\big(L^2\big(\BbbR^3\big)\otimes L^2\big(\BbbR^3\big)\big)\otimes \Fock\big(L^2\big(\BbbR^3\times \{1, 2\}\big)\big)$.\footnote{Or we could simply say that one considers the ``distinguishable particles'', see Section~\ref{Discuss} for detail.}
However, it should be mentioned that our observation here can not be extended to general $N$-electron systems, directly.

In fairness, we mention the following two difficulties of the assumptions~(C.1) and~(C.2). For details, see discussions in Section~\ref{Discuss}.
\begin{itemize}\itemsep=0pt
\item The condition (C.2) breaks the indistinguishability of the electrons.
\item Under the conditions (C.1) and (C.2), we cannot reproduce the exact cancellation of the term with $R^{-6}$ decay (the van der Waals--London potential) by the contribution from the quantized Maxwell field. Note that this cancellation is known to be fundamental to explain the retarded van der Waals potential \cite{MS1, MS2}.
\end{itemize}

The present paper is organized as follows. In Section~\ref{SecM}, we introduce the dipole approximated Pauli--Fierz Hamiltonian and state the main result. In Section~\ref{FeynRep}, we switch to the Feynman representation of the quantized radiation fields. This representation enables us to diagonalize the Hamiltonians as we will see in the following sections. Further, we introduce a canonical transformation which induces the quantized displacement fields in the Hamiltonians in Section~\ref{CanoTr}. Section~\ref{FiniteVApp} is devoted to the finite volume approximation, which is a standard method in the
study of the quantum field theory~\cite{AH, GJ}. Then we diagonalize the Hamiltonians in Sections~\ref{Dai1} and~\ref{Dai2}. In Section~\ref{PfMainT}, we give a proof of the main theorem. Section~\ref{Discuss} is devoted to the discussions of the approximations~(C.1) and~(C.2). In Appendices~\ref{List},~\ref{NumC} and~\ref{BasicI}, we collect various auxiliary results which are needed in the main sections.

\section{Main result}\label{SecM}
Let us consider a single hydrogen atom with an infinitely heavy nucleus located at the origin~$0$. The nonrelativistic QED Hamiltonian for this system is given by
\begin{gather*}
H_{\mathrm{1e}}=\frac{1}{2}\big({-}\im \nabla-eA(x)\big)^2-e^2V(x)+\Hf.
\end{gather*}
The nucleus has charge $e>0$, and the electron has charge $-e$. We assume that the charge distribution $ \varrho$ satisfies the following properties:
\begin{itemize}\itemsep=0pt
\item[(A.1)] $\varrho$ is normalized: $\int_{\BbbR^3} \dm x\, \varrho(x)=1$.
\item[(A.2)] $\varrho(x)=\varrho(-x)$. Thus the Fourier transformation $\hat{\varrho}$ is
real.
\item[(A.3)] $\hat{\varrho}$ is rotation
invariant, $\hat{\varrho}(k) = \hat{\varrho}_{\mathrm{rad}}(|k|)$, of rapid decrease and smooth.
\end{itemize}
The smeared Coulomb potential $V$ is given by
 \begin{gather*}
V(x)=\int_{\BbbR^3}\dm k\, \hat{\varrho}(k)^2|k|^{-2}\ex^{-\im k\cdot x}.
\end{gather*}
The photon annihilation operator is denoted by $a(k, \lambda)$. As usual, this operator satisfies the
standard commutation relation:
\begin{gather*}
[a(k, \lambda), a(k', \lambda')^*]=\delta_{\lambda\lambda'}\delta(k-k').
\end{gather*}
The quantized vector potential $A(x)$ is defined by
\begin{gather*}
A(x)=\sum_{\lambda=1,2}\int_{\BbbR^3}\dm k\, \frac{\hat{\varrho}(k)}{\sqrt{2|k|}}\vepsilon(k, \lambda)\big(\ex^{-\im k\cdot x}a(k, \lambda)^*+\ex^{\im k\cdot x}a(k, \lambda)\big),
\end{gather*}
where $\vepsilon(k, \lambda)=(\vepsilon_1(k, \lambda), \vepsilon_2(k, \lambda), \vepsilon_3(k, \lambda)),\, \lambda=1, 2$ are polarization vectors. For concreteness, we choose as
\begin{gather}
\vepsilon(k, 1)=\frac{(k_2, -k_1, 0)}{\sqrt{k_1^2+k_2^2}},\qquad \vepsilon(k, 2)=\frac{k}{|k|}\wedge \vepsilon(k, 1).\label{PolariDef}
\end{gather}
Note that $A(x)$ is essentially self-adjoint. We will denote its closure by the same symbol.
The field energy $\Hf$ is given by
\begin{gather*}
\Hf=\sum_{\lambda=1,2}\int_{\BbbR^3}\dm k\, |k|a(k,
 \lambda)^*a(k,\lambda).
 \end{gather*}
 The operator $H_{\mathrm{1e}}$ acts in the Hilbert space $
L^2\big(\BbbR^3\big)\otimes \Fock\big(L^2\big(\BbbR^3_k\times \{1,2\}\big)\big)$, where
$\Fock(\h)$ is the bosonic Fock space over~$\h$: $\Fock(\h)=\bigoplus\limits_{n=0}^{\infty} \h^{\otimes_{\mathrm{s}}n}$. Here, $\otimes_{\mathrm{s}}$ indicates the symmetric tensor product.

To examine the Casimir--Polder potential, we consider two hydrogen atoms, one located at the origin and the other at $r=(0, 0, R)$ with $R>0$. For computational convenience, we define the position of the second electron relative to $r$, see Fig.~\ref{Figure1}.
\begin{figure}[t]\centering
\includegraphics{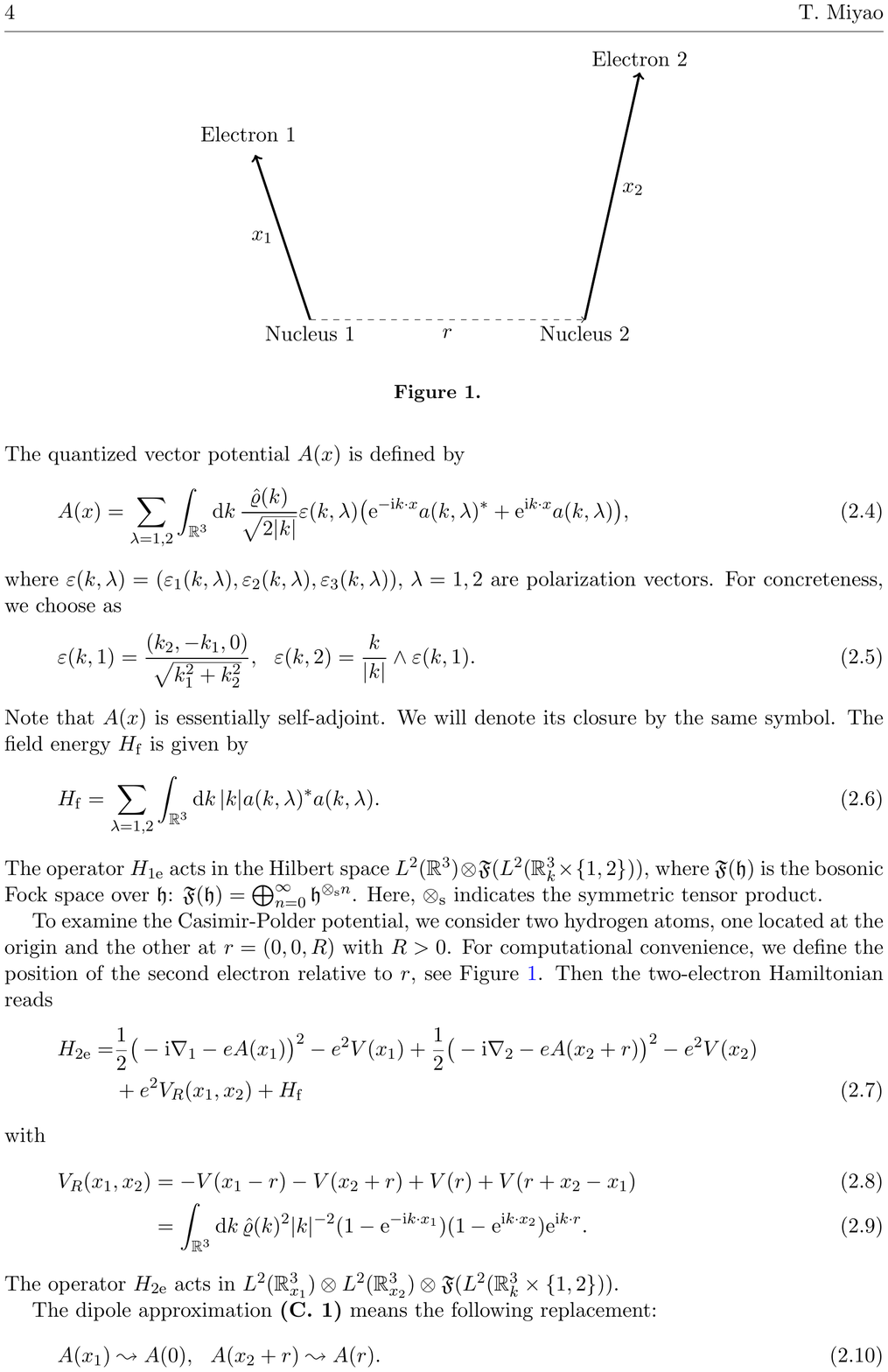}
 \caption{} \label{Figure1}
\end{figure}
Then the two-electron Hamiltonian reads
\begin{gather*}
H_{\mathrm{2e}}=\frac{1}{2}\big({-}\im \nabla_1-eA(x_1)\big)^2-e^2V(x_1)
+\frac{1}{2}\big({-}\im \nabla_2-eA(x_2+r)\big)^2-e^2V(x_2)\\
\hphantom{H_{\mathrm{2e}}=}{}+e^2V_R(x_1, x_2)+\Hf
\end{gather*}
 with
 \begin{align*}
V_R(x_1, x_2)&=-V(x_1-r)-V(x_2+r)+V(r)+V(r+x_2-x_1)\\
&=\int_{\BbbR^3}\dm k\, \hat{\varrho}(k)^2|k|^{-2}\big(1-\ex^{-\im k\cdot
 x_1}\big) \big(1-\ex^{\im k\cdot x_2}\big) \ex^{\im k\cdot r}.
\end{align*}
The operator $H_{\mathrm{2e}}$ acts in $L^2\big(\BbbR^3_{x_1}\big)\otimes L^2\big(\BbbR_{x_2}^3\big)\otimes \Fock\big(L^2\big(\BbbR^3_k\times \{1,2\}\big)\big)$.

The dipole approximation (C.1) means the following replacement:
\begin{gather}
A(x_1)\leadsto A(0),\qquad A(x_2+r)\leadsto A(r). \label{DIPA}
\end{gather}
By the assumption (C.2), we can take $x_1$ and $x_2$ sufficiently small. Therefore, we assume that the Coulomb potential of the nuclei together with the Coulomb interaction between the electrons can be approximated by harmonic potentials. Then one has
\begin{gather}
V(x_j)\simeq -\frac{1}{2}\nu_0^2x_j^2+\mathrm{const} \label{BOUND1}
\end{gather}
 with $\nu^2_0=\frac{1}{3}\int \dm k \, \hat{\varrho}(k)^2$ and
\begin{gather}
V_R(x_1, x_2)\simeq \int\dm k\,
\hat{\varrho}(k)^2 \ex^{\im k\cdot r}\big(x_1\cdot \hat{k}\big) \big(x_2\cdot \hat{k}\big) \label{BOUND2}
\end{gather}
 with $\hat{k}=k/|k|$. Hence, we arrive at
\begin{gather*}
H_{\mathrm{D1e}}=\frac{1}{2}\big({-}\im \nabla -eA(0)\big)^2+\frac{1}{2}e^2\nu_0^2x^2+\Hf
\end{gather*}
and
\begin{gather*}
H_{\mathrm{D2e}} =\frac{1}{2}\big({-}\im
 \nabla_1-eA(0)\big)^2+\frac{1}{2}e^2\nu_0^2 x_1^2
+\frac{1}{2}\big({-}\im \nabla_2-eA(r)\big)^2+\frac{1}{2}e^2\nu_0^2 x_2^2
\\
\hphantom{H_{\mathrm{D2e}} =}{}
 +e^2 \int_{\BbbR^3}\dm k\, \hat{\varrho}(k)^2 \ex^{\im k\cdot r}(x_1\cdot \hat{k}) (x_2\cdot
 \hat{k})+\Hf.
\end{gather*}
 Note that $H_{\mathrm{D1e}}$ and $H_{\mathrm{D2e}}$ are self-adjoint and bounded from below~\cite{LHB}, because the cross-term $\int \dm k\, \hat{\varrho}(k)^2 \ex^{\im k\cdot r} \big(x_1\cdot \hat{k}\big) \big(x_2\cdot \hat{k}\big)$ becomes very small provided that~$R$ is large enough. As for physical discussions of the approximation above, see Section~\ref{Discuss} in detail.

In what follows, we assume an additional condition:
\begin{itemize}\itemsep=0pt
 \item[(A.4)] We regard $\nu_0$ as a parameter. Thus, $\nu_0$ is independent of $\varrho$.
\end{itemize}
Hence, there are three parameters $e$, $R$ and $\nu_0$ in our models.
\begin{Thm}\label{Rto7}
Let $E(R)=\inf \operatorname{spec}(H_{\mathrm{D2e}})$ and let
$E=\inf \operatorname{spec}(H_{\mathrm{D1e}})$, where $\operatorname{spec}(X)$ indicates the spectrum of a linear operator $X$.
Let
\begin{gather*}
c_{\infty}=\max \left\{
\sqrt{2}e \big\||k|^{-1} \hat{\varrho}\big\|_{L^2}, \frac{\||k| \hat{\varrho}\|_{L^2}}{\sqrt{2}e\nu_0^2} ,
\frac{\|\hat{\varrho}\|_{L^2}}{\nu_0} \right\}.
\end{gather*}
Choose $e$ and $\nu_0$ such that $c_{\infty}<1/2$, $1\le \sqrt{2} e \nu_0$ and $\sqrt{2} e \|\hat{\varrho}\|_{L^2}<1$. Then
one has
\begin{gather*}
\lim_{R\to \infty}R^7(2E-E(R))=\frac{23}{4\pi}\left(\frac{1}{2\pi}\right)^2\left( \frac{1}{4}\alpha_{\mathrm{E, at}}\right)^2,
\end{gather*}
where $\alpha_{\mathrm{E, at}}=\nu_0^{-2}$.
\end{Thm}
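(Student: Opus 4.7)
The strategy is to exploit that $H_{\mathrm{D1e}}$ and $H_{\mathrm{D2e}}$ are quadratic in all canonical variables (electron positions, electron momenta, and the photon field), hence exactly solvable by a symplectic transformation. Concretely, I would pass to the Feynman representation (Section~\ref{FeynRep}), then apply the canonical transformation of Section~\ref{CanoTr} that converts the minimal coupling $(-\im\nabla-eA)^2$ into a coupling of the displacement field to the electron position. After the finite-volume approximation of Section~\ref{FiniteVApp} the model becomes a system of finitely many coupled harmonic oscillators, so Sections~\ref{Dai1} and~\ref{Dai2} diagonalize it and produce, for the finite-volume ground-state energies,
\begin{gather*}
2E-E(R)=\tfrac{1}{2}\Tr\bigl(\sqrt{\bK_\infty}-\sqrt{\bK(R)}\bigr),
\end{gather*}
where $\bK(R)$ is the positive operator giving the squared normal-mode frequencies of the full two-atom system, $\bK_\infty$ is its decoupled limit, and $\delta K(R):=\bK(R)-\bK_\infty$ is the off-diagonal perturbation built from the direct cross-term $\int\dm k\,\hat\varrho(k)^2\ex^{\im k\cdot r}(x_1\cdot\hat k)(x_2\cdot\hat k)$ together with the photon-mediated couplings carrying the same phase $\ex^{\im k\cdot r}$.

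Using the identity $\sqrt A=\pi^{-1}\int_0^\infty \log(1+A/t^2)\,\dm t$, the binding energy rewrites as
\begin{gather*}
2E-E(R)=-\frac{1}{2\pi}\int_0^\infty \log\det\bigl(1+\bigl(t^2+\bK_\infty\bigr)^{-1}\delta K(R)\bigr)\,\dm t.
\end{gather*}
Because $\delta K(R)$ is block-off-diagonal with respect to the atom-1/atom-2 decomposition, the linear term of the Taylor expansion $\log\det(1+X)=\Tr X-\tfrac12\Tr X^2+\cdots$ vanishes, and the leading contribution is quadratic in $\delta K(R)$. The hypotheses $c_\infty<1/2$, $1\le\sqrt{2}e\nu_0$ and $\sqrt{2}e\|\hat\varrho\|_{L^2}<1$ are tailored to ensure $\bigl\|(t^2+\bK_\infty)^{-1}\delta K(R)\bigr\|<1$ uniformly in $t\ge 0$ and in $R$, so the expansion converges and the infinite-volume limit can be taken termwise.

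It remains to evaluate the quadratic term and show that it reproduces the Casimir--Polder coefficient, while estimating the cubic and higher remainders as $o(R^{-7})$. Performing the $t$-integral in the quadratic term by residues yields an effective dipole-dipole interaction whose frequency kernel is the dynamical polarizability $\alpha(\im\omega)=(\nu_0^2+\omega^2)^{-1}$ of the harmonic atom, with static value $\alpha(0)=\alpha_{\mathrm{E, at}}=\nu_0^{-2}$; the remaining oscillatory $k$-integrals with phase $\ex^{\im k\cdot r}$, summed over the polarizations in~(\ref{PolariDef}) and rescaled by $k\to k/R$, reproduce the classical Casimir--Polder integral, which evaluates to $\frac{23}{4\pi}(2\pi)^{-2}R^{-7}\bigl(\tfrac14\alpha_{\mathrm{E, at}}\bigr)^2$. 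The principal obstacle is controlling the cubic and higher terms of the $\log\det$ expansion: bounds based on the operator norm of $\delta K(R)$ alone are not sharp enough, so one must exploit the oscillation of the phase $\ex^{\im k\cdot r}$ directly, integrating by parts in the directions of $k$ transverse to $r$ and using the smoothness and rapid decrease of $\hat\varrho$ from~(A.3). This is the systematic infrared-free error estimation advertised in the introduction, and it is precisely at this step that the exact solvability of the quadratic model pays off over the fully perturbative treatment of~\cite{Koppen}.
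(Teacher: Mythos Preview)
Your overall architecture (Feynman representation, canonical transformation, lattice approximation, diagonalization of a finite harmonic system, integral representation of the square root, perturbative expansion) matches the paper's. The gap is in the decomposition $\bK(R)=\bK_\infty+\delta K(R)$. After the canonical transformation the paper drops the direct $x_1\!\cdot\! x_2$ term (rapidly decreasing by~(A.3)), and the squared-frequency matrix is $\Omega=\Omega_0+Q_1+Q_2$ with $\Omega_0$ completely decoupled, $Q_1$ the atom-1--field coupling, and $Q_2$ the atom-2--field coupling carrying the $\cos(k\cdot r),\sin(k\cdot r)$ factors. There is \emph{no} atom-1/atom-2 off-diagonal block: the atoms talk only through the shared photon field. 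Consequently there is no natural $\bK_\infty$ with $\tfrac12\Tr(\sqrt{\bK_\infty}-\sqrt{\Omega_0})$ equal to twice the one-atom energy shift and with $\delta K(R)=\Omega-\bK_\infty$ small and block-off-diagonal between the atoms; the entire $R$-dependence sits inside $Q_2$, which is not small. Your $\log\det$ expansion in $(t^2+\bK_\infty)^{-1}\delta K(R)$ therefore lacks a well-defined expansion parameter, and the claim that the Casimir--Polder constant comes from the term \emph{quadratic} in $\delta K$ has no counterpart in the actual structure of~$\Omega$.

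What the paper does instead is expand $\sqrt{\Omega}$ about $\sqrt{\Omega_0}$ via the resolvent identity $\sqrt a=\pi^{-1}\int_{\BbbR}a(s^2+a)^{-1}\dm s$, obtaining a series in $Q_1+Q_2$. The pure $Q_1^{2j}$ and pure $Q_2^{2j}$ contributions each reproduce the one-atom expansion and cancel against $2E$; the binding energy is the sum over mixed words in $Q_1,Q_2$ (formula~(\ref{BindingEn})). The leading $R^{-7}$ comes from the \emph{fourth}-order mixed terms $\la Q_1^2Q_2^2\ra+\la Q_2^2Q_1^2\ra$ (Lemma~\ref{MainTerm}); the remaining fourth-order mixed terms are $O(R^{-9})$ (Lemma~\ref{Error}). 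For higher orders the paper does not extract further oscillatory decay by integration by parts as you suggest: in Case~2 (Lemma~\ref{QIEst}) each mixed word with $\#I\ge 6$ is bounded by $c_L^{\,\#I-4}\la Q_2Q_1Q_1Q_2\ra$, and summability then follows from $c_\infty<1/2$; Case~1 (Lemma~\ref{Case1}) is reduced to an oscillatory integral of the same type as the fourth-order ones. So the hypotheses $c_\infty<1/2$, $1\le\sqrt2 e\nu_0$, $\sqrt2 e\|\hat\varrho\|_{L^2}<1$ control convergence of the expansion in $Q_1+Q_2$ around $\Omega_0$, not smallness of a cross-coupling $\delta K(R)$.
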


\begin{rem}\quad
\begin{itemize}\itemsep=0pt
\item[1.] The constant $\alpha_{\mathrm{E, at}}$ is the dipole moment of a decoupled atom, i.e.,
\begin{gather}
\alpha_{\mathrm{E, at}}=
\frac{2}{3}\big\la \psi_{\mathrm{at}}|x\cdot (h_{\mathrm{at}}-3e \nu_0/2)^{-1}x\psi_{\mathrm{at}}\big\ra, \label{DMoment}
\end{gather}
where $h_{\mathrm{at}}=-\frac{1}{2}\Delta+\frac{e^2\nu_0^2}{2} x^2$ and $\psi_{\mathrm{at}}$ is the ground state of $h_{\mathrm{at}}$. Note that $x_j \psi_{\rm at}$ is orthogonal to $\psi_{\rm at}$: $\la \psi_{\rm at}|x_j\psi_{\rm at}\ra=0$. Thus, the vectors $(h_{\rm at}-3e\nu_0/2)^{-1}x_j\psi_{\rm at}$ in~(\ref{DMoment})
 are mathematically meaningful.
 \item[2.] The restrictions of the parameters in Theorem~\ref{Rto7} come from technical reasons: As we will see in the later sections, these are needed in order to control the perturbative expansions for~$E$ and~$E(R)$.
\end{itemize}
\end{rem}

\begin{exam}Let $\eta\in \mathscr{S}\big(\BbbR^3\big)$, the Schwartz space. Suppose that $\eta$ satisfies the following:
\begin{itemize}\itemsep=0pt
\item $\eta(0)=(2\pi)^{-3/2}$;
\item $\eta(k)$ is real-valued;
\item $\eta(k)=\eta_{\rm rad}(|k|)$.
\end{itemize}
For given $\xi>0$, we define $\varrho$ by
\begin{gather*}
\hat{\varrho}(k)=\eta(\xi k).
\end{gather*}
Then $\varrho$ satisfies (A.1)--(A.3). In addition, since
\begin{gather*}
\|\hat{\varrho}\|_{L^2}\propto \xi^{-3/2},\qquad \||k| \hat{\varrho}\|_{L^2} \propto \xi^{-5/2},\qquad \|
|k|^{-1}\hat{\varrho}
\|_{L^2} \propto \xi^{-1/2},
\end{gather*}
 the all assumptions in Theorem \ref{Rto7} are fulfilled, provided that $\xi$ is large enough. Note that a~typical choice of $\eta$ is $\eta(k)=(2\pi)^{-3/2}\ex^{-|k|^2}$.
\end{exam}

\section{Feynman Hamiltonians} \label{FeynRep}

\subsection{Preliminaries}
To prove our main result, let us introduce Feynman Hamiltonians of the nonrelativistic QED~\cite{Feynman}.
These Hamiltonians can be diagonalized readily as we will see in Sections~\ref{Dai1} and~\ref{Dai2}.

First, remark the following identification:
 \begin{gather*}
 L^2\big(\BbbR^3\big)=L^2_e\big(\BbbR^3\big) \oplus L^2_o\big(\BbbR^3\big),
 \end{gather*}
 where
 \begin{gather*}
 L_e^2\big(\BbbR^3\big)=\big\{f\in L^2\big(\BbbR^3\big)\, |\, f(-x)=f(x)\ \text{a.e.} \big\},\\
 L^2_o\big(\BbbR^3\big) =\big\{f\in L^2\big(\BbbR^3\big)\, |\, f(-x)=-f(x)\ \text{a.e.} \big\}.
 \end{gather*}
 For notational convenience, we denote by $\vepsilon_j(\cdot, \lambda)$ the multiplication operator by the function~$\vepsilon_j(\cdot, \lambda)$.

 We begin with the following lemma.
 \begin{lemm} Let
 \begin{alignat*}{3}
& \mathfrak{H}_1 =\bigcup_{j=1}^3 \overline{\R}\big(\vepsilon_j(\cdot, 1) \restriction L^2_e\big(\BbbR^3\big)\big),
 \qquad && \mathfrak{H}_2 =\bigcup_{j=1}^3 \overline{\R}\big(\vepsilon_j(\cdot, 2) \restriction L^2_e\big(\BbbR^3\big)\big),& \\
& \mathfrak{H}_3 =\bigcup_{j=1}^3 \overline{\R}\big(\vepsilon_j(\cdot, 1) \restriction L^2_o\big(\BbbR^3\big)\big),
\qquad && \mathfrak{H}_4 =\bigcup_{j=1}^3 \overline{\R}\big(\vepsilon_j(\cdot, 2) \restriction L^2_o\big(\BbbR^3\big)\big),&
 \end{alignat*}
 where $A\restriction \mathfrak{X}$ indicates the restriction of~$A$ to $\mathfrak{X}$.
 Then $\mathfrak{H}_1$, $\mathfrak{H}_2$, $\mathfrak{H}_3$ and $\mathfrak{H}_4$ are subspaces of~$L^2\big(\BbbR^3\big)$.
 \end{lemm}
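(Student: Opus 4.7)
The plan is to check, one $\mathfrak{H}_i$ at a time, that each set defined in the lemma is indeed a linear subspace of $L^2(\BbbR^3)$. Since all four $\mathfrak{H}_i$ have the same form $\bigcup_{j=1}^3 \overline{\R}(\vepsilon_j(\cdot, \lambda) \restriction M)$ with $M \in \{L^2_e(\BbbR^3), L^2_o(\BbbR^3)\}$ and $\lambda \in \{1, 2\}$, I would treat them uniformly.

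First I would observe that $L^2_e(\BbbR^3)$ and $L^2_o(\BbbR^3)$ are closed subspaces of $L^2(\BbbR^3)$: they are the $\pm 1$ eigenspaces of the continuous involution $(Pf)(x)=f(-x)$, or equivalently the ranges of the orthogonal projections $(I\pm P)/2$. Next, by the explicit formula (\ref{PolariDef}), each component $\vepsilon_j(\cdot, \lambda)$ is a bounded measurable scalar function with $|\vepsilon_j(k, \lambda)| \le |\vepsilon(k, \lambda)| = 1$ almost everywhere; hence the multiplication operator $\vepsilon_j(\cdot, \lambda)$ is a bounded linear operator on $L^2(\BbbR^3)$, and its restriction to the closed subspace $M$ remains linear. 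The range of a linear operator on a linear space is a linear subspace, and the closure of a linear subspace of a Hilbert space is a closed linear subspace. This handles each individual $\overline{\R}(\vepsilon_j(\cdot, \lambda)\restriction M)$.

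The one point that requires care is the meaning of $\bigcup$: the set-theoretic union of three proper subspaces is in general not a subspace, so for the conclusion of the lemma to hold the symbol $\bigcup_{j=1}^3$ must be read as the closed linear span of the three closed ranges. Under this reading the result is automatic, since the closed linear span of a finite family of closed subspaces of a Hilbert space is itself a closed subspace. I expect this notational clarification to be the only real obstacle---the underlying analytic content is just the boundedness of multiplication by unit polarization vectors together with the closedness of the even/odd parity subspaces of $L^2(\BbbR^3)$.
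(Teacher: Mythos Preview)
Your reinterpretation of $\bigcup$ as ``closed linear span'' sidesteps the actual content of the lemma. The paper treats $\bigcup_{j=1}^3$ as the literal set-theoretic union and proves that this union is closed under linear combinations. The mechanism is specific to the polarization vectors (\ref{PolariDef}): on the full-measure set $\mathbb{D}=\{k:k_1,k_2,k_3\neq 0\}$ the relevant components $\vepsilon_j(k,\lambda)$ are nonvanishing, and the ratio $\vepsilon_j(k,\lambda)^{-1}\vepsilon_i(k,\lambda)$ is an \emph{even} function of $k$. Hence if $G_n=\vepsilon_i(\cdot,\lambda)g_n$ with $g_n\in C_{0,e}(\mathbb{D})$, one can rewrite $G_n=\vepsilon_j(\cdot,\lambda)g_n'$ with $g_n'=\vepsilon_j(\cdot,\lambda)^{-1}\vepsilon_i(\cdot,\lambda)g_n$ still in $C_{0,e}(\mathbb{D})$. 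This shows that the individual closed ranges $\overline{\R}(\vepsilon_j(\cdot,\lambda)\restriction L^2_e)$ for different $j$ actually \emph{coincide}, so their union really is a single closed subspace.

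Your argument that each $\overline{\R}(\vepsilon_j(\cdot,\lambda)\restriction M)$ is a closed subspace is correct, but the step ``reinterpret $\bigcup$ as closed linear span'' is not a proof of the stated lemma; it is a change of statement. The point you call ``the only real obstacle'' and dismiss as notational is precisely where the mathematics lives: one must use the explicit parity structure of the $\vepsilon_j(\cdot,\lambda)$ to see why the literal union is already a linear space. Without that argument, the lemma as written does not follow from the boundedness of the multiplication operators alone.
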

 \begin{proof}
 Let $\mathbb{D}=\big\{k\in \BbbR^3\, |\, k_1\neq 0, k_2\neq 0, k_3\neq 0\big\}$.
 Trivially, $\vepsilon_j(k, \lambda)$ is well-defined on $\mathbb{D}$. In addition,
 $\vepsilon_1(k, 1)^{-1}$, $\vepsilon_2(k, 1)^{-1}$, $\vepsilon_1(k, 2)^{-1}$, $\vepsilon_2(k, 2)^{-1} $ and $\vepsilon_3(k, 2)^{-1}$ are well-defined on $\mathbb{D}$.\footnote{These facts immediately follow from
 (\ref{PolariDef}). Here, note that $\vepsilon(k, 2)$ is written as
 $\vepsilon(k, 2)=\big(k_1k_3, k_2k_3, -k^2_1-k_2^2\big)\big/|k|\sqrt{k_1^2+k_2^2}.$}
 Let~$C_0(\mathbb{D})$ be the set of continuous functions on $\mathbb{D}$ of compact support.
 Because the Lebesgue measure of $\mathbb{D}^c$, the complement of $\mathbb{D}$, is equal to zero, $C_0(\mathbb{D})$ is dense in $L^2\big(\BbbR^3\big)$. Thus, it holds that
 \begin{gather}
 \overline{\R}\big(
 \vepsilon_j(\cdot, 1) \restriction L^2_e\big(\BbbR^3\big)
 \big)
 =
 \overline{\R}\big(
 \vepsilon_j(\cdot, 1) \restriction C_{0, e}(\mathbb{D})
 \big),\qquad j=1, 2, \label{RanEquiv}
 \end{gather}
 where $C_{0, e}(\mathbb{D})=\{f\in C_0(\mathbb{D})\, |\, f(-k)=f(k)\}$.

 Let $F, G\in \mathfrak{H}_1$. Then there exist $i, j\in \{1, 2\}$ such that
 $F\in \overline{\R}\big( \vepsilon_j(\cdot, 1) \restriction L^2_e\big(\BbbR^3\big)
 \big)$ and $G\in \overline{\R}\big(
 \vepsilon_i(\cdot, 1) \restriction L^2_e\big(\BbbR^3\big)
 \big)$. By (\ref{RanEquiv}), there exist approximating sequences
 $(F_n)\subset \R\big(
 \vepsilon_j(\cdot, 1) \restriction C_{0, e}(\mathbb{D})
 \big)
 $ and $(G_n)\subset \R\big(
 \vepsilon_i(\cdot, 1) \restriction C_{0, e}(\mathbb{D})
 \big)
 $ such that $\|F-F_n\| \to 0$ and $\|G-G_n\| \to 0$ as $n\to 0$.
 Hence, for each $\alpha, \beta\in \BbbC$, it holds that
 \begin{gather}
 \alpha F_n+\beta G_n \to \alpha F+\beta G\qquad \mbox{as $n\to \infty$}. \label{LConv}
 \end{gather}
 Note that we can write $F_n=\vepsilon_j(\cdot, 1) f_n$ and $G_n=\vepsilon_i(\cdot, 1) g_n$
 with $f_n, g_n\in C_{0, e}(\mathbb{D})$. Thus, we have
 $G_n=\vepsilon_j(\cdot, 1) g_n'$, where $g_n'=\vepsilon_j(\cdot, 1)^{-1} \vepsilon_i(\cdot, 1) g_n$.
 Because $\vepsilon_j(k, 1)^{-1} \vepsilon_i(k, 1)$ is an even function on $\mathbb{D}$, we see that
 $g_n'\in C_{0, e}(\mathbb{D})$. Accordingly,
 $ \alpha F_n+\beta G_n=\vepsilon_j(\cdot, 1) (\alpha f_n+\beta g_n')
 \in \R\big( \vepsilon_j(\cdot, 1) \restriction C_{0, e}(\mathbb{D}) \big)
 $. Combining this, (\ref{RanEquiv}) and (\ref{LConv}), we conclude that
 $\alpha F+\beta G\in \mathfrak{H}_1$, in particular,
 $\mathfrak{H}_1$~is a subspace of~$L^2\big(\BbbR^3\big)$. By similar arguments, we can prove that $
 \mathfrak{H}_2, \mathfrak{H}_3$ and $\mathfrak{H}_4$ are subspaces of~$L^2\big(\BbbR^3\big)$. \end{proof}

\begin{lemm} We have the following identifications:
 \begin{gather}
 L^2\big(\BbbR^3\times \{1, 2\}\big) =L^2\big(\BbbR^3\big)\oplus L^2\big(\BbbR^3\big)
 =\mathfrak{H}_1\oplus \mathfrak{H}_2\oplus\mathfrak{H}_3\oplus \mathfrak{H}_4. \label{4DirectSum}
 \end{gather}
 \end{lemm}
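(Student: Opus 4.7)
The plan is to establish the two equalities in (\ref{4DirectSum}) separately. The first, $L^2(\BbbR^3\times\{1,2\})=L^2(\BbbR^3)\oplus L^2(\BbbR^3)$, is the canonical unitary isomorphism $f\mapsto (f(\cdot,1),f(\cdot,2))$ induced by the counting measure on $\{1,2\}$. The second equality is the substantive claim, and my plan is to identify each of the four subspaces $\mathfrak{H}_j$ explicitly with either $L^2_e(\BbbR^3)$ or $L^2_o(\BbbR^3)$; once this is done, (\ref{4DirectSum}) reduces to applying the parity decomposition $L^2(\BbbR^3)=L^2_e(\BbbR^3)\oplus L^2_o(\BbbR^3)$ to each of the two $L^2(\BbbR^3)$ summands on the right of the first equality.

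First I would record the parity behaviour of the polarization vectors. A direct inspection of (\ref{PolariDef}) shows $\vepsilon_j(-k,1)=-\vepsilon_j(k,1)$ for $j=1,2,3$ and $\vepsilon_j(-k,2)=\vepsilon_j(k,2)$ for $j=1,2,3$; hence multiplication by $\vepsilon_j(\cdot,1)$ reverses parity while multiplication by $\vepsilon_j(\cdot,2)$ preserves it. Consequently, $\mathfrak{H}_1,\mathfrak{H}_4\subseteq L^2_o(\BbbR^3)$ and $\mathfrak{H}_2,\mathfrak{H}_3\subseteq L^2_e(\BbbR^3)$.

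The heart of the argument is to promote these inclusions to equalities via an orthogonal complement computation. For $\mathfrak{H}_1$, take $h\in L^2_o(\BbbR^3)$ with $h\perp\mathfrak{H}_1$. Because $\vepsilon_1(\cdot,1)$ is real-valued, the condition $\la h,\vepsilon_1(\cdot,1)f\ra=0$ for every $f\in L^2_e(\BbbR^3)$ is equivalent to $\vepsilon_1(\cdot,1)h\perp L^2_e(\BbbR^3)$; but $\vepsilon_1(\cdot,1)h$ is the product of two odd functions, hence itself even, so it lies in $L^2_e(\BbbR^3)$ and must therefore vanish almost everywhere. Since $\vepsilon_1(k,1)=k_2/\sqrt{k_1^2+k_2^2}$ is nonzero outside the measure-zero set $\{k_2=0\}$, this forces $h=0$; thus $\mathfrak{H}_1=L^2_o(\BbbR^3)$. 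Interchanging the roles of $L^2_e$ and $L^2_o$ in the same argument yields $\mathfrak{H}_3=L^2_e(\BbbR^3)$. For $\mathfrak{H}_2$ and $\mathfrak{H}_4$ I would repeat the argument with $\vepsilon_3(\cdot,2)=-\sqrt{k_1^2+k_2^2}/|k|$ in place of $\vepsilon_1(\cdot,1)$; its zero set $\{k_1=k_2=0\}$ also has Lebesgue measure zero, and the product of two even (resp.\ an even and an odd) functions argument gives $\mathfrak{H}_2=L^2_e(\BbbR^3)$ and $\mathfrak{H}_4=L^2_o(\BbbR^3)$.

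With the four identifications $\mathfrak{H}_1=\mathfrak{H}_4=L^2_o(\BbbR^3)$ and $\mathfrak{H}_2=\mathfrak{H}_3=L^2_e(\BbbR^3)$ in hand, (\ref{4DirectSum}) is immediate: the first $L^2(\BbbR^3)$ summand splits internally as $L^2_e\oplus L^2_o=\mathfrak{H}_3\oplus\mathfrak{H}_1$ and the second as $L^2_e\oplus L^2_o=\mathfrak{H}_2\oplus\mathfrak{H}_4$, so the external direct sum of the four pieces reconstructs $L^2(\BbbR^3\times\{1,2\})$. The principal (though ultimately mild) obstacle is the density step: attempting a direct approximation by forming $f=g/\vepsilon_j(\cdot,\lambda)$ fails because the quotient need not be square-integrable near the zero set of $\vepsilon_j(\cdot,\lambda)$, so one really must go through the duality-based orthogonal complement argument, whose only essential input is that the relevant zero sets of the polarization components have Lebesgue measure zero.
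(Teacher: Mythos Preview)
Your proof is correct. The parity observations for $\vepsilon(\cdot,1)$ and $\vepsilon(\cdot,2)$ are right, and the orthogonal-complement argument cleanly promotes the inclusions $\mathfrak{H}_j\subseteq L^2_{e/o}$ to equalities; the only input needed is that $\vepsilon_1(\cdot,1)$ and $\vepsilon_3(\cdot,2)$ are bounded and vanish only on null sets, both of which you verify.

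Your route differs from the paper's. The paper does not identify the individual $\mathfrak{H}_j$ with $L^2_e$ or $L^2_o$; instead it argues that $\overline{\R}(\vepsilon_j(\cdot,1))=L^2(\BbbR^3)$ by observing that $\R(\vepsilon_j(\cdot,1))\supseteq\D(\vepsilon_j(\cdot,1)^{-1})\supseteq C_0(\mathbb{D})$, then splits this range into its even/odd pieces, checks $\mathfrak{H}_1\perp\mathfrak{H}_3$ via the evenness of $\vepsilon_i(\cdot,1)\vepsilon_{i'}(\cdot,1)$, and concludes $L^2(\BbbR^3)=\mathfrak{H}_1\oplus\mathfrak{H}_3$ (and similarly for $\mathfrak{H}_2\oplus\mathfrak{H}_4$). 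Your approach is a little sharper in that it yields the explicit identifications $\mathfrak{H}_1=\mathfrak{H}_4=L^2_o$ and $\mathfrak{H}_2=\mathfrak{H}_3=L^2_e$, and it sidesteps the need to discuss $\D(\vepsilon_j(\cdot,\lambda)^{-1})$ or $C_0(\mathbb{D})$. The paper's approach, on the other hand, makes the orthogonality $\mathfrak{H}_1\perp\mathfrak{H}_3$ visible without first pinning down what each space is. Both proofs ultimately hinge on the same fact: suitable components of the polarization vectors are nonzero almost everywhere.
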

 \begin{proof} The first identification in (\ref{4DirectSum}) is trivial.
 In what follows, we will concentrate on the proof of the second identification.

 Note that the multiplication operator $\vepsilon_j(\cdot, 1)^{-1}\, (j=1, 2)$ is self-adjoint and
 $\D\big(\vepsilon_j(\cdot, 1)^{-1}\big)$ is dense in $L^2\big(\BbbR^3\big)$.
 Because $\R(\vepsilon_j(\cdot, 1))\supseteq \D\big(
 \vepsilon_j(\cdot, 1)^{-1}\big) \supseteq C_0(\mathbb{D})$ for $j=1, 2$, we obtain $\overline{\R}(\vepsilon_j(\cdot, 1))=L^2\big(\BbbR^3\big)$.
 For each $f\in L^2\big(\BbbR^3\big)$, we set $f_e(k)=\frac{1}{2}(f(k)+f(-k))$ and $f_o(k)=\frac{1}{2}(f(k)-f(-k))$.
 Because $\vepsilon_j(k, 1)^{2}$ is an even function, we have
 $\la \vepsilon_j(\cdot, 1)f_e|\vepsilon_j(\cdot, 1)f_o\ra=0$ for all $f\in L^2\big(\BbbR^3\big)$ and $j=1, 2$, which implies that $\overline{\R}\big(\vepsilon_j(\cdot, 1)\restriction L^2_e\big(\BbbR^3\big)\big)\perp \overline{\R}\big(\vepsilon_j(\cdot, 1) \restriction L^2_o\big(\BbbR^3\big)\big)$. Since
 \begin{gather*}
 \underbrace{\vepsilon_j(\cdot, 1)f}_{\in \R(\vepsilon_j(\cdot, 1))}=
 \underbrace{\vepsilon_j(\cdot, 1)f_e}_{\in \R(\vepsilon_j(\cdot, 1)\restriction L_e^2(\BbbR^3) )}+
 \underbrace{\vepsilon_j(\cdot, 1)f_o}_{\in \R(\vepsilon_j(\cdot, 1)\restriction L_o^2(\BbbR^3) )},
 \qquad f\in L^2\big(\BbbR^3\big),
 \end{gather*}
 we conclude that
 \begin{gather*}
 L^2\big(\BbbR^3\big)=\overline{\R}\big(\vepsilon_j(\cdot, 1)\restriction L^2_e\big(\BbbR^3\big)\big)\oplus \overline{\R}\big(\vepsilon_j(\cdot, 1) \restriction L^2_o\big(\BbbR^3\big)\big)
 \end{gather*} for $j=1, 2$.

 For $i, i\rq{}\in \{1, 2\}$, we set $ \mu_{ii\rq{}}^{(1)}(k)=\vepsilon_i(k, 1) \vepsilon_{i\rq{}}(k, 1)$.
 Because $\mu_{ii'}^{(1)}(k)$ is an even function,
 we see that, for each $f\in L_e^2\big(\BbbR^3\big)$ and $g\in L_o^2\big(\BbbR^3\big)$,
 \begin{gather*}
 \la \vepsilon_i(\cdot, 1) f|\vepsilon_{i'}(\cdot, 1) g\ra=\big\la f|\mu_{ii'}^{(1)} g\big\ra=0.
 \end{gather*}
 Therefore, $\mathfrak{H}_1\perp \mathfrak{H}_3$ holds.
 Because $\mathfrak{H}_1\oplus \mathfrak{H}_3 \supseteq \overline{\R}\big(\vepsilon_j(\cdot, 1)\restriction L^2_e\big(\BbbR^3\big)\big)\oplus \overline{\R}\big(\vepsilon_j(\cdot, 1) \restriction L^2_o\big(\BbbR^3\big)\big)$, we finally arrive at $L^2\big(\BbbR^3\big)=\mathfrak{H}_1\oplus \mathfrak{H}_3$.
 By arguments similar to the above, we get that $
 L^2\big(\BbbR^3\big)=\mathfrak{H}_2\oplus \mathfrak{H}_4
 $. \end{proof}

We will construct a useful identification between $\Fock\big(L^2\big(\BbbR^3\big)\oplus L^2\big(\BbbR^3\big)\big)$ and $\bigotimes\limits_{\lambda=1}^4 \Fock(\mathfrak{H}_{\lambda})$ in Section~\ref{Sec34}.
 For this purpose, we recall some basic definitions in Sections \ref{Sec32} and \ref{Sec33}.

 \subsection[Second quantized operators in $\Fock\big(L^2\big(\BbbR^3\big) \oplus L^2\big(\BbbR^3\big)\big)$]{Second quantized operators in $\boldsymbol{\Fock\big(L^2\big(\BbbR^3\big) \oplus L^2\big(\BbbR^3\big)\big)}$}\label{Sec32}

 Let $a(f_1\oplus f_2)$ be the annihilation operator acting in $\Fock\big(L^2(\BbbR^3\times \{1, 2\}))=\Fock(L^2\big(\BbbR^3\big) \oplus L^2\big(\BbbR^3\big)\big)$. As usual, we express this operator as
 \begin{gather*}
 a(f_1\oplus f_2)=\sum_{\lambda=1, 2} \int_{\BbbR^3} \dm k\, f_{\lambda}^*(k) a(k, \lambda).
 \end{gather*}
 The Fock vacuum in $\Fock\big(L^2\big(\BbbR^3\big) \oplus L^2\big(\BbbR^3\big)\big)$ is denoted by $\Psi_0$. Let~$F$ be a real-valued function on~$\BbbR^3$ which is finite almost everywhere.
 The multiplication operator by $F$ is also written as~$F$. The second quantization of $F\oplus F$ is then given by
 \begin{gather*}
 \dG(F\oplus F)=0\oplus \Bigg[\bigoplus_{n=1}^{\infty} \sum_{j=1}^n 1\otimes \cdots \otimes \underbrace{(F\oplus F)}_{j^{\mathrm{th}}}\otimes \cdots \otimes 1\Bigg].
 \end{gather*}
 Needless to say, $\dG(F\oplus F)$ acts in $\Fock\big(L^2\big(\BbbR^3\big) \oplus L^2\big(\BbbR^3\big)\big)$. It is known that $\dG(F\oplus F)$ is essentially self-adjoint
 on a dense subspace
 \begin{gather*}
 \big\{\Psi=\{\Psi_n\}_{n=0}^{\infty}\, |\, \Psi_n\in (\D(F)\oplus \D(F))^{\odot n},
 \mbox{$\exists\, N\in \BbbN$ s.t.\ $\Psi_m=0$ $ \forall\, m>N$}
 \big\},
 \end{gather*}
 where $\odot $ indicates the algebraic tensor product. We will denote the closure of $\dG(F\oplus F)$ by the same symbol. Symbolically, we express $\dG(F\oplus F)$ as
 \begin{gather*}
 \dG(F\oplus F)=\sum_{\lambda=1, 2} \int_{\BbbR^3} \dm k\, F(k)a(k, \lambda)^*a(k, \lambda).
 \end{gather*}

\subsection[Second quantized operators in $\bigotimes\limits_{\lambda=1}^4\Fock(\mathfrak{H}_{\lambda})$]{Second quantized operators in $\boldsymbol{\bigotimes\limits_{\lambda=1}^4\Fock(\mathfrak{H}_{\lambda})}$}
 \label{Sec33}

 Let $a_{\lambda}(f_{\lambda})$ be the annihilation operator on $\Fock(\mathfrak{H}_{\lambda})$.
 We employ the following identifications:
 $ a_1(f_1)=a_1(f_1)\otimes \one \otimes \one \otimes \one$, $a_2(f_2)=\one \otimes a_2(f_2) \otimes \one \otimes \one $ and so on. Thus, $a_{\lambda}(f_{\lambda})$ can be regarded as a linear operator
 acting in the Hilbert space $\bigotimes\limits_{\lambda=1}^4\Fock(\mathfrak{H}_{\lambda})$.
 Let~$F$ be a real-valued function on~$\BbbR^3$. Suppose that $F$ is even: $F(-k)=F(k)$ a.e..
 $\dG_{\lambda}(F)$ denotes the second quantization of $F$ which acts in $\Fock(\mathfrak{H}_{\lambda})$.
 As before, we can also regard $\dG_{\lambda}(F)$ as a linear operator acting in $\bigotimes\limits_{\lambda=1}^4\Fock(\mathfrak{H}_{\lambda})$.
 The Fock vacuum in $\Fock(\mathfrak{H}_{\lambda})$ is denoted by $\Psi_{\lambda}$.
 We will freely use the following notations:
 \begin{gather*}
 a_{\lambda}(f) = \int_{\BbbR^3} \dm k\, f(k)^* a(k, \lambda),\qquad f\in \mathfrak{H}_{\lambda},\\
 \dG_{\lambda}(F) =\int_{\BbbR^3} \dm k\, F(k)a^*(k, \lambda)a(k, \lambda),\qquad \lambda=1, 2, 3, 4.
 \end{gather*}

 \subsection[Identifications between $\Fock\big(L^2\big(\BbbR^3\big)\oplus L^2\big(\BbbR^3\big)\big)$
 and $\bigotimes\limits_{\lambda=1}^4 \Fock(\mathfrak{H}_{\lambda})$]{Identifications between $\boldsymbol{\Fock\big(L^2\big(\BbbR^3\big)\oplus L^2\big(\BbbR^3\big)\big)}$
 and $\boldsymbol{\bigotimes\limits_{\lambda=1}^4 \Fock(\mathfrak{H}_{\lambda})}$}\label{Sec34}

 For each ${\bs f}=(f_1, f_2)\in L^2\big(\BbbR^3\big) \oplus L^2\big(\BbbR^3\big)$, we set
 \begin{gather}
 b_{ij}({\bs f})=a\big( \vepsilon_i(\cdot, 1) f_1 \oplus \vepsilon_j(\cdot, 2)f_2 \big),\nonumber\\
 c_{ij}({\bs f})=a_1(\vepsilon_i(\cdot, 1)f_{1, e})+a_2(\vepsilon_j(\cdot, 2)f_{2, e})\label{EquiA}
 -\im a_3(\vepsilon_i(\cdot, 1)f_{1, o})-\im a_4(\vepsilon_j(\cdot, 2) f_{2, o}),
 \end{gather}
 where $f_e(k)=(f(k)+f(-k))/2$ and $f_o(k)=(f(k)-f(-k))/2$.
 Let $\Psi_0$ be the Fock vacuum in $\Fock\big(L^2\big(\BbbR^3\big)\oplus L^2\big(\BbbR^3\big)\big)\colon \Psi_0=1\oplus 0 \oplus 0 \oplus \cdots$.
 \begin{lemm}\label{FockIden}
 We define a linear operator $V\colon \Fock(L^2\big(\BbbR^3\big) \oplus L^2\big(\BbbR^3\big)) \to \bigotimes\limits_{\lambda=1}^4 \Fock( \mathfrak{H}_{\lambda})$ by
 \begin{gather*}
 V\Psi_0=\bigotimes_{\lambda=1}^4\Psi_{\lambda},\\
 V\left[\prod_{\ell=1}^N b_{i_{\ell} j_{\ell}}({\bs f}_{\ell})^*\right] \Psi_0
 =\left[\prod_{\ell=1}^N c_{i_{\ell} j_{\ell}}({\bs f}_{\ell})^*\right] \bigotimes_{\lambda=1}^4\Psi_{\lambda}
 \end{gather*}
 for each ${\bs f}_1, \dots, {\bs f}_N\in L^2\big(\BbbR^3\big) \oplus L^2\big(\BbbR^3\big)$
 and $N\in \BbbN$. Then $V$ can be extended to the unitary operator. In what follows, we denote the extension by the same symbol. Then we have
 \begin{gather}
 Vb_{ij}({\bs f})V^{-1}=\overline{c_{ij}({\bs f})} \label{AnCrEq}
 \end{gather}
 for each ${\bs f} \in L^2\big(\BbbR^3\big) \oplus L^2\big(\BbbR^3\big)$ and $i, j\in \{1, 2, 3\}$,
 where the bar indicates the closure of the operator.
 \end{lemm}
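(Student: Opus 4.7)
The plan is to factorize $V$ as $V = \Gamma(W)\circ U_0$, where $U_0$ is the canonical unitary given by the exponential-of-direct-sum law for Fock spaces and $\Gamma(W)$ is the second quantization of a block-diagonal one-particle unitary. This makes both well-definedness and unitarity structural, after which the intertwining relation falls out by adjunction. Concretely, the preceding lemma provides the orthogonal decomposition $L^2(\BbbR^3)\oplus L^2(\BbbR^3) = \mathfrak{H}_1\oplus\mathfrak{H}_2\oplus\mathfrak{H}_3\oplus\mathfrak{H}_4$, so there is a canonical unitary $U_0\colon \Fock(L^2(\BbbR^3)\oplus L^2(\BbbR^3))\to \bigotimes_{\lambda=1}^4 \Fock(\mathfrak{H}_\lambda)$ which fixes vacua, $U_0\Psi_0 = \bigotimes_\lambda \Psi_\lambda$, and satisfies $U_0\, a^*(\bigoplus_\lambda g_\lambda)\, U_0^{-1} = \sum_\lambda a_\lambda^*(g_\lambda)$. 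Let $W$ be the unitary on $\bigoplus_\lambda \mathfrak{H}_\lambda$ acting as the identity on $\mathfrak{H}_1\oplus\mathfrak{H}_2$ and as multiplication by $\im$ on $\mathfrak{H}_3\oplus\mathfrak{H}_4$, and set $V := \Gamma(W)\circ U_0$; as a composition of unitaries, $V$ is unitary.

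Next I would verify that this $V$ satisfies the defining formulae of the lemma. Writing $f_\lambda = f_{\lambda,e}+f_{\lambda,o}$, the image of $\vepsilon_i(\cdot,1)f_1\oplus \vepsilon_j(\cdot,2)f_2$ under the identification $L^2(\BbbR^3)\oplus L^2(\BbbR^3)\cong \bigoplus_\lambda \mathfrak{H}_\lambda$ has components $\vepsilon_i(\cdot,1)f_{1,e}$, $\vepsilon_j(\cdot,2)f_{2,e}$, $\vepsilon_i(\cdot,1)f_{1,o}$, $\vepsilon_j(\cdot,2)f_{2,o}$ in $\mathfrak{H}_1,\mathfrak{H}_2,\mathfrak{H}_3,\mathfrak{H}_4$, respectively. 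Hence $U_0\, b_{ij}({\bs f})^*\, U_0^{-1}$ equals the sum of the corresponding four creation operators $a_\lambda^*$, and subsequent conjugation by $\Gamma(W)$ multiplies the $\lambda=3,4$ summands by $\im$, yielding exactly $c_{ij}({\bs f})^*$. Applied inductively to a product $\prod_\ell b_{i_\ell j_\ell}({\bs f}_\ell)^*\Psi_0$, this reproduces the formula in the statement, so the definition of $V$ in the lemma is consistent and coincides with $\Gamma(W)\circ U_0$ on the full algebraic finite-particle subspace.

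For the intertwining relation~(\ref{AnCrEq}), I would take adjoints: since $V$ is unitary and $V\, b_{ij}({\bs f})^*\, V^{-1} = c_{ij}({\bs f})^*$ on a core of finite-particle vectors, it follows that $V\, b_{ij}({\bs f})\, V^{-1} = \bigl(V\, b_{ij}({\bs f})^*\, V^{-1}\bigr)^* = \bigl(c_{ij}({\bs f})^*\bigr)^* = c_{ij}({\bs f})^{**} = \overline{c_{ij}({\bs f})}$, the final equality using that $c_{ij}({\bs f})$ is closable as a finite sum of densely defined annihilation operators with densely defined adjoint. The main subtlety in the argument is a domain-bookkeeping one: one must confirm that the algebraic finite-particle subspaces indeed serve as cores for all the operators in question and that symbolic equalities on those subspaces lift to identities of closed operators. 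These are standard Fock-space facts and require no new input beyond careful attention.
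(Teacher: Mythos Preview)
Your argument is correct, and it takes a genuinely different route from the paper's. The paper proceeds by a direct CCR comparison: it computes the commutators $[b_{ij}({\bs f}), b_{i'j'}({\bs f}')^*]$ and $[c_{ij}({\bs f}), c_{i'j'}({\bs f}')^*]$ and shows both equal the same bilinear form $D({\bs f}|{\bs f}')_{ij;i'j'}$ (using that $\vepsilon_i(\cdot,\lambda)\vepsilon_{i'}(\cdot,\lambda)$ is even), then infers that the inner products of the two families of ``generated'' vectors coincide, and finally invokes the density statement~(\ref{4DirectSum}) to extend $V$ to a unitary. Your approach instead \emph{constructs} $V$ up front as the composition $\Gamma(W)\circ U_0$ of two known unitaries, so well-definedness and unitarity are structural rather than verified; the matching of the defining formulae then reduces to the one-particle observation that $\vepsilon_i(\cdot,1)f_1\oplus\vepsilon_j(\cdot,2)f_2$ splits into its four $\mathfrak{H}_\lambda$-components exactly as you wrote. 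This is a cleaner packaging: the paper's argument needs the CCR calculation to ensure $V$ is well-defined on the (over-determined) spanning set, whereas your factorization avoids that issue entirely. On the other hand, the paper's CCR computation makes the ``same commutation relations'' point explicit, which is conceptually the reason the identification works and which is useful in its own right for later manipulations. Both routes rely on the same density input~(\ref{4DirectSum}) and lead to~(\ref{AnCrEq}) by the same adjoint/closure reasoning.
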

 \begin{proof} For $i, i\rq{}\in \{1, 2, 3\}$, we set
 \begin{gather*}
 \mu_{ii\rq{}}^{(1)}(k)=\vepsilon_i(k, 1) \vepsilon_{i\rq{}}(k, 1),\qquad
 \mu_{ii\rq{}}^{(2)}(k)=\vepsilon_i(k, 2) \vepsilon_{i\rq{}}(k, 2).
 \end{gather*}
 For ${\bs f}, {\bs f}\rq{}\in L^2\big(\BbbR^3\big)\oplus L^2\big(\BbbR^3\big)$ and
 $i, j, i\rq{}, j\rq{}\in \{1, 2, 3\}$, define
 \begin{gather*}
 D({\bs f}|{\bs f}\rq{})_{ij; i\rq{} j\rq{}}=
 \big\la f_1|\mu_{ii\rq{}}^{(1)}f_1\rq{} \big\ra+ \big\la f_2|\mu_{ii\rq{}}^{(2)}f_2\rq{} \big\ra.
 \end{gather*}

 First, we prove that $\big\{b_{ij}({\bs f})| {\bs f}\in L^2\big(\BbbR^3\big)\oplus L^2\big(\BbbR^3\big),\ i, j\in \{1, 2, 3\} \big\}$ and $\big\{c_{ij}({\bs f})| {\bs f}\in L^2\big(\BbbR^3\big)\oplus L^2\big(\BbbR^3\big),\ i, j\in \{1, 2, 3\}\big\}$ satisfy the similar commutations relations, that is,
 \begin{gather*}
 [b_{ij}(\bs f), b_{i\rq{}j\rq{}}({\bs f}\rq{})^*]=D({\bs f}|{\bs f}\rq{})_{ij; i\rq{} j\rq{}},\qquad
 [b_{ij}(\bs f), b_{i\rq{}j\rq{}}({\bs f}\rq{})]=0 
 \end{gather*}
 and
 \begin{gather*}
 [c_{ij}(\bs f), c_{i\rq{}j\rq{}}({\bs f}\rq{})^*]=D({\bs f}|{\bs f}\rq{})_{ij; i\rq{} j\rq{}},\qquad
 [c_{ij}(\bs f), c_{i\rq{}j\rq{}}({\bs f}\rq{})]=0.
 \end{gather*}
 To see this, note that $\mu_{ii\rq{}}^{(1)}(k)$ and $\mu_{ii\rq{}}^{(2)}(k)$
 are even functions. Thus,
 \begin{gather*}
 \big\la f_e|\mu_{ii\rq{}}^{(1)} g_o\big\ra=0=\big\la f_e|\mu_{ii\rq{}}^{(2)} g_o\big\ra,\qquad f, g\in L^2\big(\BbbR^3\big).
 \end{gather*}
 Accordingly, we have
 \begin{align*}
 [c_{ij}(\bs f), c_{i\rq{}j\rq{}}({\bs f}\rq{})^*]&=\big\la f_{1, e}|\mu_{ii\rq{}}^{(1)} f_{1, e}\rq{}\big\ra
 +\big\la f_{2, e}|\mu_{ii\rq{}}^{(2)} f_{2, e}\rq{}\big\ra
 +\big\la f_{1, o}|\mu_{ii\rq{}}^{(1)} f_{1, o}\rq{}\big\ra
 +\big\la f_{2, 0}|\mu_{ii\rq{}}^{(2)} f_{2, o}\rq{}\big\ra\nonumber\\
 &=D({\bs f}|{\bs f}\rq{})_{ij;i\rq{}j\rq{}}.
 \end{align*}
 To check other commutation relations are easy.

 Using the above fact, we readily confirm that
 \begin{gather*}
 \Bigg\la \Bigg[\prod_{\ell=1}^N b_{i_{\ell} j_{\ell}}({\bs f}_{\ell})^*\Bigg] \Psi_0\Bigg|
 \Bigg[\prod_{\ell=1}^{N\rq{} } b_{i_{\ell}\rq{} j_{\ell}\rq{}}({\bs f}_{\ell}\rq{})^*\Bigg] \Psi_0\Bigg\ra\\
 \qquad{} = \Bigg\la \Bigg[\prod_{\ell=1}^N c_{i_{\ell} j_{\ell}}({\bs f}_{\ell})^*\Bigg] \bigotimes_{\lambda=1}^4\Psi_{\lambda}\Bigg|
 \Bigg[\prod_{\ell=1}^{N\rq{} } c_{i_{\ell}\rq{} j_{\ell}\rq{}}({\bs f}_{\ell}\rq{})^*\Bigg] \bigotimes_{\lambda=1}^4\Psi_{\lambda}\Bigg\ra
 \end{gather*}
 for every ${\bs f}_1, \dots {\bs f}_N, {\bs f}_1\rq{}, \dots, {\bs f}_{N\rq{}}\rq{}\in L^2\big(\BbbR^3\big)\oplus L^2\big(\BbbR^3\big)$ and $N, N\rq{}\in \BbbN$.
 From (\ref{4DirectSum}), it follows that
 the subspace spanned by the set of vectors $
 \Big\{
 \Big[\prod\limits_{\ell=1}^N b_{i_{\ell} j_{\ell}}({\bs f}_{\ell})^*\Big] \Psi_0
 \Big\}
 $ is dense in $\Fock\big(L^2\big(\BbbR^3\big) \oplus L^2\big(\BbbR^3\big)\big)$
 and the subspace spanned by
 the set of vectors $ \Big\{\Big[\prod\limits_{\ell=1}^N c_{i_{\ell} j_{\ell}}({\bs f}_{\ell})^*\Big] \bigotimes\limits_{\lambda=1}^4\Psi_{\lambda}\Big\}
 $ is dense in $\bigotimes\limits_{\lambda=1}^4 \mathfrak{H}_{\lambda}$.
 Hence, $V$ can be extended to the unitary operator. To check~(\ref{AnCrEq}) is easy.
 \end{proof}

\begin{lemm}\label{aoaoa} Let $F$ be a real-valued even function on $\BbbR^3$. Assume that $F$
 is continuous. Then we obtain
 \begin{gather*}
 V\dG(F \oplus F)V^{-1} =\sum_{\lambda=1}^4 \dG_{\lambda}(F).
 \end{gather*}
 \end{lemm}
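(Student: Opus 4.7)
The plan is to reduce the identity to an intertwining relation that $V$ already satisfies on creation operators (namely (\ref{AnCrEq}) and its adjoint), and then to propagate this to the second-quantized multiplication operator by the standard ``creation-operator bootstrap'': both $\dG(F\oplus F)$ and $\sum_{\lambda=1}^{4}\dG_{\lambda}(F)$ annihilate their respective Fock vacua and have the same commutators with the relevant creation operators, so they must agree on the algebraic span of vectors of the form $\prod_{\ell} b_{i_{\ell} j_{\ell}}({\bs f}_{\ell})^{*}\Psi_{0}$, which is a core (by the density statement established inside the proof of Lemma~\ref{FockIden}).

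First I would record the two commutator identities. Using the standard CCR-level identity $[\dG(G),a(g)^{*}]=a(Gg)^{*}$ applied to $G=F\oplus F$ and $g=\vepsilon_{i}(\cdot,1)f_{1}\oplus \vepsilon_{j}(\cdot,2)f_{2}$, one gets
\begin{gather*}
[\dG(F\oplus F), b_{ij}({\bs f})^{*}]=b_{ij}(F{\bs f})^{*},\qquad F{\bs f}:=(Ff_{1},Ff_{2}).
\end{gather*}
On the other side, because $F$ is even, $(Ff)_{e}=Ff_{e}$ and $(Ff)_{o}=Ff_{o}$, and each subspace $\mathfrak{H}_{\lambda}$ is invariant under multiplication by $F$ (the definitions of $\mathfrak{H}_{\lambda}$ are preserved by even multipliers). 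Hence $[\dG_{\lambda}(F), a_{\lambda}(h)^{*}]=a_{\lambda}(Fh)^{*}$ for $h\in \mathfrak{H}_{\lambda}$, and expanding (\ref{EquiA}) termwise yields
\begin{gather*}
\Bigl[\sum_{\lambda=1}^{4}\dG_{\lambda}(F),\; c_{ij}({\bs f})^{*}\Bigr]= c_{ij}(F{\bs f})^{*}.
\end{gather*}

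Next, I would induct on the particle number $N$. The base case $N=0$ is immediate: both $\dG(F\oplus F)\Psi_{0}=0$ and $\sum_{\lambda}\dG_{\lambda}(F)\bigotimes_{\lambda}\Psi_{\lambda}=0$. For the induction step, write
\begin{gather*}
\dG(F\oplus F)\prod_{\ell=1}^{N}b_{i_{\ell}j_{\ell}}({\bs f}_{\ell})^{*}\Psi_{0}= \sum_{m=1}^{N}\Bigl(\prod_{\ell<m}b_{i_{\ell}j_{\ell}}({\bs f}_{\ell})^{*}\Bigr)b_{i_{m}j_{m}}(F{\bs f}_{m})^{*}\Bigl(\prod_{\ell>m}b_{i_{\ell}j_{\ell}}({\bs f}_{\ell})^{*}\Bigr)\Psi_{0},
\end{gather*}
by iterated use of the first commutator identity. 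Apply $V$; by Lemma~\ref{FockIden} and the adjoint of (\ref{AnCrEq}), each $Vb_{ij}({\bs f})^{*}V^{-1}=c_{ij}({\bs f})^{*}$ on the algebraic span, so the right-hand side transforms into
\begin{gather*}
\sum_{m=1}^{N}\Bigl(\prod_{\ell<m}c_{i_{\ell}j_{\ell}}({\bs f}_{\ell})^{*}\Bigr)c_{i_{m}j_{m}}(F{\bs f}_{m})^{*}\Bigl(\prod_{\ell>m}c_{i_{\ell}j_{\ell}}({\bs f}_{\ell})^{*}\Bigr)\bigotimes_{\lambda=1}^{4}\Psi_{\lambda},
\end{gather*}
which, by the second commutator identity applied in reverse, equals $\sum_{\lambda}\dG_{\lambda}(F)\prod_{\ell}c_{i_{\ell}j_{\ell}}({\bs f}_{\ell})^{*}\bigotimes_{\lambda}\Psi_{\lambda}=\sum_{\lambda}\dG_{\lambda}(F)\,V\prod_{\ell}b_{i_{\ell}j_{\ell}}({\bs f}_{\ell})^{*}\Psi_{0}$. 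This yields the equality on a dense invariant subspace.

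The only subtlety — the main obstacle if one wants to be precise — concerns self-adjoint closure when $F$ is unbounded (which we will need later for $F(k)=|k|$). One should restrict attention to ${\bs f}_{\ell}$ in $\D(F)\oplus \D(F)$; the algebraic span of the $N$-particle vectors is then a core for $\dG(F\oplus F)$ by the standard essential-self-adjointness statement recalled in Section~\ref{Sec32}, and its image under $V$ is a core for $\sum_{\lambda}\dG_{\lambda}(F)$ because $V$ is unitary. Equality on a common core then upgrades to equality of the closed operators, completing the proof.
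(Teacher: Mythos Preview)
Your proof is correct and follows essentially the same route as the paper: both arguments establish the Leibniz-rule action of the second quantizations on products of creation operators applied to the vacuum (the paper states the formulas $\dG(F\oplus F){\bs B}_{{\bs i}{\bs j}}=\sum_{\alpha}{\bs B}_{{\bs i}{\bs j}}(\dots,F\oplus F{\bs f}_{\alpha},\dots)$ and its $\bs C$-counterpart directly, while you derive them via the commutator $[\dG,b^{*}]$ and induction), invoke Lemma~\ref{FockIden} to intertwine, and then pass from a common core to the self-adjoint closures. The only cosmetic difference is that the paper takes test functions in $C_{0}(\BbbR^{3})$ (exploiting continuity of $F$) whereas you work with $\D(F)\oplus\D(F)$; both choices yield invariant cores.
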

 \begin{proof} For readers\rq{} convenience, we will provide a~sketch of the proof. We will continue to use the notations in the proof of Lemma~\ref{FockIden}. Set
 \begin{gather*}
 {\bs B}_{{\bs i}{\bs j}}({\bs f}_1, \dots, {\bs f}_N) =\Bigg[\prod_{\ell=1}^N b_{i_{\ell} j_{\ell}}({\bs f}_{\ell})^*\Bigg] \Psi_0,\\
 {\bs C}_{{\bs i}{\bs j}}({\bs f}_1, \dots, {\bs f}_N) =\Bigg[\prod_{\ell=1}^N c_{i_{\ell} j_{\ell}}({\bs f}_{\ell})^*\Bigg] \bigotimes_{\lambda=1}^4\Psi_{\lambda}
 \end{gather*}
 for ${\bs f}_1, \dots, {\bs f}_N\in L^2\big(\BbbR^3\big) \oplus L^2\big(\BbbR^3\big)$, $
 {\bs i}=(i_1, \dots, i_N)$, ${\bs j}=(j_1, \dots, j_N)\in \{1, 2, 3\}^N
 $ and $N\in \BbbN$.
 We define dense subspaces of $\Fock\big(L^2\big(\BbbR^3\big)\oplus L^2\big(\BbbR^3\big)\big)$
and $\bigotimes\limits_{\lambda=1}^4\mathfrak{H}_{\lambda}$ by
 \begin{gather}
 \mathfrak{V}_1=\operatorname{Lin}\big\{
 {\bs B}_{{\bs i}{\bs j}}({\bs f}_1, \dots, {\bs f}_N)\, \big|\, {\bs f}_1, \dots, {\bs f}_N\in C_0\big(\BbbR^3\big)\!\oplus\! C_0\big(\BbbR^3\big),\, {\bs i}, {\bs j} \!\in\! \{1, 2, 3\}^N,\, N\in \BbbN
 \big\},\nonumber\\
 \mathfrak{V}_2=\operatorname{Lin}\big\{
 {\bs C}_{{\bs i}{\bs j}}({\bs f}_1, \dots, {\bs f}_N)\, \big|\, {\bs f}_1, \dots, {\bs f}_N\in C_0\big(\BbbR^3\big)\!\oplus\! C_0\big(\BbbR^3\big),\, {\bs i}, {\bs j} \!\in\! \{1, 2, 3\}^N,\, N\in \BbbN
 \big\}, \!\!\!\!\label{DenseSubS}
 \end{gather}
 where $\operatorname{Lin}(S)$ indicates the linear span of~$S$. As is well-known, $\dG(F\oplus F)$ and $\sum\limits_{\lambda=1}^4\dG_{\lambda}(F)$
 are essentially self-adjoint on $\mathfrak{V}_1$ and $\mathfrak{V}_2$, respectively.
 We readily confirm that
 \begin{gather*}
 \dG(F\oplus F) {\bs B}_{{\bs i}{\bs j}}({\bs f}_1, \dots, {\bs f}_N)
 = \sum_{\alpha=1}^N {\bs B}_{{\bs i}{\bs j}}({\bs f}_1, \dots, F\oplus F{\bs f}_{\alpha}, \dots, {\bs f}_N),\\
 \sum_{\lambda=1}^4\dG_{\lambda}(F) {\bs C}_{{\bs i}{\bs j}}({\bs f}_1, \dots, {\bs f}_N)
 = \sum_{\alpha=1}^N {\bs C}_{{\bs i}{\bs j}}({\bs f}_1, \dots, F\oplus F{\bs f}_{\alpha}, \dots, {\bs f}_N).
 \end{gather*}
 Therefore, by Lemma~\ref{FockIden}, we obtain
 \begin{gather*}
 V\dG(F\oplus F) V^{-1}=\sum_{\lambda=1}^4 \dG_{\lambda}(F)\qquad \mbox{on $\mathfrak{V}_2$}.
 \end{gather*}
 This concludes the proof of Lemma~\ref{aoaoa}. \end{proof}

 \subsection{Definition of the Feynman Hamiltonians}\label{Sec35}
 In this subsection, we introduce the Feynman Hamiltonians. To this end, let
 \begin{gather*}
\A(x)= \int_{\BbbR^3}\dm k\, \frac{\hat{\varrho}(k)}{\sqrt{2|k|}}
\big\{
\vepsilon(k, 1)\big(
a(k, 1)\cos (k\cdot x)+a(k, 3)\sin (k\cdot x)
\big)\\
\hphantom{\A(x)=}{} +\vepsilon(k, 2)\big(
a(k, 2)\cos (k\cdot x)+a(k, 4)\sin (k\cdot x)
\big)+\mathrm{h.c.}\big\},\\
H_0= \sum_{\lambda=1}^4\int_{\BbbR^3}\dm k\, |k|a(k, \lambda)^* a(k, \lambda).
\end{gather*}
Here, h.c.\ denotes the hermite conjugates of the preceeding terms.
Note that $\mathscr{A}(x)$ is essentially self-adjoint on $\mathfrak{V}_2$ defined by~(\ref{DenseSubS}). We denote its closure by the same symbol.
By~(\ref{EquiA}) and~(\ref{AnCrEq}), we have the following:
 \begin{gather*}
 V A(x)V^{-1}=\A(x),\qquad V\Hf V^{-1}=H_0.
 \end{gather*}
 Now we define the two-electron Feynman Hamiltonian $H_{\mathrm{F2e}}$ by
\begin{gather*}
H_{\mathrm{F2e}}= \frac{1}{2}\big({-}\im
 \nabla_1-e\A(0)\big)^2+\frac{1}{2}e^2\nu_0^2 x_1^2
+\frac{1}{2}\big({-}\im \nabla_2-e\A(r)\big)^2+\frac{1}{2}e^2\nu_0^2 x_2^2
\\
\hphantom{H_{\mathrm{F2e}}=}{} +e^2 \int_{\BbbR^3}\dm k\,
\hat{\varrho}(k)^2 \ex^{\im k\cdot r}\big(x_1\cdot \hat{k}\big) \big(x_2\cdot \hat{k}\big)+H_0.
\end{gather*}
 Remark that $H_{\mathrm{F2e}}$ acts in $L^2\big(\BbbR_{x_1}^3\big) \otimes L^2\big(\BbbR_{x_2}^3\big)\otimes \Big(
 \bigotimes\limits_{\lambda=1}^4\Fock(\mathfrak{H}_{\lambda}) \Big)$ and is bounded from below, provided that $R$ is sufficiently large.

The following proposition plays an important role in the present paper.
 \begin{Prop} If $R$ is large enough,
 $VH_{\mathrm{D2e}}V^{-1}= H_{\mathrm{F2e}}$.
 \end{Prop}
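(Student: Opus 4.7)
The plan is to verify the identity $V H_{\mathrm{D2e}} V^{-1} = H_{\mathrm{F2e}}$ term-by-term on a common invariant dense core and then promote the algebraic identity to an operator identity by essential self-adjointness. Because $V$ has the block form $\mathrm{id}_{L^2(\BbbR^3_{x_1})} \otimes \mathrm{id}_{L^2(\BbbR^3_{x_2})} \otimes V_{\mathrm{Fock}}$, it commutes with every operator that is a function of $x_1, x_2, -\im\nabla_1, -\im\nabla_2$ alone. Hence the harmonic terms $\tfrac{1}{2}e^2\nu_0^2 x_j^2$, the dipole cross term $e^2\int \dm k\, \hat\varrho(k)^2 \ex^{\im k\cdot r}(x_1\cdot\hat k)(x_2\cdot\hat k)$, and the bare Laplacian pieces $-\tfrac{1}{2}\Delta_j$ that appear after expanding the squared momenta are left fixed by conjugation by $V$.

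The genuinely Fock-space pieces are handled by the two intertwining identities recorded just above the proposition: $V \Hf V^{-1} = H_0$ is the $F(k)=|k|$ case of Lemma \ref{aoaoa}, and $V A(y) V^{-1} = \A(y)$ for each fixed $y \in \BbbR^3$ is obtained by applying (\ref{AnCrEq}) to the test function ${\bs f}(k) = \bigl(\hat\varrho(k)/\sqrt{2|k|}\,\ex^{-\im k\cdot y},\,\hat\varrho(k)/\sqrt{2|k|}\,\ex^{-\im k\cdot y}\bigr)$, for which the annihilation part of $A_j(y)$ equals $b_{jj}({\bs f})$. Splitting ${\bs f}$ into even and odd parts in $k$ via the definition (\ref{EquiA}) of $c_{jj}({\bs f})$ produces exactly the $\cos(k\cdot y)$ and $\sin(k\cdot y)$ kernels of $\A_j(y)$; the $-\im$ prefactors multiplying $a_3, a_4$ in $c_{ij}$ combine with the $-\im$ from the odd (imaginary) part of $\ex^{-\im k\cdot y}$ to produce real coefficients, and the Hermitian conjugate then yields the creation part. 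Specializing to $y=0$ and $y=r$ and expanding $(-\im\nabla_j - eA(y))^2 = -\Delta_j + e\bigl(-\im\nabla_j\cdot A(y) + A(y)\cdot(-\im\nabla_j)\bigr) + e^2 A(y)^2$, conjugation by $V$ reproduces the corresponding expansion with $\A(y)$ in place of $A(y)$.

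To make this rigorous I would work on the common invariant core $\mathcal{D} = \mathscr{S}(\BbbR^3_{x_1}) \otimes \mathscr{S}(\BbbR^3_{x_2}) \otimes \mathfrak{V}_1$ for $H_{\mathrm{D2e}}$, with image $V\mathcal{D} = \mathscr{S}\otimes\mathscr{S}\otimes\mathfrak{V}_2$ a core for $H_{\mathrm{F2e}}$, where $\mathfrak{V}_1, \mathfrak{V}_2$ are the dense subspaces introduced in (\ref{DenseSubS}). Every term of both Hamiltonians is well-defined on these cores (invoking $\hat\varrho\in L^2$ and $|k|^{-1/2}\hat\varrho\in L^2$ to control $A(y)^2$ and $\A(y)^2$), and the combined intertwining relations give the algebraic identity $V H_{\mathrm{D2e}} = H_{\mathrm{F2e}} V$ pointwise on $\mathcal{D}$. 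For $R$ sufficiently large both Hamiltonians are self-adjoint and bounded below (as noted in the excerpt via \cite{LHB}), and standard arguments show they are essentially self-adjoint on these cores, so the algebraic identity extends by closure to the required operator equality.

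The main obstacle is the bookkeeping for the squared kinetic term: $A(y)^2$ is a quadratic form in $a, a^*$ whose normal-ordered decomposition contains cross contractions that must reassemble correctly into $\A(y)^2$ after conjugation by $V$. This is where one must check that the even/odd decomposition leading to $V A(y) V^{-1} = \A(y)$ is consistent with products, and that $\mathcal{D}$ is invariant under $A(y)$ so that applying $A(y)$ twice stays within the core; once these points are verified, the remaining manipulations are routine applications of the intertwining relations.
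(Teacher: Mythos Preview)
Your proposal is correct and follows essentially the same approach as the paper: the paper's proof is the one-line ``Apply Lemmas~\ref{FockIden} and~\ref{aoaoa}'', which amounts precisely to invoking the intertwining relations $V A(y) V^{-1}=\A(y)$ and $V\Hf V^{-1}=H_0$ together with the fact that $V$ acts as the identity on the particle variables. Your write-up supplies the domain and essential-self-adjointness bookkeeping that the paper leaves implicit, but the underlying mechanism is identical.
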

\begin{proof} Apply Lemmas \ref{FockIden} and \ref{aoaoa}. \end{proof}

As for the one-electron Feynman Hamiltonian, we obtain the following.
\begin{Prop}Let
\begin{gather*}
H_{\mathrm{F1e}}=\frac{1}{2}\big({-}\im \nabla-e\A(0)\big)^2+\frac{1}{2}e^2\nu_0^2 x^2+H_0.
\end{gather*}
We have $VH_{\mathrm{D1e}} V^{-1}=H_{\mathrm{F1e}}$.
\end{Prop}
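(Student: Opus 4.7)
The plan is to imitate the one-line proof given for the two-electron case and simply apply Lemmas~\ref{FockIden} and~\ref{aoaoa} term by term. The only subtlety lies in checking that the squared kinetic term transforms as expected, but this is an entirely standard domain/core argument.

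First I would record the trivial but essential observation that the unitary $V$ constructed in Lemma~\ref{FockIden} acts on the Fock factor of the total Hilbert space $L^2(\BbbR^3)\otimes \Fock\big(L^2(\BbbR^3)\oplus L^2(\BbbR^3)\big)$ and as the identity on the electronic factor $L^2(\BbbR^3)$. Consequently, $V$ commutes with multiplication by $x_j$ and with $-\im \partial_{x_j}$ for $j=1,2,3$, so in particular
\begin{gather*}
V\left(\tfrac{1}{2}e^2\nu_0^2 x^2\right)V^{-1}=\tfrac{1}{2}e^2\nu_0^2 x^2,\qquad V(-\im \nabla)V^{-1}=-\im \nabla.
\end{gather*}

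Second, I would invoke the identity $VA(x)V^{-1}=\A(x)$ already noted in Section~\ref{Sec35} (which itself is a direct consequence of~(\ref{AnCrEq}) applied to the smeared form-factor defining $A(x)$), specialized at $x=0$, giving $VA(0)V^{-1}=\A(0)$. Combined with the previous paragraph, this yields the algebraic identity
\begin{gather*}
V\big({-}\im \nabla-eA(0)\big)V^{-1}=-\im \nabla-e\A(0)
\end{gather*}
on the algebraic tensor product of the Schwartz space $\mathscr{S}(\BbbR^3)$ with the finite-particle subspace $\mathfrak{V}_1$ of~(\ref{DenseSubS}). Squaring on this common core and observing that both $\frac{1}{2}(-\im \nabla-eA(0))^2$ and $\frac{1}{2}(-\im \nabla-e\A(0))^2$ are essentially self-adjoint there (by standard Pauli--Fierz arguments, since the dipole-approximated coupling $A(0)$ and its Feynman counterpart $\A(0)$ are self-adjoint operators commuting with the electronic momentum), one extends the identity to the closures.

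Third, I would apply Lemma~\ref{aoaoa} to $F(k)=|k|$, which is real-valued, even, and continuous on $\BbbR^3$, yielding $V\Hf V^{-1}=H_0$. Adding the three transformed contributions gives $VH_{\mathrm{D1e}}V^{-1}=H_{\mathrm{F1e}}$, as required. The one step that warrants the most care is the squaring of $-\im\nabla-eA(0)$: one must confirm that a single core (e.g., $\mathscr{S}(\BbbR^3)\odot \mathfrak{V}_1$ on the $H_{\mathrm{D1e}}$ side, and $\mathscr{S}(\BbbR^3)\odot \mathfrak{V}_2$ on the $H_{\mathrm{F1e}}$ side) is preserved by $V$ and $V^{-1}$ and makes both squared operators essentially self-adjoint. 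This is the same technicality that is tacitly handled in the two-electron proposition, and once it is in place the one-electron statement is immediate.
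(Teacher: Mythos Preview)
Your proposal is correct and follows exactly the approach the paper intends: the paper gives no separate proof for this proposition, treating it as immediate from the same Lemmas~\ref{FockIden} and~\ref{aoaoa} invoked in the two-electron case. Your added remarks on cores and essential self-adjointness are reasonable elaborations of details the paper leaves implicit.
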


In Remark \ref{WhyF?}, we will explain why the Feynman Hamiltonians are useful.

\section{Canonical transformations} \label{CanoTr}
Let $U$ be a unitary operator on $L^2\big(\BbbR_{x_1}^3\big) \otimes L^2\big(\BbbR_{x_2}^3\big)\otimes \Big(
 \bigotimes\limits_{\lambda=1}^4\Fock(\mathfrak{H}_{\lambda}) \Big)$ defined by
\begin{gather*}
U=\exp \{\im e x_1\cdot \A(0)+\im e x_2\cdot \A(r)\}.
\end{gather*}
Then one readily confirms that
\begin{gather*}
U^* (-\im \nabla_1) U=-\im \nabla_1+e\A(0),\qquad U^*(-\im \nabla_2)U=-\im \nabla_2+e\A(r)
\end{gather*}
and
\begin{gather*}
U^* a(k, \lambda) U=
\begin{cases}
a(k, \lambda)+\im e\dfrac{\hat{\varrho}(k)
}{\sqrt{2|k|}} \vepsilon(k,
 \lambda)\cdot (x_1+x_2 \cos (k\cdot r))
&\mbox{for $\lambda=1,2$},\vspace{1mm}\\
a(k, \lambda)+\im e \dfrac{\hat{\varrho}(k)}{\sqrt{2|k|}} \vepsilon(k,
 \lambda-2)\cdot x_2 \sin (k\cdot r)
&\mbox{for $\lambda=3,4$}.
\end{cases}
\end{gather*}
Here, we used the following fact:
\begin{gather*}
\ex^T a(k, \lambda)\ex^{-T}=a(k, \lambda)+G(k, \lambda),
\end{gather*}
where $T=\sum\limits_{\lambda=1}^4 \{a_{\lambda}(G(\cdot, \lambda))-a(G(\cdot, \lambda))^*\}^{**}$, $G(\cdot, \lambda)\in \mathfrak{H}_{\lambda}$.
Hence, we arrive at\footnote{The reason why the last term in the right-hand side of (\ref{TransformedH}) appears is as follows. After performing the unitary transformation, we see that $U^*H_{\mathrm{F2e}} U$ contains the term concerning $\big(x_1\cdot \hat{k}\big)\big(x_2\cdot \hat{k}\big)$ and $(\vepsilon(k, \lambda)\cdot x_1)(\vepsilon(k, \lambda)\cdot x_2)$, which is given by
\begin{gather}
e^2\int_{\BbbR^3}\dm k\, \hat{\varrho}(k)^2 \cos(k\cdot r)
\bigg\{
\big(x_1\cdot \hat{k}\big)\big(x_2\cdot \hat{k}\big)+\sum_{\lambda=1, 2} (\vepsilon(k, \lambda)\cdot x_1)(\vepsilon(k, \lambda)\cdot x_2)
\bigg\}. \label{x_1x_2Term}
\end{gather}
Here, we used the fact that $\int \dm k\, \hat{\varrho}(k)^2\sin(k\cdot r) \big(x_1\cdot \hat{k}\big)\big(x_2\cdot \hat{k}\big)=0$. By applying the basic property $
\sum\limits_{\lambda=1, 2} |\vepsilon(k, \lambda)\ra \la \vepsilon(k, \lambda)|=\one_3-|\hat{k}\ra\la \hat{k}|$, we conclude that~(\ref{x_1x_2Term}) is equal to $e^2\int_{\BbbR^3}\dm k\, \hat{\varrho}(k)^2\cos( k\cdot r) (x_1\cdot x_2)$.}
\begin{gather}
U^* H_{\mathrm{F2e}} U= -\frac{1}{2}\Delta_1+\frac{1}{2}e^2\nu^2x_1^2-\frac{1}{2}\Delta_2+\frac{1}{2}e^2\nu^2x_2^2
+ex_1\cdot E(0)+ex_2\cdot E(r)+H_0\nonumber\\
\hphantom{U^* H_{\mathrm{F2e}} U=}{} +e^2\int_{\BbbR^3}\dm k\,
 \hat{\varrho}(k)^2\cos( k\cdot r) (x_1\cdot x_2), \label{TransformedH}
\end{gather}
where $\nu^2=2\nu_0^2$ and
\begin{gather*}
E(x)= \im \int_{\BbbR^3} \dm k\, \sqrt{\frac{|k|}{2}} \hat{\varrho}(k)
\big\{\vepsilon(k,1)\big(\cos(k\cdot x) a(k, 1)
+ \sin (k\cdot x) a(k, 3)\big)\\
\hphantom{E(x)=}{} +\vepsilon(k,2)\big(\cos (k\cdot x) a(k, 2)
+ \sin (k\cdot x) a(k, 4)\big)-\mathrm{h.c.}\big\}.
\end{gather*}

Let $\Nf$ be the number operator defined by $\Nf=\sum\limits_{\lambda=1}^4 \dG_{\lambda}(1)$.
Applying the ``Fourier transformation'' $\ex^{-\im \pi \Nf/2}$
in the Fock space,\footnote{Let $\pi(k, \lambda)=-\frac{\im}{\sqrt{2}} (a(k, \lambda)-a(k, \lambda)^*)$ and $\phi(k, \lambda) =\frac{1}{\sqrt{2}}(a(k, \lambda)+a(k, \lambda)^*)$. We can confirm that $
 [\pi(k, \lambda), \phi(k', \lambda')]=-\im \delta_{\lambda\lambda'}\delta(k-k')
 $. Recalling the fact $[-\im {\rm d}/{\rm d}x, x]=-\im $, $\pi(k, \lambda)$ and $\phi(k, \lambda)$
 can be regarded as a multiplication operator and a differential operator, respectively.
Now, we readily check that $\ex^{\im \pi\Nf/2} \pi(k, \lambda)\ex^{-\im \pi \Nf/2}=\phi(k, \lambda)$ holds,
 which corresponds to the relation $\mathcal{F} x\mathcal{F}^{-1}=-\im {\rm d}/{\rm d}x$, where $\mathcal{F}$ is the Fourier transformation on~$L^2(\BbbR)$. This similarity is a reason why we refer to the unitary operator $\ex^{\im \pi\Nf/2}$ as the Fourier transformation.}
 we obtain that
\begin{gather}
H= \ex^{\im\pi \Nf/2}U^* H_{\mathrm{F2e}}U \ex^{-\im \pi \Nf/2}\nonumber\\
\hphantom{H}{} = -\frac{1}{2}\Delta_1+\frac{1}{2}e^2\nu^2x_1^2-\frac{1}{2}\Delta_2+\frac{1}{2}e^2\nu^2x_2^2+ex_1\cdot\hat{ E}(0)+ex_2\cdot \hat{E}(r)+H_0\nonumber\\
\hphantom{H=}{} +e^2\int_{\BbbR^3}\dm k\,
 \hat{\varrho}(k)^2\cos( k\cdot r) (x_1\cdot x_2),\label{DefH}
\end{gather}
where
\begin{gather}
\hat{E}(x)= \int_{\BbbR^3} \dm k\, |k| \hat{\varrho}(k)
\big\{\vepsilon(k,1)\big(\cos(k\cdot x) q(k, 1)
+ \sin (k\cdot x) q(k, 3)\big)\nonumber\\
\hphantom{\hat{E}(x)=}{} +\vepsilon(k,2)\big(\cos (k\cdot x) q(k, 2)
+ \sin (k\cdot x) q(k, 4)\big)\big\} \label{DefE}
\end{gather}
and
\begin{gather*}
q(k, \lambda)=\frac{1}{\sqrt{2|k|}}\big(a(k, \lambda)+a(k, \lambda)^*\big).
\end{gather*}
Since due to the assumption (A.3) the last term in~(\ref{DefH}) gives a rapidly decreasing contribution as a function of~$R$ to the ground state energy, we ignore this term from now on.

Finally, we define
\begin{gather*}
K=-\frac{1}{2} \Delta+\frac{1}{2}e^2\nu^2 x^2+ex\cdot \hat{E}(0)+H_0.
\end{gather*}
By an argument similar to the construction of $U$, we can construct a unitary operator $u$ on $L^2\big(\BbbR^3\big) \otimes \Big(\bigotimes\limits_{\lambda=1}^4\Fock(\mathfrak{H}_{\lambda})\Big)$ such that $K=\ex^{\im \pi \Nf/2} u H_{\mathrm{F1e}} u^{-1} \ex^{-\im \pi \Nf/2}$.

\section{Lattice approximated Hamiltonians}\label{FiniteVApp}

In order to exactly compute the ground state energies of $H$ and $K$, we will first introduce the lattice approximation of Hamiltonians. As we will see in later sections, the approximated Hamiltonians can be regarded as Hamiltonians of finite dimensional harmonic oscillator, which are exactly solvable.

For each $\Lambda>0$, let $\chi_{\Lambda}$ be an ultraviolet cutoff function given by $\chi_{\Lambda}(k)=1$ if $|k|\le \Lambda$, $\chi_{\Lambda}(k)=0$ otherwise. We define a linear operator $\hat{E}_{\Lambda}(x)$ by replacing $\hat{\varrho}(k)$ with $\hat{\varrho}(k) \chi_{\Lambda}(k)$ in the definition of $\hat{E}(x)$, i.e., the equation~(\ref{DefE}). We also define $H_{0, \Lambda}$ by
\begin{gather*}
H_{0, \Lambda}=\sum_{\lambda=1}^4 \int_{\BbbR^3} \dm k \, |k| \chi_{\Lambda}(k)a(k, \lambda)^*a(k, \lambda)
.
\end{gather*}
The Hamiltonians with a cutoff $\Lambda$ are defined by
\begin{gather*}
H_{\Lambda} =-\frac{1}{2}\Delta_{1}+\frac{1}{2}e^2\nu^2x_1^2 -\frac{1}{2}\Delta_{2}+\frac{1}{2}e^2\nu^2x_2^2+ex_1\cdot \hat{E}_{\Lambda}(0)+ex_2\cdot \hat{E}_{\Lambda}(r) +H_{0, \Lambda},\\
K_{\Lambda} =-\frac{1}{2}\Delta+\frac{1}{2}e^2\nu^2x^2+ex\cdot \hat{E}_{\Lambda}(0)+H_{0, \Lambda}.
\end{gather*}
We readily see that $H_{\Lambda}$ and $K_{\Lambda}$ respectively converge to~$H$ and~$K$ in the norm resolvent sense as $\Lambda\to \infty$.

Let $M$ be the (momentum) lattice with a cutoff $\Lambda$, namely,
\begin{gather*}
M=\big\{
l\in (2\pi \BbbZ/L)^3\, \big|\, |l_i|\le 2\pi \Lambda,\, i=1,2,3
\big\}\backslash \{0\}.
\end{gather*}
For later use, we label the elements of $M$ as
\begin{gather*}
M=\{ k_1, \dots, k_N\}.
\end{gather*}
Then the lattice approximated Hamiltonians are defined by
\begin{gather*}
H_{L, \Lambda}
=-\frac{1}{2}\Delta_1+\frac{1}{2}e^2\nu^2x_1^2-\frac{1}{2}\Delta_2+\frac{1}{2}e^2\nu^2x_2^2+ex_1\cdot\hat{ E}_{L, \Lambda}(0)+ex_2\cdot \hat{E}_{L,
 \Lambda}(r)+H_{0,L, \Lambda},\\
 K_{L, \Lambda} = -\frac{1}{2}\Delta+\frac{1}{2}e^2\nu^2 x^2+ex\cdot \hat{E}_{L, \Lambda}(0)+H_{0, L, \Lambda},
\end{gather*}
where
\begin{gather}
\hat{E}_{L, \Lambda}(x)= \left(\frac{2\pi}{L}\right)^{3/2}\sum_{k\in M} |k| \hat{\varrho}(k)
\big\{
\vepsilon(k,1)\big(\cos(k\cdot x) q(k,1)
+ \sin (k\cdot x) q(k, 3)\big)\nonumber\\
\hphantom{\hat{E}_{L, \Lambda}(x)=}{} +\vepsilon(k,2)\big(\cos (k\cdot x) q(k, 2)
+ \sin (k\cdot x) q(k, 4)\big)\big\},\label{EleFielD}\\
H_{0, L, \Lambda}=\frac{1}{2}\sum_{\lambda=1}^4\sum_{k\in M}
\big(p(k, \lambda)^2+|k|^2 q(k, \lambda)^2\big)-2\sum_{k\in M}|k|\nonumber
\end{gather}
with $p(k, \lambda)=\frac{1}{i} \sqrt{\frac{|k|}{2}}(a(k, \lambda)-a(k, \lambda)^*)$.
The lattice approximated operators act in the Hilbert space
$L^2\big(\BbbR_{x_1}^3\big)\otimes L^2\big(\BbbR_{x_2}^3\big) \otimes \Big(\bigotimes\limits_{\lambda=1}^4 \Fock(\mathfrak{H}_{L, \Lambda, \lambda})\Big)$ or $L^2\big(\BbbR^3\big) \otimes
\Big(\bigotimes\limits_{\lambda=1}^4 \Fock(\mathfrak{H}_{L, \Lambda, \lambda})\Big)$, where
$\mathfrak{H}_{L, \Lambda, \lambda}=\ell_*^2(M) \allowbreak \cap \mathfrak{H}_{\lambda}$.
Here, $\ell_*^2(M)$ is the $\ell^2(M)$ equipped with a modified norm
\begin{gather*}
\|f\|_*=\left(\frac{2\pi}{L}\right)^{3/4}\left(\sum_{k\in M} |f(k)|^2\right)^{1/2},
\end{gather*} and we regard $\ell^2_*(M)$ as a closed subspace of $L^2\big(\BbbR^3\big)$.
 Note that $p(k, \lambda)$ and $q(k, \lambda)$ are essentially self-adjoint on the finite particle subspace
 of $\bigotimes\limits_{\lambda=1}^4 \Fock(\mathfrak{H}_{L, \Lambda, \lambda}) $.
In what follows, we denote their closures by same symbols, respectively.
$q(k, \lambda)$ and $p(k, \lambda)$ is a canonical pair of the
photonic displacement coordinate and its conjugate momentum
 satisfying the standard commutation
relations:
\begin{gather*}
[p(k, \lambda), q(k', \lambda')] =-\im
 \delta_{kk'}\delta_{\lambda\lambda'},\\
[p(k, \lambda), p(k', \lambda')] =0=[q(k, \lambda), q(k', \lambda')].
\end{gather*}
Recall the identification $\Fock(\BbbC)=L^2(\BbbR)$.
Using this, we can naturally embed
 $\bigotimes\limits_{\lambda=1}^4\Fock(\mathfrak{H}_{L, \Lambda, \lambda})$ into $
\Big(\bigotimes\limits_{\lambda=1}^4L^2\big(\BbbR^3\big)\Big)^{\otimes \# M}$. In addition, $p(k, \lambda)$ and
$q(k, \lambda)$ can be regarded as the differential and multiplication
 operators, respectively.

The following proposition is a basis for our computation.
\begin{Prop}\label{ResolCon}For each $z\in \BbbC\backslash \BbbR$, one has
\begin{gather*}
\lim_{\Lambda\to \infty} \lim_{L\to \infty} (H_{L, \Lambda}-z)^{-1} =(H-z)^{-1},\\
\lim_{\Lambda\to \infty} \lim_{L\to \infty} (K_{L, \Lambda}-z)^{-1} =(K-z)^{-1}
\end{gather*}
 in the operator norm topology.
\end{Prop}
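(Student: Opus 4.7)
The plan is a two-step reduction. The outer limit $\Lambda\to\infty$ is already stated in the paragraph preceding the proposition: $H_\Lambda\to H$ and $K_\Lambda\to K$ in norm resolvent sense. So it suffices, for each fixed $\Lambda$, to establish the inner limit $H_{L,\Lambda}\to H_\Lambda$ (respectively $K_{L,\Lambda}\to K_\Lambda$) in norm resolvent sense as $L\to\infty$; combining the two and using that norm resolvent convergence at one $z\in\BbbC\setminus\BbbR$ extends to all such $z$ finishes the proof. Since the arguments for $H$ and $K$ are essentially identical, I will focus on the two-electron Hamiltonian.

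For fixed $\Lambda$, I would begin by showing that $H_{L,\Lambda}$ is self-adjoint and bounded below uniformly in $L$. This follows from standard $N_\tau$-estimates: with the UV cutoff in place, the coupling operator satisfies $\|\hat{E}_{L,\Lambda}(x)\Psi\|\le C_\Lambda\|(\Nf+\one)^{1/2}\Psi\|$ with $C_\Lambda$ independent of $L$ (the factor $(2\pi/L)^{3/2}$ in (\ref{EleFielD}) is exactly the Riemann-sum weight consistent with the modified norm on $\ell^2_*(M)$), while the harmonic confinement $\frac{1}{2}e^2\nu^2x_j^2$ controls the electron coordinate. With uniform self-adjointness and lower bounds in hand, it is enough to test the resolvent difference at a single $z$ far to the left of the spectrum and to apply the second resolvent identity
\begin{gather*}
(H_{L,\Lambda}-z)^{-1}-(H_\Lambda-z)^{-1}=(H_{L,\Lambda}-z)^{-1}\bigl(H_\Lambda-H_{L,\Lambda}\bigr)(H_\Lambda-z)^{-1}.
\end{gather*}
The difference $H_\Lambda-H_{L,\Lambda}$ decomposes into the field-energy error $H_{0,\Lambda}-H_{0,L,\Lambda}$ and the coupling error $ex_j\cdot(\hat{E}_\Lambda(x_j)-\hat{E}_{L,\Lambda}(x_j))$. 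Each of these is a Riemann sum error against an integrand that, thanks to (A.3) and the hard cutoff $\chi_\Lambda$, is bounded and compactly supported in $k$, hence converges to zero in $L^2$ norm as $L\to\infty$.

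The sandwiching resolvents absorb the unbounded factors: $(H_\Lambda-z)^{-1}$ maps $\Hil$ into the domains of $\Nf^{1/2}$, of $H_{0,\Lambda}$ and of the harmonic coordinates $x_j$, producing bounded operators whose norms are independent of $L$; the same is true of $(H_{L,\Lambda}-z)^{-1}$ by the uniform lower bound. Combining these with the $N_\tau$-type estimates, the operator norm of the sandwich on the right-hand side is bounded by the $L^2$ norm of the Riemann error times a constant depending only on $\Lambda$ and $z$, and this tends to zero as $L\to\infty$.

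The principal technical obstacle is the identification of the lattice Fock space $\bigotimes_{\lambda=1}^{4}\Fock(\mathfrak{H}_{L,\Lambda,\lambda})$ as a closed subspace of the continuum Fock space. One has $\mathfrak{H}_{L,\Lambda,\lambda}\subset\mathfrak{H}_\lambda\cap L^2(\{|k|\le\Lambda\})$, and as $L\to\infty$ these subspaces form an increasing net whose closed union is $\mathfrak{H}_\lambda\cap L^2(\{|k|\le\Lambda\})$. The second quantization of the orthogonal projections $P_{L,\Lambda}$ onto the lattice subspaces therefore converges strongly to the identity on the $\Lambda$-cutoff Fock space. After compressing the resolvent difference by this projection and its complement, the complementary piece is handled by dominated convergence on the vacuum-expectation side, while the in-subspace piece is exactly the Riemann-sum estimate above. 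This step must be carried out with some care since the continuum annihilation operators $a(k,\lambda)$ are operator-valued distributions, but the uniform $N_\tau$-bounds and the smoothness of $\hat\varrho$ make the argument routine once the embedding is set up correctly.
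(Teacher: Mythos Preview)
The paper's own proof is a bare citation: ``See, e.g., \cite{AH, GJ}.'' There is no argument in the paper to compare against; the proposition is treated as a standard fact from the literature on lattice/finite-volume approximations in constructive field theory.

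Your sketch follows the standard route those references take (embed the lattice Fock space, use the second resolvent identity, and control the Riemann-sum error in the form factors with the UV cutoff in place), and is in that sense more detailed than the paper itself. Two small imprecisions are worth flagging. First, the claim that the subspaces $\mathfrak{H}_{L,\Lambda,\lambda}$ form an \emph{increasing net} as $L\to\infty$ is not literally true: for generic $L_1<L_2$ the lattices $(2\pi\BbbZ/L_1)^3$ and $(2\pi\BbbZ/L_2)^3$ are not nested. This is harmless---one either restricts to a dyadic subsequence or, as in \cite{AH,GJ}, uses the step-function embedding and argues directly from Riemann-sum convergence without any monotonicity. Second, the second resolvent identity as written presupposes that $H_{L,\Lambda}$ and $H_\Lambda$ act on the same space; you note the embedding issue later, but the identity should really be stated only after $H_{L,\Lambda}$ has been extended (typically by a direct sum with a free piece on the orthogonal complement) to the full continuum Fock space. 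Once these are tidied up, your argument is the standard one and is correct.
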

\begin{proof}See, e.g., \cite{AH, GJ}.
\end{proof}

\section{Diagonalization I: One-electron Hamiltonian}\label{Dai1}

In this section, we diagonalize the one-electron Hamiltonian $K_{L, \Lambda}$. To this end, let
\begin{alignat*}{3}
& F_x(k, 1)=\left(\frac{2\pi}{L}\right)^{3/2} |k|\hat{\varrho}(k)\cos
 (k\cdot x),\qquad && F_x(k, 2)= \left(\frac{2\pi}{L}\right)^{3/2}|k|\hat{\varrho}(k)\cos
 (k\cdot x),& \\
& F_x(k, 3)= \left(\frac{2\pi}{L}\right)^{3/2} |k|\hat{\varrho}(k)\sin
 (k\cdot x),\qquad && F_x(k, 4)= \left(\frac{2\pi}{L}\right)^{3/2}|k|\hat{\varrho}(k)\sin
 (k\cdot x).&
\end{alignat*}
We define a linear operator $ \mathbb{T}(x)$ from $\ell^2(M\times \{1, \dots, 4\})$ to $\BbbC^3$ by
\begin{gather*}
\mathbb{T}(x){\boldsymbol f} =\sum_{\lambda=1}^4
\sum_{k\in M} |\vepsilon(k, \lambda)\ra F_x(k, \lambda)f(k, \lambda)
\end{gather*}
for each ${\boldsymbol f}=\{f(k, \lambda)\,|\, k\in M, \lambda\in \{1, \dots,4\}\}\in \ell^2(M\times \{1, \dots, 4\})$.
Here, we used the following notation:
$\vepsilon(k, 3):=\vepsilon(k, 1)$ and $\vepsilon(k, 4):=\vepsilon(k, 2)$.
The adjoint of $\mathbb{T}(x)$ is denoted by~$\mathbb{T}^*(x)$. Note that
\begin{gather*}
(\mathbb{T}^*(x)a)(k, \lambda)=\la \vepsilon(k, \lambda)\,|\,a\ra_{3} F_x(k, \lambda),\qquad a\in \BbbC^3,
\end{gather*}
where $\la \cdot| \cdot \ra_3$ stands for the inner product in $\mathbb{C}^3$.

Using the above notations, the interaction term $x\cdot \hat{E}_{L, \Lambda}(r)$
in $K_{L, \Lambda}$ is expressed as $x\cdot \hat{E}_{L, \Lambda}(r)=
\la \mathbb{T}(r) {\boldsymbol q}\,|\, x\ra_3=\la x\,|\,\mathbb{T}(r){\boldsymbol q}\ra_3$, where
 ${\boldsymbol q}=\{q(k, \lambda)\, |\, k\in M, \lambda\in \{1, \dots, 4\}\}$.
On the other hand, the field energy can be represented by
\begin{gather*}
H_{0, L, \Lambda}=\frac{1}{2}\big({\boldsymbol p}^2+\la {\boldsymbol q}| S_0{\boldsymbol q}\ra\big)-\frac{1}{2}\operatorname{tr} \big[\sqrt{S_0}\big],
\end{gather*}
where
$ {\boldsymbol p}=\{p(k, \lambda)\, |\, k\in M, \lambda\in \{1, \dots, 4\}\}$
 and
\begin{gather*}
S_0=
\begin{pmatrix}
|k_1|^2 \one_4& & & & O\\
 &|k_2|^2\one_4& & & & \\
& & & \ddots & & \\
&O & & & |k_N|^2\one_4
\end{pmatrix}.
\end{gather*}
Hence, $K_{L, \Lambda}$ can be rewritten as
\begin{gather}
K_{L, \Lambda}=-\frac{1}{2}\Delta+\frac{1}{2}e^2\nu^2 x^2+e \la x\,|\,\mathbb{T}(r) {\boldsymbol q}\ra_3
+\frac{1}{2}\big(
{\boldsymbol p}^2+\la {\boldsymbol q}\,|\, S_0{\boldsymbol q}\ra\big)-\frac{1}{2}\operatorname{tr}
\big[\sqrt{S_0}\big]. \label{OnePartiK}
\end{gather}
By setting $\phi=(x, {\boldsymbol q})$ and $\pi=(-\im \nabla,
{\boldsymbol p})$, one sees that
\begin{gather*}
K_{L, \Lambda}=\frac{1}{2}\big(
\la \pi| \pi\ra+\la \phi\,|\,\omega\phi\ra
\big)
-\frac{1}{2}\operatorname{tr}\big[\sqrt{\omega_0}\big]+\frac{3}{2}e\nu,
\end{gather*}
where
\begin{gather*}
\omega= \omega_0+Q,\qquad
\omega_0= \begin{pmatrix}
e^2\nu^2 & 0\\
0& S_0
\end{pmatrix},\qquad
Q=e\begin{pmatrix}
0 & \mathbb{T}(r)\\
\mathbb{T}^*(r) & 0
\end{pmatrix}.
\end{gather*}
The following lemma is a basic input.
\begin{lemm}\label{LLL}
If $1\le \sqrt{2}e\nu_0$ and $ \sqrt{2}e\|\hat{\varrho}\|_*<1$, then $\omega \ge 0$.
\end{lemm}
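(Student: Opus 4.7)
The plan is to prove $\omega \ge 0$ by the Schur complement method. Since $S_0$ is a block-diagonal positive definite matrix (each block being $|k_i|^2 \one_4$ with $k_i\ne 0$), it is invertible, so the standard characterization says that
\begin{gather*}
\omega=\begin{pmatrix} e^2\nu^2\one_3 & e\mathbb{T}(r)\\ e\mathbb{T}^*(r) & S_0\end{pmatrix}\ge 0
\quad\Longleftrightarrow\quad
e^2\nu^2\one_3 - e^2\, \mathbb{T}(r)S_0^{-1}\mathbb{T}^*(r)\ge 0.
\end{gather*}
Equivalently, for any $v\in\BbbC^3$ and ${\bs g}\in \ell^2(M\times\{1,\dots,4\})$, one can complete the square
\begin{gather*}
\la (v,{\bs g})|\omega(v,{\bs g})\ra
=\big\|S_0^{1/2}{\bs g}+eS_0^{-1/2}\mathbb{T}^*(r)v\big\|^2
+e^2\la v\,|\,(\nu^2\one_3-\mathbb{T}(r)S_0^{-1}\mathbb{T}^*(r))v\ra_3,
\end{gather*}
so the task reduces to bounding the $3\times 3$ matrix $\mathbb{T}(r)S_0^{-1}\mathbb{T}^*(r)$ from above.

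The key simplification is that this matrix, despite the $r$-dependence in each $F_r(k,\lambda)$, is actually independent of $r$. First I would compute it directly from the definitions:
\begin{gather*}
\mathbb{T}(r)S_0^{-1}\mathbb{T}^*(r)=\sum_{\lambda=1}^4\sum_{k\in M} |\vepsilon(k,\lambda)\ra\la\vepsilon(k,\lambda)|\,\frac{|F_r(k,\lambda)|^2}{|k|^2}.
\end{gather*}
Substituting $|F_r(k,\lambda)|^2=(2\pi/L)^3|k|^2\hat\varrho(k)^2\cdot c_\lambda(k\cdot r)^2$, where $c_\lambda$ is $\cos$ for $\lambda=1,2$ and $\sin$ for $\lambda=3,4$, and recalling $\vepsilon(k,3)=\vepsilon(k,1)$, $\vepsilon(k,4)=\vepsilon(k,2)$, the factors $\cos^2(k\cdot r)+\sin^2(k\cdot r)=1$ collapse the pairs $(1,3)$ and $(2,4)$. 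Using the standard polarization identity $\sum_{\lambda=1,2}|\vepsilon(k,\lambda)\ra\la\vepsilon(k,\lambda)|=\one_3-|\hat k\ra\la\hat k|$, I get
\begin{gather*}
\mathbb{T}(r)S_0^{-1}\mathbb{T}^*(r)
=\Big(\tfrac{2\pi}{L}\Big)^3\sum_{k\in M}\hat\varrho(k)^2\big(\one_3-|\hat k\ra\la\hat k|\big)
\le \Big(\tfrac{2\pi}{L}\Big)^3\sum_{k\in M}\hat\varrho(k)^2\, \one_3=\|\hat\varrho\|_*^2\,\one_3.
\end{gather*}

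To conclude I combine this with the hypotheses. From $\sqrt{2}e\nu_0\ge 1>\sqrt{2}e\|\hat\varrho\|_*$ I deduce $\nu_0>\|\hat\varrho\|_*$, hence $\nu^2=2\nu_0^2>\|\hat\varrho\|_*^2$, which yields
\begin{gather*}
e^2\nu^2\one_3-e^2\mathbb{T}(r)S_0^{-1}\mathbb{T}^*(r)\ge e^2\big(\nu^2-\|\hat\varrho\|_*^2\big)\one_3\ge 0,
\end{gather*}
proving $\omega\ge 0$. The main (and essentially only) non-trivial step is recognizing that the $r$-dependence in $\mathbb{T}(r)S_0^{-1}\mathbb{T}^*(r)$ disappears thanks to the pairing of the even/odd photon modes; this is precisely what the Feynman representation was set up to make manifest, so after that observation the bound follows mechanically from the Pythagorean identity and the transverse polarization sum.
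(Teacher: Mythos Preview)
Your argument is correct, but it is genuinely different from the paper's. The paper proves $\omega\ge 0$ by a relative form bound: using $\|\mathbb{T}(r){\boldsymbol f}\|\le\sqrt{2}\|\hat\varrho\|_*\|S_0^{1/2}{\boldsymbol f}\|$ together with the Schwarz inequality and the hypothesis $e^2\nu^2\ge 1$, it shows $|\la\vphi|Q\vphi\ra|\le\sqrt{2}e\|\hat\varrho\|_*\la\vphi|\omega_0\vphi\ra$, so $\omega=\omega_0+Q\ge(1-\sqrt{2}e\|\hat\varrho\|_*)\omega_0\ge 0$. You instead pass to the Schur complement and compute $\mathbb{T}(r)S_0^{-1}\mathbb{T}^*(r)$ explicitly, observing that the cosine/sine pairing of the $\lambda$-modes kills the $r$-dependence and that the transverse projector $\one_3-|\hat k\ra\la\hat k|$ is bounded by $\one_3$. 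This buys you a cleaner (and actually sharper, by a factor $\sqrt{2}$ in the operator bound) inequality and makes the $r$-independence visible; the paper's form-bound approach is cruder but dovetails with the estimate $\|\mathbb{T}(x)(s^2+S_0)^{-1/2}\|\le\sqrt{2}\|\hat\varrho\|_*$ that is reused later in Lemma~\ref{EstT}.
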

\begin{proof} By (\ref{TT*}), we have $\|\mathbb{T}(r) {\boldsymbol f}\| \le \sqrt{2} \|\hat{\varrho}\|_*\big\|S_0^{1/2} {\boldsymbol f}\big\|$ for all ${\boldsymbol f}\in \ell^2(M\times \{1, \dots, 4\})$.
Hence, for all $\vphi=(a, {\boldsymbol f})\in \BbbC^3\oplus \ell^2(M\times \{1, \dots, 4\})$, we have,
by the Schwarz inequality,
\begin{align*}
|\la \vphi|Q\vphi\ra| &\le 2\sqrt{2}e \| \hat{\varrho}\|_* \|a\|_3 \big\|S_0^{1/2} {\boldsymbol f}\big\|\\
&\le \sqrt{2} e\| \hat{\varrho}\|_*\big(\|a\|^2_3+\big\|S_0^{1/2} {\boldsymbol f}\big\|^2\big)\\
&\le \sqrt{2} e\|\hat{\varrho}\|_*\la \vphi|\omega_0\vphi\ra,
\end{align*}
provided that $1\le e^2\nu^2$. This concludes the proof of Lemma \ref{LLL}. \end{proof}

Therefore, the ground state energy of $K_{L, \Lambda}$ is given by
the following formula.
\begin{Prop}\label{EnergyTr}
Let $E_{L, \Lambda}=\inf \operatorname{spec}(K_{L, \Lambda})$.
If $1\le \sqrt{2} e \nu_0$ and $\sqrt{2}e \|\hat{\varrho}\|_* <1$, then one has
\begin{gather*}
E_{L, \Lambda}=\frac{1}{2}\operatorname{tr}\big[\sqrt{\omega}-\sqrt{\omega_0}\big] +\frac{3}{2}e\nu.
\end{gather*}
\end{Prop}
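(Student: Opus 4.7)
The plan is to recognize $K_{L,\Lambda}$ as a finite-dimensional coupled harmonic oscillator and to diagonalize it by a canonical transformation. Working on the real vector space $V := \BbbR^3 \oplus \ell^2(M\times\{1,\dots,4\})$ of dimension $3+4N$, the bundled coordinates $\phi=(x,\boldsymbol q)$ and conjugate momenta $\pi=(-\im\nabla,\boldsymbol p)$ satisfy the canonical commutation relations, and (\ref{OnePartiK})---together with the identity $\operatorname{tr}[\sqrt{\omega_0}]=3e\nu+\operatorname{tr}[\sqrt{S_0}]$---puts $K_{L,\Lambda}$ in the form
\begin{equation*}
K_{L,\Lambda}=\tfrac{1}{2}\la\pi|\pi\ra+\tfrac{1}{2}\la\phi|\omega\phi\ra-\tfrac{1}{2}\operatorname{tr}\bigl[\sqrt{\omega_0}\bigr]+\tfrac{3}{2}e\nu.
\end{equation*}

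Next I would diagonalize $\omega$. The hypotheses enter through the proof of Lemma \ref{LLL}, which actually yields $\omega\ge(1-\sqrt{2}e\|\hat\varrho\|_*)\omega_0$; since $\omega_0$ is strictly positive (its blocks are $e^2\nu^2\one_3$ and $S_0=\operatorname{diag}(|k_j|^2\one_4)$, both with strictly positive eigenvalues), so is $\omega$. Hence there exist an orthogonal $T$ on $V$ and a strictly positive diagonal matrix $D=\operatorname{diag}(\lambda_1,\dots,\lambda_{3+4N})$ with $T^*\omega T=D^2$. Setting $\tilde\phi:=T^*\phi$ and $\tilde\pi:=T^*\pi$, the canonical commutation relations are preserved, and the Hamiltonian decouples:
\begin{equation*}
K_{L,\Lambda}=\tfrac{1}{2}\sum_{i=1}^{3+4N}\bigl(\tilde\pi_i^2+\lambda_i^2\tilde\phi_i^2\bigr)-\tfrac{1}{2}\operatorname{tr}\bigl[\sqrt{\omega_0}\bigr]+\tfrac{3}{2}e\nu.
\end{equation*}

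To realize this as a genuine unitary equivalence on the ambient Hilbert space, I use the Schrödinger identification $\Fock(\BbbC)\simeq L^2(\BbbR)$ recalled just before Proposition \ref{ResolCon} to view the total space as $L^2(\BbbR^{3+4N})$, and implement $T$ by the dilation-type unitary $(\mathbb V\psi)(y):=\psi(Ty)$; then $\mathbb V^*\phi_i\mathbb V=\tilde\phi_i$ and $\mathbb V^*\pi_i\mathbb V=\tilde\pi_i$, so $\mathbb V^*K_{L,\Lambda}\mathbb V$ is a direct sum of $3+4N$ independent one-dimensional harmonic oscillators with ground state energies $\lambda_i/2$. Summing these energies and adding the $c$-number shift yields $E_{L,\Lambda}=\tfrac{1}{2}\operatorname{tr}[\sqrt{\omega}]-\tfrac{1}{2}\operatorname{tr}[\sqrt{\omega_0}]+\tfrac{3}{2}e\nu$, which is the claim. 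The only genuinely nontrivial ingredient is strict positivity of $\omega$---this is exactly what the \emph{strict} inequality $\sqrt{2}e\|\hat\varrho\|_*<1$ buys, since without it a soft mode could destroy the harmonic structure---whereas everything else reduces to orthogonal diagonalization of a positive symmetric form and the textbook ground-state formula for the harmonic oscillator.
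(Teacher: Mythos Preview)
Your proof is correct and follows essentially the same strategy as the paper: diagonalize the real symmetric matrix $\omega$, perform the corresponding canonical change of variables, and reduce to a direct sum of one-dimensional harmonic oscillators. The only difference is cosmetic---the paper appeals to von Neumann's uniqueness theorem to produce the unitary implementing the canonical transformation, whereas you construct it explicitly as $(\mathbb V\psi)(y)=\psi(Ty)$ after passing to the Schr\"odinger realization; your observation that the \emph{strict} inequality $\sqrt{2}e\|\hat\varrho\|_*<1$ is what guarantees strict positivity of $\omega$ (via $\omega\ge(1-\sqrt{2}e\|\hat\varrho\|_*)\omega_0>0$) is a useful sharpening of what the paper leaves implicit.
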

\begin{proof}
We provide a sketch of the proof.
First, we diagonalize $\omega$ as
\[ \omega=U^{-1} \operatorname{diag}(\lambda_1, \dots, \lambda_{4N+3}) U,
\] where $U$ is a unitary matrix and $\lambda_1, \dots, \lambda_{4N+3}$ are positive
eigenvalues of $\omega$. By setting $\tilde{\phi}=U\phi$ and $\tilde{\pi}=U \pi$, we can express $K_{L, \Lambda}$ as
\begin{gather}
K_{L, \Lambda} =\frac{1}{2}\big\la \tilde{\pi}|\tilde{\pi}\big\ra
+\frac{1}{2}\big\la \tilde{\phi}|\operatorname{diag}(\lambda_1, \dots, \lambda_{4N+3})\tilde{\phi}\big\ra
-\frac{1}{2}\operatorname{tr}[\sqrt{\omega_0}]+\frac{3}{2}e\nu. \label{ReExK}
\end{gather}
Because $\tilde{\pi}_j$ and $ \tilde{\phi}_j$ satisfy the Weyl relation:
 $e^{\im t \tilde{\pi}_i} e^{\im s \tilde{\phi}_j}=e^{\im st \delta_{ij}} e^{\im s \tilde{\phi}_j} e^{\im t \tilde{\pi}_i}$, the von Neumann's uniqueness theorem \cite[Theorem~VIII.14]{ReSi1} tells us that there is a unitary operator $\tau\colon L^2\big(\BbbR^{4N+3}\big) \allowbreak \to L^2\big(\BbbR^{4N+3}\big)$ such that $\tau\tilde{\phi}_j\tau^{-1} =x_j$ and $\tau \tilde{\pi}_j\tau^{-1}=-\im \partial/\partial x_j$.
 Therefore, the right-hand side of~(\ref{ReExK}) can be regarded as a Hamiltonian for $4N+3$-dimensional harmonic oscillator. Since the lowest eigenvalue of the Hamiltonian $-\frac{1}{2}\Delta_j^2+\frac{\lambda_j}{2} x_j^2$ is equal to $\sqrt{\lambda_j}/2$, we obtain that
\begin{gather*}
E_{L, \Lambda} =\frac{1}{2}\sum_{j=1}^{4N+3} \sqrt{\lambda_j}-\frac{1}{2}\operatorname{tr}\big[\sqrt{\omega_0}\big]+\frac{3}{2}e\nu =\frac{1}{2}\operatorname{tr}\big[\sqrt{\omega}-\sqrt{\omega_0}\big]
+\frac{3}{2}e\nu.
\end{gather*}
This finishes the proof of Proposition \ref{EnergyTr}. \end{proof}

Applying the elementary fact
\begin{gather}
\frac{1}{\pi}\int_{-\infty}^{\infty}\dm s \, \frac{a}{s^2+a}=\sqrt{a}, \label{I11}
\end{gather}
 we have that
\begin{gather}
E_{L, \Lambda}= \frac{1}{2\pi}\int_{-\infty}^{\infty}\dm s\,
 \operatorname{tr}\big[
\omega \big(s^2+\omega\big)^{-1}-\omega_0 \big(s^2+\omega_0\big)^{-1}
\big]+\frac{3}{2}e\nu\nonumber\\
\hphantom{E_{L, \Lambda}}{} = \frac{1}{2\pi}\sum_{j=1}^{\infty}(-1)^j\int_{-\infty}^{\infty}\dm s\,
 \operatorname{tr}\big[
\omega_0 \big(s^2+\omega_0\big)^{-1}\big\{
Q \big(s^2+\omega_0\big)^{-1} \big\}^j\big]\nonumber\\
\hphantom{E_{L, \Lambda}=}{} +\frac{1}{2\pi}\sum_{j=1}^{\infty}(-1)^{j+1}\int_{-\infty}^{\infty}\dm s\,
 \operatorname{tr}\big[\big\{ Q \big(s^2+\omega_0\big)^{-1}\big\}^j\big]+\frac{3}{2}e\nu\no
\hphantom{E_{L, \Lambda}}{} = \frac{1}{2\pi}\sum_{j=1}^{\infty}(-1)^{j+1}\int_{-\infty}^{\infty}\dm
 s\, s^2
 \operatorname{tr}\big[\big(s^2+\omega_0\big)^{-1}\big\{Q \big(s^2+\omega_0\big)^{-1} \big\}^j\big]+\frac{3}{2}e\nu.
\label{Expansion}
\end{gather}
Since $Q$ is off-diagonal, (\ref{Expansion}) becomes
\begin{gather}
E_{L, \Lambda}= -\frac{1}{2\pi}\sum_{j=1}^{\infty}\int_{-\infty}^{\infty}\dm s
\, s^2 \operatorname{tr}\big[\big(s^2+\omega_0\big)^{-1}
Q (s)^{2j}\big]+\frac{3}{2}e\nu, \label{ESeries}
\end{gather}
where $Q(s)=\big(s^2+\omega_0\big)^{-1/2} Q\big(s^2+\omega_0\big)^{-1/2}$.
In what follows, we will examine the convergence of the right-hand side of~(\ref{ESeries}).
As we will see, this series absolutely converges and~(\ref{ESeries})
is rigorously justified if $\nu_0$ is large enough.

We begin with the following basic lemma.
\begin{lemm}\label{EstT}
We have the following
\begin{gather*}
\big\|
\mathbb{T}(x) \big(s^2+S_0\big)^{-1/2}
\big\| \le \sqrt{2} \|\hat{\varrho}\|_{*},
\qquad
\big\|\big(s^2+S_0\big)^{-1/2} \mathbb{T}^*(x)\big\| \le \sqrt{2} \|\hat{\varrho}\|_{*}
\end{gather*}
for all $x\in \BbbR^3$, where $\|f\|_{*}=\sqrt{\left(\frac{2\pi}{L}\right)^{3}\sum\limits_{k\in M} |f(k)|^2}$ for each $f\in \ell^2(M)$.
\end{lemm}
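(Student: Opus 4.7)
The plan is to reduce the operator norm estimate to an elementary Cauchy--Schwarz computation, exploiting the fact that $S_0$ is diagonal and $(s^2+S_0)^{-1/2}$ acts on $\boldsymbol{f}=\{f(k,\lambda)\}$ simply by multiplication by $(s^2+|k|^2)^{-1/2}$. Since the two inequalities in the statement are mutually adjoint, I would prove the first one and then deduce the second by taking the adjoint.

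First I would write out, for $\boldsymbol{f}\in\ell^2(M\times\{1,\dots,4\})$,
\begin{gather*}
\mathbb{T}(x)\big(s^2+S_0\big)^{-1/2}\boldsymbol{f}
=\sum_{\lambda=1}^{4}\sum_{k\in M}|\vepsilon(k,\lambda)\ra\, F_x(k,\lambda)\big(s^2+|k|^2\big)^{-1/2} f(k,\lambda),
\end{gather*}
estimate in $\BbbC^3$ by the triangle inequality using $\|\vepsilon(k,\lambda)\|_3=1$, and then apply the Cauchy--Schwarz inequality over the index set $(\lambda,k)$. This yields
\begin{gather*}
\big\|\mathbb{T}(x)\big(s^2+S_0\big)^{-1/2}\boldsymbol{f}\big\|_3^{\,2}
\le \Bigg(\sum_{\lambda=1}^{4}\sum_{k\in M} F_x(k,\lambda)^2\big(s^2+|k|^2\big)^{-1}\Bigg)\|\boldsymbol{f}\|^2.
\end{gather*}

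Next I would carry out the $\lambda$-sum explicitly. From the definition of $F_x(k,\lambda)$, together with $\cos^2(k\cdot x)+\sin^2(k\cdot x)=1$, one has
\begin{gather*}
\sum_{\lambda=1}^{4}F_x(k,\lambda)^2
=\big(F_x(k,1)^2+F_x(k,3)^2\big)+\big(F_x(k,2)^2+F_x(k,4)^2\big)
=2\Big(\tfrac{2\pi}{L}\Big)^{3}|k|^2\hat{\varrho}(k)^2.
\end{gather*}
Combining this with the elementary bound $|k|^2/(s^2+|k|^2)\le 1$ uniformly in $s$ gives
\begin{gather*}
\sum_{\lambda,k} F_x(k,\lambda)^2\big(s^2+|k|^2\big)^{-1}
\le 2\sum_{k\in M}\Big(\tfrac{2\pi}{L}\Big)^{3}\hat{\varrho}(k)^2
=2\|\hat{\varrho}\|_*^{\,2}.
\end{gather*}
Taking square roots delivers the first inequality, and passing to the adjoint (which interchanges the roles of the domain and codomain norms while preserving operator norms) yields the second.

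I do not anticipate any genuine obstacle here: the only mild subtleties are keeping track of the identification $\vepsilon(k,3)=\vepsilon(k,1)$ and $\vepsilon(k,4)=\vepsilon(k,2)$ when summing the four polarization terms, and observing that the bound $|k|(s^2+|k|^2)^{-1/2}\le 1$ is exactly what makes the estimate uniform in $s$, which is essential for the later application of \eqref{I11} in controlling \eqref{ESeries}.
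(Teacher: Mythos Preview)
Your proof is correct and follows essentially the same approach as the paper: both arguments amount to writing out $\mathbb{T}(x)(s^2+S_0)^{-1/2}\boldsymbol f$, using the triangle inequality together with Cauchy--Schwarz over $(k,\lambda)$, and then invoking $|k|^2/(s^2+|k|^2)\le 1$ to arrive at $2\|\hat\varrho\|_*^2$. The paper merely phrases the intermediate step via the matrix $\mathbb{M}(x,x)$ from formula~(\ref{TT*}), but the substance is identical.
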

\begin{proof}
For each ${\boldsymbol f}\in \ell^2(M\times \{1, \dots, 4\})$, we have, by~(\ref{TT*}),
\begin{align*}
\big\|
\mathbb{T}(x) \big(s^2+S_0\big)^{-1/2} {\boldsymbol f}
\big\|^2&=\la {\boldsymbol f}\,|\,\mathbb{T}^*_s(x) \mathbb{T}_s(x) {\boldsymbol f}\ra\\
&=\big\la {\boldsymbol f}\,|\,\big(s^2+S_0\big)^{-1/2} \mathbb{M}(x, x)\big(s^2+S_0\big)^{-1/2} {\boldsymbol f}\big\ra\\
&\le \Bigg|
\sum_{k, \lambda} \frac{F_x(k, \lambda)}{\big(s^2+k^2\big)^{1/2}} f(k, \lambda)
\Bigg|^2\\
&\le \big\|
\big(s^2+k^2\big)^{-1/2} F_x \big\|^2\|{\boldsymbol f}\|^2.
\end{align*}
Because $\big\|\big(s^2+k^2\big)^{-1/2} F_x\big\|^2\le 2\|\hat{\varrho}\|^2_*$, we conclude that
$\big\|\mathbb{T}(x) \big(s^2+S_0\big)^{-1/2}
\big\| \le \sqrt{2} \|\hat{\varrho}\|_{*}$.
\end{proof}

\begin{lemm}\label{EstInt}Let
\begin{gather}
D(s)=2e^2s^2\big(s^2+e^2\nu^2\big)^{-1}\nonumber\\
\hphantom{D(s)=}{} \times \big\{
\big(s^2+e^2\nu^2\big)^{-1} \big\|
\big(s^2+|k|^2\big)^{-1/2} |k| \hat{\varrho}
\big\|_*^2+\big\|
\big(s^2+|k|^2\big)^{-1} |k| \hat{\varrho}
\big\|_*^2
\big\}. \label{DefD(s)}
\end{gather}
Then we have the following:
\begin{itemize}\itemsep=0pt
\item[{\rm (i)}] For all $s\in \BbbR$,
\begin{gather*}
s^2\operatorname{tr}\big[\big(s^2+\omega_0\big)^{-1} Q(s)^{2n}\big]
\le \left(
\frac{\sqrt{2}}{\nu} \|\hat{\varrho}\|_* \right)^{2n-2} D(s).
\end{gather*}
\item[{\rm (ii)}]
Let $a=\big(
\frac{\sqrt{2}}{\nu}\|\hat{\varrho}\|_*\big)^2$.
If $a<1$, then we have
\begin{gather*}
\sum_{n=1}^{\infty} s^2\operatorname{tr}\big[\big(s^2+\omega_0\big)^{-1}Q(s)^{2n} \big]
\le \frac{1}{1-a} D(s).
\end{gather*}
Remark that $\lim\limits_{L\to \infty} a\le
\big(\frac{\sqrt{2}}{\nu} \|\hat{\varrho}\|_{L^2}\big)^2\le c_{\infty}^2<1/4
$ holds for all $\Lambda>0$ by the assumption in Theorem~{\rm \ref{Rto7}}. Thus, the condition $a<1$ is satisfied provided that~$L$ is sufficiently large.
\item[{\rm (iii)}] $D(s)\in L^1(\BbbR)$ and
\begin{gather}
\frac{1}{2\pi} \int_{\BbbR} \dm s\, D(s) \le \frac{e}{\nu}\|\hat{\varrho}\|_*^2.\label{EstD}
\end{gather}
\end{itemize}
\end{lemm}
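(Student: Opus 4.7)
The plan is to exploit the off-diagonal block structure of $Q$ on $\BbbC^3\oplus\ell^2(M\times\{1,\dots,4\})$ in order to reduce every trace in (\ref{ESeries}) to a scalar sum over $M$ that Lemma~\ref{EstT} already controls. Setting
\begin{gather*}
A(s)=e\big(s^2+e^2\nu^2\big)^{-1/2}\mathbb{T}(r)\big(s^2+S_0\big)^{-1/2},
\end{gather*}
the identities
\begin{gather*}
Q(s)=\begin{pmatrix}0 & A(s)\\ A(s)^* & 0\end{pmatrix},\qquad Q(s)^{2n}=\begin{pmatrix}(AA^*)^n & 0\\ 0 & (A^*A)^n\end{pmatrix}
\end{gather*}
together with the block-diagonality of $(s^2+\omega_0)^{-1}$ let me split
\begin{gather*}
\operatorname{tr}\big[\big(s^2+\omega_0\big)^{-1}Q(s)^{2n}\big]=\frac{\operatorname{tr}_{\BbbC^3}[(AA^*)^n]}{s^2+e^2\nu^2}+\operatorname{tr}\big[\big(s^2+S_0\big)^{-1}(A^*A)^n\big].
\end{gather*}

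For part (i) I would apply the elementary inequality $\operatorname{tr}[XY^n]\le\|Y\|^{n-1}\operatorname{tr}[XY]$ (which follows from $Y^n\le\|Y\|^{n-1}Y$ for $Y\ge 0$) to each block to pull out $\|AA^*\|^{n-1}=\|A^*A\|^{n-1}$. Lemma~\ref{EstT} and the assumption $1\le \sqrt 2 e\nu_0$ (equivalent to $1\le e^2\nu^2$) give $\|A(s)\|^2\le 2\|\hat{\varrho}\|_*^2/\nu^2$, whence $\|AA^*\|^{n-1}\le(\sqrt 2 \|\hat{\varrho}\|_*/\nu)^{2n-2}$. The residual $n=1$ traces are then computed explicitly: using cyclicity, $\|\vepsilon(k,\lambda)\|_3=1$, and the key identity $|F_r(k,1)|^2+|F_r(k,3)|^2=(2\pi/L)^3|k|^2\hat{\varrho}(k)^2$ (and similarly for $\lambda=2,4$, which collapses the polarization sum via $\cos^2+\sin^2=1$), one obtains
\begin{gather*}
\operatorname{tr}_{\BbbC^3}[AA^*]=\frac{2e^2}{s^2+e^2\nu^2}\big\|\big(s^2+|k|^2\big)^{-1/2}|k|\hat{\varrho}\big\|_*^2,\\
\operatorname{tr}\big[\big(s^2+S_0\big)^{-1}A^*A\big]=\operatorname{tr}\big[A\big(s^2+S_0\big)^{-1}A^*\big]=\frac{2e^2}{s^2+e^2\nu^2}\big\|\big(s^2+|k|^2\big)^{-1}|k|\hat{\varrho}\big\|_*^2.
\end{gather*}
Multiplying by $s^2$ and adding produces precisely $(\sqrt 2 \|\hat{\varrho}\|_*/\nu)^{2n-2}D(s)$.

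Part (ii) is then an immediate geometric sum $\sum_{n\ge 1}a^{n-1}=1/(1-a)$; the condition $a<1$ for large $L$ follows from the Riemann-sum convergence $\|\hat{\varrho}\|_*\to\|\hat{\varrho}\|_{L^2}$ together with the hypothesis $c_\infty<1/2$. For part (iii), I would interchange the $s$-integral with the $*$-sum over $k\in M$ and invoke the elementary residue identities
\begin{gather*}
\int_{\BbbR}\frac{s^2\,\dm s}{\big(s^2+a^2\big)^2\big(s^2+b^2\big)}=\frac{\pi}{2a(a+b)^2},\qquad \int_{\BbbR}\frac{s^2\,\dm s}{\big(s^2+a^2\big)\big(s^2+b^2\big)^2}=\frac{\pi}{2b(a+b)^2},
\end{gather*}
both obtainable from $\int s^2\,\dm s/[(s^2+a^2)(s^2+b^2)]=\pi/(a+b)$ by differentiating in $a^2$ and $b^2$. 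With $a=e\nu$ and $b=|k|$, the two pieces of $D(s)$ combine through the algebraic identity $1/[2a(a+b)^2]+1/[2b(a+b)^2]=1/[2ab(a+b)]$, yielding
\begin{gather*}
\frac{1}{2\pi}\int_{\BbbR}\dm s\,D(s)=\frac{e}{2\nu}\Big(\tfrac{2\pi}{L}\Big)^3\sum_{k\in M}\frac{|k|\hat{\varrho}(k)^2}{e\nu+|k|}\le\frac{e}{2\nu}\|\hat{\varrho}\|_*^2,
\end{gather*}
which is even stronger than~(\ref{EstD}).

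The main technical care lies in the block bookkeeping for (i): one must match the two block traces of $(AA^*)^n$ and $(A^*A)^n$ with the correct powers of $(s^2+e^2\nu^2)^{-1}$ and extract \emph{exactly one} factor of $\|AA^*\|^{n-1}$, so that the series in (ii) is genuinely geometric with ratio $a$. Once this bookkeeping is done, parts (ii) and (iii) reduce to standard residue calculus.
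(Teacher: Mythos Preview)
Your argument is correct and follows essentially the same route as the paper: both proofs exploit the block off-diagonal form of $Q(s)$, compute the $n=1$ trace explicitly via the identity $\sum_{\lambda}F_r(k,\lambda)^2=2(2\pi/L)^3|k|^2\hat{\varrho}(k)^2$, reduce $n\ge 2$ to $n=1$ through $Q(s)^{2n}\le\|Q(s)\|^{2n-2}Q(s)^2$, and finish (iii) by the same residue integrals. Your version is in fact slightly sharper (you obtain equalities where the paper writes inequalities, and your combination in (iii) yields $e\|\hat{\varrho}\|_*^2/(2\nu)$ rather than $e\|\hat{\varrho}\|_*^2/\nu$); note only that the hypothesis $1\le\sqrt{2}e\nu_0$ is not actually needed for the bound $\|A(s)\|\le\sqrt{2}\|\hat{\varrho}\|_*/\nu$, since $e(s^2+e^2\nu^2)^{-1/2}\le 1/\nu$ holds unconditionally.
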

\begin{proof} We set $\mathbb{T}_s(r)=\mathbb{T}(r)\big(s^2+S_0\big)^{-1/2}$. First, consider the case where $n=1$. Because
\begin{gather}
Q(s)^2=e^2\big(s^2+e^2\nu^2\big)^{-1}
\begin{pmatrix}
\mathbb{T}_s(r) \mathbb{T}^*_s(r) & 0\\
0 & \mathbb{T}^*_s(r) \mathbb{T}_s(r)
\end{pmatrix}, \label{Q^2Formula}
\end{gather}
we obtain that
\begin{gather*}
 s^2
 \operatorname{tr}\big[\big(s^2+\omega_0\big)^{-1}
Q (s)^{2}\big]\\
\qquad {} = e^2s^2 \big(s^2+e^2\nu^2\big)^{-2}
 \operatorname{tr}\big[
 \mathbb{T}_s(r) \mathbb{T}_s^*(r)
 \big] + e^2s^2 \big(s^2+e^2\nu^2\big)^{-1}
 \operatorname{tr}\big[
 \big(s^2+S_0\big)^{-1}
 \mathbb{T}_s^*(r) \mathbb{T}_s(r)
 \big].
\end{gather*}
By (\ref{TT*}) and (\ref{T*T}), we have
\begin{gather*}
\operatorname{tr}\big[
 \mathbb{T}_s(r) \mathbb{T}_s^*(r)
 \big]
 \le 2 \left(\frac{2\pi}{L}\right)^{3}\sum_{k\in M} \big(s^2+|k|^2\big)^{-1}|k|^2 |\hat{\varrho}(k)|^2,\\
 \operatorname{tr}\big[
 \big(s^2+S_0\big)^{-1}
 \mathbb{T}_s^*(r) \mathbb{T}_s(r)
 \big]
 \le 2 \left(\frac{2\pi}{L}\right)^{3}\sum_{k\in M}\big(s^2+|k|^2\big)^{-2}|k|^2|\hat{\varrho}(k)|^2.
\end{gather*}
Thus, we get (i) for $n=1$.

To prove the assertion for $n\ge 2$, we remark that $\|Q(s)\| \le \frac{\sqrt{2}}{\nu} \|\hat{\varrho}\|_{*}$, which immediately follows from Lemma~\ref{EstT} and~(\ref{Q^2Formula}). Thus, by using the fact $Q(s)^{2n} \le \|Q(s)\|^{2n-2} Q(s)^2$, we have
\begin{gather*}
\operatorname{tr}\big[\big(s^2+\omega_0\big)^{-1/2} Q(s)^{2n}\big(s^2+\omega_0\big)^{-1/2}\big]\!
\le \|Q(s)\|^{2n-2}\!\operatorname{tr}\big[\big(s^2+\omega_0\big)^{-1/2} Q(s)^2\big(s^2+\omega_0\big)^{-1/2}
\big].
\end{gather*}
Applying the result for $n=1$, we get the desired result for $n\ge 2$.
(ii) immediately follows from~(i).

By using the formula (\ref{I111}) with $a=b=e^2\nu^2$ and $c=|k|^2$, we see that
\begin{gather*}
 \frac{1}{2\pi}\int_{-\infty}^{\infty}\dm s \, 2 e^2 s^2 \big(s^2+e^2\nu^2\big)^{-2}
 \big\| \big(s^2+|k|^2\big)^{-1/2} |k|\hat{\varrho} \big\|_*^2 \\
\qquad{} =
 \frac{e^2}{2\pi}\left(\frac{2\pi}{L}\right)^{3}\sum_{k\in M}\frac{2 \pi}{2e\nu}\frac{|k|^2|\hat{\varrho}(k)|^2}{(e\nu+|k|)^2}
 \le \frac{e}{2\nu} \|\hat{\varrho}\|_{*}^2.
\end{gather*}
Similarly, by using the formula (\ref{I111}) with $a=e^2\nu^2$ and $b=c=|k|^2$, we obtain
\begin{gather*}
\frac{1}{2\pi}\int_{-\infty}^{\infty}\dm s \, 2e^2 s^2 \big(s^2+e^2\nu^2\big)^{-1}
 \big\| \big(s^2+|k|^2\big)^{-1} |k| \hat{\varrho} \big\|_*^2 \\
 \qquad{} = \frac{e^2}{2\pi} \left(\frac{2\pi}{L}\right)^{3}\sum_{k\in M}\frac{2 \pi}{2|k|(|k|+e\nu)^2} |k|^2 |\hat{\varrho}(k)|^2 \le \frac{e}{2\nu} \|\hat{\varrho}\|^2_{*}.
\end{gather*}
Inserting these into (\ref{DefD(s)}), we obtain the assertion (iii).
\end{proof}

\begin{coro}\label{ConvOne}
The right-hand side of \eqref{ESeries} absolutely converges, provided that $\sqrt{2} \| \hat{\varrho}\|_{*}<\nu$, $1\le \sqrt{2}e\nu_0$ and $\sqrt{2} e \|\hat{\varrho}\|_*<1$. In addition,
to exchange the series with the integral in \eqref{ESeries} $($or~\eqref{Expansion}$)$ can be justified.
\end{coro}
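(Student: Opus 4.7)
The plan is a direct combination of the termwise bound in Lemma~\ref{EstInt}(i) with the integrability estimate in Lemma~\ref{EstInt}(iii). The first observation I will record is that each integrand $s^2 \operatorname{tr}\bigl[(s^2+\omega_0)^{-1} Q(s)^{2n}\bigr]$ is non-negative in $s$: by cyclicity of the trace it equals $s^2 \operatorname{tr}\bigl[(s^2+\omega_0)^{-1/2} Q(s)^{2n} (s^2+\omega_0)^{-1/2}\bigr]$, and the operator inside the trace is a conjugation of the positive even power $Q(s)^{2n}\ge 0$. Lemma~\ref{EstInt}(i) then dominates this non-negative function by $a^{n-1} D(s)$, where $a = \bigl(\sqrt{2}\|\hat{\varrho}\|_*/\nu\bigr)^2$.

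Under the hypothesis $\sqrt{2}\|\hat{\varrho}\|_* < \nu$ one has $a<1$, so the geometric series $\sum_{n\ge 1} a^{n-1} = (1-a)^{-1}$ is finite. Combining this with $D\in L^1(\BbbR)$ and the explicit bound from Lemma~\ref{EstInt}(iii) yields
\begin{gather*}
\sum_{n=1}^{\infty} \frac{1}{2\pi} \int_{\BbbR} \dm s\, s^2 \operatorname{tr}\bigl[(s^2+\omega_0)^{-1} Q(s)^{2n}\bigr] \le \frac{1}{1-a}\cdot \frac{e}{\nu}\|\hat{\varrho}\|_*^2 < \infty,
\end{gather*}
which is precisely the claimed absolute convergence of the series on the right-hand side of~(\ref{ESeries}).

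Since every summand is a non-negative measurable function of $s$, Tonelli's theorem immediately licenses the interchange of the sum with the integral in~(\ref{ESeries}). The same uniform domination will also justify the corresponding manipulations in~(\ref{Expansion}): the odd-power contributions there vanish because $Q$ is off-diagonal while $(s^2+\omega_0)^{-1}$ is block-diagonal, so only the even powers survive, and these are exactly the absolutely summable terms just controlled. I do not anticipate any hidden obstacle; the only subtlety is to confirm that the standing hypotheses remain in force, and indeed the condition $c_\infty < 1/2$ in Theorem~\ref{Rto7} forces $\lim_{L\to\infty} a \le c_\infty^2 < 1/4$, so $a<1$ is automatic for all $\Lambda>0$ and all sufficiently large $L$.
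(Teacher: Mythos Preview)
Your argument is correct and is exactly the deduction the paper has in mind: the corollary is stated without proof precisely because it is an immediate consequence of Lemma~\ref{EstInt}(i)--(iii), and you have spelled out those details cleanly, including the non-negativity of the traces and the use of Tonelli for the interchange.
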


\section{Diagonalization II: Two-electron Hamiltonian} \label{Dai2}

Next we will diagonalize $H_{L, \Lambda}$. This is actually possible because we employ the Feynman Hamiltonian, see Remark~\ref{WhyF?} for details. By an argument similar to that of the proof of (\ref{OnePartiK}), $H_{L, \Lambda}$ can be expressed as
\begin{gather*}
H_{L, \Lambda}= -\frac{1}{2}\Delta_1+\frac{1}{2}e^2\nu^2x_1^2
-\frac{1}{2}\Delta_2+\frac{1}{2}e^2\nu^2x_2^2+e \la \mathbb{T}(0)
 {\boldsymbol q}|x_1 \ra_3+e\la\mathbb{T}(r) {\boldsymbol q}| x_2\ra_3\\
\hphantom{H_{L, \Lambda}=}{} +\frac{1}{2}\big(
{\boldsymbol p}^2+\la {\boldsymbol q}| S_0{\boldsymbol q}\ra
\big)-\frac{1}{2}\operatorname{tr}\big[\sqrt{S_0}\big].
\end{gather*}
By setting $\Phi=(x_1, x_2, {\boldsymbol q})$ and $\Pi=(-\im \nabla_1,
-\im \nabla_2, {\boldsymbol p})$, we have that
\begin{gather*}
H_{L, \Lambda}=\frac{1}{2}\big(\la \Pi| \Pi\ra+\la \Phi| \Omega\Phi\ra
\big)-\frac{1}{2}\operatorname{tr}\big[\sqrt{\Omega_0}\big]+3 e\nu,
\end{gather*}
where
\begin{gather*}
\Omega= \Omega_0+Q_1+Q_2,\\ \Omega_0= \begin{pmatrix}
e^2\nu^2 & 0&0\\
0&e^2\nu^2 &0\\
0&0& S_0
\end{pmatrix}, \!\!\qquad Q_1=e \begin{pmatrix}
0&0& \mathbb{T}(0)\\
0&0&0\\
\mathbb{T}^*(0)&0&0
\end{pmatrix},\!\!\qquad
Q_2=e\begin{pmatrix}
0&0&0\\
0&0& \mathbb{T}(r) \\
0&\mathbb{T}^*(r)&0
\end{pmatrix}.
\end{gather*}
By an argument similar to that in the proof of Proposition~\ref{EnergyTr}, we get the following useful formula.
\begin{Prop} Let $E_{L, \Lambda}(R)=\inf \operatorname{spec}(H_{L, \Lambda})$.
If $1\le \sqrt{2} e \nu_0$ and $\sqrt{2}e \|\hat{\varrho}\|_* <1$, then $\Omega\ge 0$ and
\begin{gather*}
E_{L, \Lambda}(R)=\frac{1}{2}\operatorname{tr}\big[\sqrt{\Omega}-\sqrt{\Omega_0}\big]+3e\nu.
\end{gather*}
\end{Prop}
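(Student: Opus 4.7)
The plan is to transcribe, essentially verbatim, the proof of Proposition \ref{EnergyTr} to the larger phase space $\mathbb{C}^3\oplus\mathbb{C}^3\oplus\ell^2(M\times\{1,\dots,4\})$ of dimension $4N+6$. The display just preceding the statement has already cast $H_{L,\Lambda}$ in the canonical form
\begin{gather*}
H_{L,\Lambda}=\tfrac12\bigl(\langle\Pi|\Pi\rangle+\langle\Phi|\Omega\Phi\rangle\bigr)-\tfrac12\operatorname{tr}\bigl[\sqrt{\Omega_0}\bigr]+3e\nu,
\end{gather*}
with $\Phi=(x_1,x_2,\boldsymbol q)$, $\Pi=(-\im\nabla_1,-\im\nabla_2,\boldsymbol p)$. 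The work is therefore concentrated in two steps: (a) proving $\Omega\ge 0$, and (b) applying the finite-dimensional harmonic-oscillator diagonalization.

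\textbf{Step (a): positivity of $\Omega$.} I would replicate the argument of Lemma \ref{LLL}. For $\varphi=(a_1,a_2,\boldsymbol f)$, apply the Schwarz inequality together with the $s=0$ case of Lemma \ref{EstT}, i.e.\ $\|\mathbb{T}(x)\boldsymbol f\|\le\sqrt 2\|\hat\varrho\|_*\|S_0^{1/2}\boldsymbol f\|$, to each of the two off-diagonal blocks separately:
\begin{gather*}
|\langle\varphi|Q_j\varphi\rangle|\le 2\sqrt 2\,e\|\hat\varrho\|_*\|a_j\|\,\|S_0^{1/2}\boldsymbol f\|,\qquad j=1,2.
\end{gather*}
The condition $1\le\sqrt 2\,e\nu_0$ gives $e^2\nu^2=2e^2\nu_0^2\ge 1$, so $\langle\varphi|\Omega_0\varphi\rangle$ dominates $\|a_1\|^2+\|a_2\|^2+\|S_0^{1/2}\boldsymbol f\|^2$. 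Combining via Young's inequality and the hypothesis $\sqrt 2 e\|\hat\varrho\|_*<1$ (invoking, if needed, the stronger smallness $c_\infty<1/2$ to absorb the factor of two coming from the two interaction blocks sharing the field sector) yields $|\langle\varphi|(Q_1+Q_2)\varphi\rangle|\le c\langle\varphi|\Omega_0\varphi\rangle$ for some $c<1$, hence $\Omega\ge(1-c)\Omega_0\ge 0$ and in fact $\Omega$ is strictly positive.

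\textbf{Step (b): diagonalization.} Write $\Omega=U^{-1}\operatorname{diag}(\lambda_1,\dots,\lambda_{4N+6})U$ for some orthogonal $U$ and positive eigenvalues $\lambda_j$, then set $\tilde\Phi=U\Phi$, $\tilde\Pi=U\Pi$. Since $U$ is real-orthogonal, the pairs $(\tilde\Phi_j,\tilde\Pi_j)$ still satisfy the Weyl relations, so by the von Neumann uniqueness theorem there exists a unitary $\tau\colon L^2(\mathbb{R}^{4N+6})\to L^2(\mathbb{R}^{4N+6})$ (acting trivially on the electronic $L^2(\mathbb{R}^6)$ and diagonalizing the photonic part of the tensor product appropriately) intertwining $\tilde\Phi_j$ with multiplication by $x_j$ and $\tilde\Pi_j$ with $-\im\partial_{x_j}$. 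Under this intertwining,
\begin{gather*}
H_{L,\Lambda}\;\cong\;\sum_{j=1}^{4N+6}\Bigl(-\tfrac12\partial_{x_j}^2+\tfrac{\lambda_j}{2}x_j^2\Bigr)-\tfrac12\operatorname{tr}\bigl[\sqrt{\Omega_0}\bigr]+3e\nu,
\end{gather*}
which is a direct sum of $4N+6$ one-dimensional harmonic oscillators of frequencies $\sqrt{\lambda_j}$. Each contributes $\sqrt{\lambda_j}/2$ to the ground-state energy, giving
\begin{gather*}
E_{L,\Lambda}(R)=\tfrac12\sum_{j=1}^{4N+6}\sqrt{\lambda_j}-\tfrac12\operatorname{tr}\bigl[\sqrt{\Omega_0}\bigr]+3e\nu=\tfrac12\operatorname{tr}\bigl[\sqrt{\Omega}-\sqrt{\Omega_0}\bigr]+3e\nu,
\end{gather*}
as claimed.

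The main obstacle is step (a): the straightforward replication of the one-electron Cauchy–Schwarz bound loses a factor of two in the coefficient of $\|S_0^{1/2}\boldsymbol f\|^2$ because $Q_1$ and $Q_2$ share the photon block. Closing this either requires exploiting the off-diagonal block structure of $Q_1+Q_2$ to get a joint bound $\|\Omega_0^{-1/2}(Q_1+Q_2)\Omega_0^{-1/2}\|\le\frac{2}{\nu}\|\hat\varrho\|_*$, or invoking the stronger smallness built into the standing hypothesis $c_\infty<1/2$ (which implies $\|\hat\varrho\|_*<\nu/2$ in the relevant regime). Once $\Omega\ge 0$ is secured, the remaining diagonalization is a mechanical copy of the argument already given for $\omega$.
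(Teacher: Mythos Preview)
Your approach is exactly the paper's: it simply says ``by an argument similar to that in the proof of Proposition~\ref{EnergyTr}'' and gives no further details, so replicating Lemma~\ref{LLL} for positivity and then the von Neumann/harmonic-oscillator diagonalization on $\BbbR^{4N+6}$ is precisely what is intended.

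Your diagnosis of the factor-of-two obstacle in step~(a) is accurate and is something the paper glosses over: because $Q_1$ and $Q_2$ share the photon block, the naive replication of Lemma~\ref{LLL} yields $|\langle\varphi|(Q_1+Q_2)\varphi\rangle|\le 2\sqrt{2}e\|\hat\varrho\|_*\langle\varphi|\Omega_0\varphi\rangle$ rather than the one-electron constant. The stated hypotheses $1\le\sqrt{2}e\nu_0$ and $\sqrt{2}e\|\hat\varrho\|_*<1$ alone are not quite enough to close this; your proposed fix via the standing assumption $c_\infty<1/2$ of Theorem~\ref{Rto7} (which forces $\|\hat\varrho\|_{L^2}<\nu_0/2$, hence $\|\hat\varrho\|_*<\nu/2$ for $L$ large) is the correct way to resolve it, and is all the paper actually needs downstream.
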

\begin{rem}[Why are the Feynman Hamiltonians helpful?]\label{WhyF?}
From the expression~(\ref{EleFielD}), we see that $\hat{E}_{L, \Lambda}(x)$ can be written as a sum of multiplication operators~$q(k, \lambda)$. As we already knew, this fact is a key to the diagonalization of $H_{L, \Lambda}$. In contrast to the Feynman Hamiltonians, in the standard representation, $\hat{E}_{L, \Lambda}(x)$ corresponds to the following operator:
\begin{gather}
\left(
\frac{2\pi}{L}
\right)^{3/2} \sum_{\lambda=1, 2}\sum_{k\in M}\hat{\varrho}(k)
 \vepsilon(k, \lambda) \big\{\cos(k\cdot x)|k| q(k, \lambda)+\sin(k\cdot x) |k|^{-1}p(k, \lambda)\big\}. \label{QPMix}
\end{gather}
In (\ref{QPMix}), both multiplication and differential operators appear, provided that $x\neq 0$.
At first glance, it appears that diagonalizing the Hamiltonians in this representation requires extra efforts.
\end{rem}

Moreover, it can be readily seen that, by (\ref{I11}),
\begin{gather}
E_{L, \Lambda}(R)
= \frac{1}{2\pi}\sum_{j=1}^{\infty}(-1)^{j+1}\!\int_{-\infty}^{\infty}\!\dm s\,
 s^2 \operatorname{tr}\big[
\big(s^2+\Omega_0\big)^{-1}\big\{
(Q_1+Q_2) \big(s^2+\Omega_0\big)^{-1}
\big\}^j \big]+3e\nu. \!\!\label{TwoElExpansion}
\end{gather}
To examine this formal series, let us introduce the following notation:
\begin{gather*}
\la O_1O_2\cdots O_n\ra
=\frac{1}{2\pi}\int_{-\infty}^{\infty}\dm s\, s^2\operatorname{tr}\big[
\big(s^2+\Omega_0\big)^{-1}O_1(s)O_2(s)\cdots O_n(s) \big],
\end{gather*}
where $O_i(s)=\big(s^2+\Omega_0\big)^{-1/2}O_i \big(s^2+\Omega_0\big)^{-1/2}$.
Then (\ref{TwoElExpansion}) can be expressed as
\begin{gather}
E_{L, \Lambda}(R)
= \sum_{j=1}^{\infty}(-1)^{j+1}\la
 \underbrace{(Q_1+Q_2)\cdots (Q_1+Q_2)}_j\ra+3e\nu\nonumber\\
\hphantom{E_{L, \Lambda}(R)}{}= \sum_{j=1}^{\infty}(-1)^{j+1}\la
 \underbrace{Q_1\cdots Q_1}_j\ra
+\sum_{j=1}^{\infty}(-1)^{j+1}\la
 \underbrace{Q_2\cdots Q_2}_j\ra
\nonumber\\
\hphantom{E_{L, \Lambda}(R)=}{} +\sum_{j=1}^{\infty}\sum_{i_1, \dots, i_j\in \{1,2\}\atop
\{i_1, \dots, i_j\}\neq \{1, 1, \dots, 1\}, \{2,2, \dots, 2\}}
(-1)^{j+1}\la Q_{i_1}\cdots Q_{i_j}\ra+3e\nu. \label{ExpTwo??}
\end{gather}
Since $Q_1$ and $Q_2$ are off-diagonal, we have
\begin{gather*}
E_{L, \Lambda}(R)=
 -\sum_{j=1}^{\infty}\la
 \underbrace{Q_1\cdots Q_1}_{2j}\ra
-\sum_{j=1}^{\infty}\la
 \underbrace{Q_2\cdots Q_2}_{2j}\ra\\
\hphantom{E_{L, \Lambda}(R)=}{} -\sum_{j=1}^{\infty}\sum_{i_1, \dots, i_{2j}\in \{1,2\}\atop
\{i_1, \dots, i_{2j}\}\neq \{1, 1, \dots, 1\}, \{2,2, \dots, 2\}}
\la Q_{i_1}\cdots Q_{i_{2j}}\ra+3e\nu.
\end{gather*}
On the other hand, we remark that, by Corollary~\ref{ConvOne},
\begin{gather*}
E_{L, \Lambda}=
-\sum_{j=1}^{\infty}\la
 \underbrace{Q_1\cdots Q_1}_{2j}\ra+\frac{3}{2}e\nu
=-\sum_{j=1}^{\infty}\la
 \underbrace{Q_2\cdots Q_2}_{2j}\ra+\frac{3}{2}e\nu,
\end{gather*}
provided that $\sqrt{2}\|\hat{\varrho}\|_*<\nu$, $1\le \sqrt{2} e \nu_0$ and $\sqrt{2}e \|\hat{\varrho}\|_* <1$. Thus, we formally arrive at the following formula:
\begin{gather}
E_{L, \Lambda}(R)-2E_{L, \Lambda}
=-\sum_{j=1}^{\infty}\sum_{i_1, \dots, i_{2j}\in \{1,2\}\atop
\{i_1, \dots, i_{2j}\}\neq \{1, 1, \dots, 1\}, \{2,2, \dots, 2\}
}
\la Q_{i_1}\cdots Q_{i_{2j}}\ra.\label{BindingEn}
\end{gather}
Our next task is to prove the convergence of the right-hand side of (\ref{BindingEn}).
For this purpose, we need some preliminaries. Let
\begin{gather*}
\mathcal{I}_{2j}=\big\{I=\{i_1, \dots, i_{2j}\},\, i_1,\dots, i_{2j}\in \{1, 2\}\, \big|\,
I\neq \{1, 1, \dots, 1\}, \{2, 2, \dots, 2\} \big\}.
\end{gather*}
For each $I\in \mathcal{I}_{2j}$, we set $|I|=i_1+i_2+\cdots+i_{2j}$. Furthermore, we use the following notation:
\begin{gather*}
Q_I=Q_{i_1}Q_{i_2} \cdots Q_{i_{2j}},\qquad I=\{i_1, \dots, i_{2j}\}\in \mathcal{I}_{2j}.
\end{gather*}

\begin{lemm}\label{Odd0}
 Let $I\in \mathcal{I}_{2j}$. If $|I|$ is an odd number, then $\la Q_I\ra =0$.
\end{lemm}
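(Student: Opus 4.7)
My plan is to deduce $\la Q_I\ra=0$ from a $\BbbZ/2$-grading selection rule on the coefficient space $\BbbC^3\oplus\BbbC^3\oplus\ell^2(M\times\{1,\dots,4\})$ on which $\Omega_0,Q_1,Q_2$ live. The guiding observation is that $Q_1$ only couples the first $\BbbC^3$ block to the photon block, $Q_2$ only couples the second $\BbbC^3$ block to the photon block, and $\Omega_0$ is block diagonal; so if I assign opposite parities to the first $\BbbC^3$ block and to the pair (second $\BbbC^3$ block, photon block), then $Q_1$ becomes odd, $Q_2$ and $\Omega_0$ become even, and the vanishing of $\la Q_I\ra$ when an odd number of $Q_1$'s appear will follow by trace cyclicity.

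Concretely, let $P_1,P_2,P_{\mathrm{ph}}$ be the orthogonal projections onto the three summands and set
\begin{gather*}
\Gamma:=-P_1+P_2+P_{\mathrm{ph}},
\end{gather*}
a self-adjoint unitary with $\Gamma^2=\one$. From the explicit block forms given in Section~\ref{Dai2} one immediately reads off $[\Gamma,\Omega_0]=0$ and $[\Gamma,Q_2]=0$, while $\{\Gamma,Q_1\}=0$: indeed $Q_1$ has nonzero entries only in the $(P_1,P_{\mathrm{ph}})$ blocks, on which $\Gamma$ takes opposite signs, whereas $Q_2$ lives in the $(P_2,P_{\mathrm{ph}})$ blocks, on which $\Gamma$ is identically $+1$. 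Consequently $\Gamma$ also commutes with $(s^2+\Omega_0)^{\pm 1/2}$, and, writing $n_1(I)$ for the number of indices of $I$ equal to $1$, I would obtain
\begin{gather*}
\Gamma\bigl[(s^2+\Omega_0)^{-1}Q_{i_1}(s)\cdots Q_{i_{2j}}(s)\bigr]\Gamma^{-1}=(-1)^{n_1(I)}(s^2+\Omega_0)^{-1}Q_{i_1}(s)\cdots Q_{i_{2j}}(s).
\end{gather*}

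Taking the trace on both sides and using cyclicity together with $\Gamma^{-1}=\Gamma$ shows that the trace equals $(-1)^{n_1(I)}$ times itself, so it vanishes pointwise in $s$ whenever $n_1(I)$ is odd, and integrating in $s$ then yields $\la Q_I\ra=0$. The argument is finished by the elementary remark $|I|=n_1(I)+2n_2(I)\equiv n_1(I)\pmod 2$, so the hypothesis ``$|I|$ odd'' coincides with ``$n_1(I)$ odd.'' There is no real obstacle here: $M$ is a finite lattice, so the trace lives on a finite-dimensional space and no trace-class issue arises; the only step worth double-checking is the sign rule $\Gamma Q_i\Gamma=(-1)^{\delta_{i,1}}Q_i$, which is just a one-line inspection of the block matrices defining $Q_1$ and $Q_2$.
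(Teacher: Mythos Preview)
Your proof is correct. The grading operator $\Gamma=-P_1+P_2+P_{\mathrm{ph}}$ does exactly what you claim: $\Gamma Q_1\Gamma=-Q_1$, $\Gamma Q_2\Gamma=Q_2$, $[\Gamma,\Omega_0]=0$, and the parity bookkeeping $|I|\equiv n_1(I)\pmod 2$ is fine. Since the coefficient space is finite-dimensional, trace cyclicity applies without caveat.

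The paper's own proof rests on the same underlying observation but phrases it as a direct block-structure check: it notes (from the explicit formulas in Appendix~\ref{List}) that $Q_1(s)^2$, $Q_2(s)^2$ and $(s^2+\Omega_0)^{-1}$ are block-diagonal in the three-block decomposition while $Q_1(s)Q_2(s)$ and $Q_2(s)Q_1(s)$ are purely off-diagonal, and then asserts that when $|I|$ is odd the full product has vanishing diagonal blocks, hence trace zero. Your $\Gamma$ is precisely the operator that makes this implicit parity explicit: ``block-diagonal'' in the paper's sense implies $\Gamma$-even, and the off-diagonal products $Q_1Q_2$, $Q_2Q_1$ are $\Gamma$-odd. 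What you gain is that the passage ``hence the whole product is off-diagonal'' is no longer an inspection of cases but a one-line conjugation identity plus trace cyclicity; what the paper's version gains is that no auxiliary operator needs to be introduced, at the cost of leaving the inductive step implicit. The two arguments are equivalent in content.
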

\begin{proof}
Note that $\big(s^2+\Omega_0\big)^{-1}$, $Q_1(s)^2$ and $Q_2(s)^2$ are diagonal operators, while $Q_1(s)Q_2(s)$ and $Q_2(s)Q_1(s)$ are off-diagonal operators, see Appendix \ref{List}.
Hence, if $|I|$ is an odd number, then $\big(s^2+\Omega_0\big)^{-1}Q_{i_1}(s)\cdots Q_{i_{2j}}(s)$ is an off-diagonal operator. Accordingly,
\begin{gather*}
\Tr\big[ \big(s^2+\Omega_0\big)^{-1}Q_{i_1}(s)\cdots Q_{i_{2j}}(s)\big]=0.
\end{gather*}
This concludes the proof of Lemma \ref{Odd0}.
 \end{proof}

Let
$\mathcal{I}_{2j}^{(e)}=\{I\in \mathcal{I}_{2j}\, |\, \mbox{$|I|$ is even}\}$. By Lemma~\ref{Odd0}, we have
\begin{gather}
\mbox{the r.h.s.\ of (\ref{BindingEn})}=-\sum_{j=1}^{\infty} \sum_{I\in \mathcal{I}_{2j}^{(e)}}
\la Q_I\ra. \label{ExpansionF}
\end{gather}

\begin{lemm}
For each $s\in \BbbR$ and $I=\{i_1, \dots, i_{2j}\}\in \mathcal{I}_{2j}^{(e)}$, we set
 \begin{gather*}
 Q_I(s)=Q_{i_1}(s)\cdots Q_{i_{2j}}(s)
 \end{gather*}
 and
 \begin{gather*}
 E_I(s)=s^2\operatorname{tr}\big[
 \big(s^2+\omega\big)^{-1} Q_{I\backslash \{i_1\}}^*(s)Q_{I\backslash \{i_1\}}(s)
 \big],
 \end{gather*}
 where $Q_{I\backslash \{i_1\}}^*(s)=\big(Q_{I\backslash \{i_1\}}(s)\big)^*$.
 For all $R>0$, we have the following:
 \begin{itemize}\itemsep=0pt
 \item[{\rm (i)}] For each $s\in \BbbR$ and $I\in \mathcal{I}_{2j}^{(e)}$,
 \begin{gather}
 s^2 \big|\operatorname{tr} \big[
 \big(s^2+\omega\big)^{-1}Q_I(s)
\big]\big| \le D(s)^{1/2} E_I(s)^{1/2}, \label{IDS}
 \end{gather}
 where $D(s)$ is given by \eqref{DefD(s)}.
\item[{\rm (ii)}]
Recall that $a$ is defined by $a=\big(\frac{\sqrt{2}}{\nu}\|\hat{\varrho}\|_*\big)^2$.
If $a<1/4$, then
\begin{gather}
\sum_{j=1}^{\infty}\sum_{I\in \mathcal{I}_{2j}^{(e)}} E_I(s)^{1/2}
\le D(s)^{1/2} \frac{4}{1-4a}. \label{SumEI}
\end{gather}
Thus,
$D(s)^{1/2}\sum\limits_{j=1}^{\infty}\sum\limits_{I\in \mathcal{I}_{2j}^{(e)}} E_I(s)^{1/2}\in L^1(\BbbR)$ and
\begin{gather*}
\sum_{j=1}^{\infty} \sum_{I\in \mathcal{I}_{2j}^{(e)}}
|\la Q_I\ra| \le \frac{e}{\nu} \|\hat{\varrho}\|_*^2\frac{4}{1-4a}.
\end{gather*}
Note that as we mentioned in Lemma~{\rm \ref{EstInt}}, the condition $a<1/4$ is satisfied provided that~$L$ is large enough.
\end{itemize}
\end{lemm}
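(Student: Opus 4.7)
The strategy is to bound the single trace in (i) by Hilbert--Schmidt Cauchy--Schwarz after peeling one factor off $Q_I(s)$, and then for (ii) to control $E_I(s)^{1/2}$ by operator-norm peel-off and sum a geometric series. For (i), split $Q_I(s)=Q_{i_1}(s)Q_{I\setminus\{i_1\}}(s)$ and, using cyclicity of the trace, write
\begin{gather*}
s^2\operatorname{tr}\big[(s^2+\Omega_0)^{-1}Q_I(s)\big]=\operatorname{tr}[XY],
\end{gather*}
where $X:=s(s^2+\Omega_0)^{-1/2}Q_{i_1}(s)$ and $Y:=sQ_{I\setminus\{i_1\}}(s)(s^2+\Omega_0)^{-1/2}$. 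The Hilbert--Schmidt Cauchy--Schwarz inequality $|\operatorname{tr}[XY]|\le\|X\|_2\|Y\|_2$ reduces the matter to $\|X\|_2^2=s^2\operatorname{tr}[(s^2+\Omega_0)^{-1}Q_{i_1}(s)^2]$ and $\|Y\|_2^2=E_I(s)$ (the latter again by cyclicity). Since $Q_1^2$ and $Q_2^2$ are block-diagonal with non-zero blocks of the form $\mathbb{T}(\cdot)\mathbb{T}^*(\cdot)$ and $\mathbb{T}^*(\cdot)\mathbb{T}(\cdot)$ sandwiched by the relevant resolvents, the computation performed at $n=1$ in Lemma~\ref{EstInt}(i) transfers verbatim to give $\|X\|_2^2\le D(s)$, yielding (i).

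For (ii), write
\begin{gather*}
E_I(s)^{1/2}=\big\|Q_{i_2}(s)\cdots Q_{i_{2j-1}}(s)\cdot sQ_{i_{2j}}(s)(s^2+\Omega_0)^{-1/2}\big\|_2
\end{gather*}
and peel off the leading $2j-2$ factors in operator norm via $\|AB\|_2\le\|A\|_\infty\|B\|_2$. The two-electron analogue of the bound obtained in the proof of Lemma~\ref{EstInt}(ii), which follows from Lemma~\ref{EstT} together with the block structure of $Q_i$, yields $\|Q_i(s)\|\le\sqrt{a}$; the rightmost factor satisfies $\|sQ_{i_{2j}}(s)(s^2+\Omega_0)^{-1/2}\|_2^2\le D(s)$ by the same block identification used in (i). Hence $E_I(s)^{1/2}\le a^{j-1}D(s)^{1/2}$. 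Using the crude count $|\mathcal{I}_{2j}^{(e)}|\le 2^{2j}=4\cdot 4^{j-1}$ and summing the geometric series gives $\sum_{j\ge 1}\sum_{I\in\mathcal{I}_{2j}^{(e)}}E_I(s)^{1/2}\le D(s)^{1/2}\frac{4}{1-4a}$ whenever $4a<1$, which is \eqref{SumEI}. The final bound on $\sum_j\sum_I|\la Q_I\ra|$ follows by applying (i) under the integral, using the previous bound (which is integrable since $D\in L^1(\BbbR)$) to justify Fubini, and invoking \eqref{EstD} to dominate $\frac{1}{2\pi}\int D(s)\,ds$ by $\frac{e}{\nu}\|\hat{\varrho}\|_*^2$.

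The one non-routine step is the identification $s^2\operatorname{tr}[(s^2+\Omega_0)^{-1}Q_i(s)^2]\le D(s)$ for $i=1,2$; once this block computation is carried out, the rest of the proof is a combination of Hilbert--Schmidt Cauchy--Schwarz, the submultiplicativity $\|AB\|_2\le\|A\|_\infty\|B\|_2$, the uniform estimate $\|Q_i(s)\|\le\sqrt{a}$, and the elementary counting of $\mathcal{I}_{2j}^{(e)}$, all of which are direct analogues of steps already executed for the one-electron Hamiltonian in Lemma~\ref{EstInt}.
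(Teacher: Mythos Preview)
Your proof is correct and follows essentially the same route as the paper's. Both arguments split off the first factor via the trace Cauchy--Schwarz inequality (you phrase it as $|\operatorname{tr}[XY]|\le\|X\|_2\|Y\|_2$, the paper as $|\operatorname{tr}[A^*B]|\le\operatorname{tr}[A^*A]^{1/2}\operatorname{tr}[B^*B]^{1/2}$), invoke the block computation $s^2\operatorname{tr}\big[(s^2+\Omega_0)^{-1}Q_i(s)^2\big]\le D(s)$ inherited from the $n=1$ case of Lemma~\ref{EstInt}, peel off the remaining $2j-2$ factors via the uniform operator-norm bound $\|Q_i(s)\|\le\sqrt{a}$, and sum the geometric series using $\#\mathcal{I}_{2j}^{(e)}\le 2^{2j}$.
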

\begin{proof}For notational simplicity, we set $G=\big(s^2+\omega_0\big)^{-1}$.
By the Schwarz inequality $|\operatorname{tr}[A^*B]| \allowbreak \le \operatorname{tr}[A^*A]^{1/2} \operatorname{tr}[B^*B]^{1/2}$,
we obtain
\begin{gather*}
 \big|\operatorname{tr} \big[
G^{1/2} Q_{i_1}(s)\cdots Q_{i_{2j}}(s) G^{1/2}
\big]\big| \le \operatorname{tr} \big[
G^{1/2}Q_{i_1}(s)Q_{i_1}(s) G^{1/2}
\big]^{1/2} \\
\qquad{}\times \operatorname{tr} \big[
G^{1/2}Q_{i_{2j}}(s) Q_{i_{2j-1}}(s) \cdots Q_{i_2}(s)Q_{i_2}(s) \cdots Q_{i_{2j}}(s)G^{1/2}
\big]^{1/2},
\end{gather*}
which implies that
\begin{gather*}
s^2\big|\operatorname{tr}\big[\big(s^2+\omega\big)^{-1} Q_I(s) \big]\big|\le \big\{s^2 \operatorname{tr}\big[\big(s^2+\omega\big)^{-1} Q_{i_{1}}(s)Q_{i_{1}}(s)\big]\big\}^{1/2} E_I(s)^{1/2}.\label{QSch}
\end{gather*}
Because \begin{gather}
s^2 \operatorname{tr}\big[\big(s^2+\omega\big)^{-1} Q_{i}(s)Q_{i}(s)\big] \le D(s),\label{Q0}
\end{gather}
 we conclude (i).

 From Lemma \ref{EstT} and (\ref{Q^2Formula}), we obtain that
\begin{gather*}
\|Q_i(s)\| \le \frac{\sqrt{2}}{\nu} \|\hat{\varrho}\|_*.\label{Q1}
\end{gather*}
Hence, $E_I(s) \le a^{2j-2}
s^2 \operatorname{tr}\big[\big(s^2+\omega\big)^{-1} Q_{i_{2j}}(s)Q_{i_{2j}}(s)\big]
\le a^{2j-2}D(s)$ by~(\ref{Q0}). Therefore, we obtain~(\ref{SumEI}).

One observes that
\begin{align*}
\sum_{j=1}^{\infty}\sum_{I\in \mathcal{I}_{2j}^{(e)}} s^2\big|\operatorname{tr}\big[\big(s^2+\omega\big)^{-1} Q_I(s) \big]\big|&\underset{(\ref{IDS})}{\le}
\sum_{j=1}^{\infty}\sum_{I\in \mathcal{I}_{2j}^{(e)}} D(s)^{1/2} E_I(s)^{1/2}
 \le \sum_{j=1}^{\infty}2^{2j}a^{j-1}D(s)\\
&\underset{(\ref{SumEI})}{\le} \frac{4}{1-4a}D(s).
 \end{align*}
 In the second inequality, we have used the fact that $\#\mathcal{I}_{2j}^{(e)} \le 2^{2j}$
Accordingly, we get
\begin{gather*}
\sum_{j=1}^{\infty}\sum_{I\in \mathcal{I}_{2j}^{(e)}} |\la Q_I\ra| \le \frac{4}{1-4a} \frac{1}{2\pi} \int_{\BbbR}\dm s \, D(s) \le \frac{4}{1-4a} \frac{e}{\nu}\|\hat{\varrho}\|_*^2
\end{gather*}
by (\ref{EstD}).
\end{proof}

\begin{coro}If $\sqrt{2}\|\hat{\varrho}\|_{*}<\nu$, $1\le \sqrt{2} e \nu_0$ and $\sqrt{2}e \|\hat{\varrho}\|_* <1$, then the r.h.s.\ of \eqref{BindingEn} converges absolutely for every $R>0$. In addition, to exchange the series with the integral, i.e., $\la \cdots \ra$ in~\eqref{BindingEn} $($or \eqref{ExpTwo??}$)$ can be justified.
\end{coro}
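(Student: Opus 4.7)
The plan is to deduce both assertions directly from the quantitative bounds assembled in the immediately preceding lemma; essentially no new analytical work is needed, and the argument is pure bookkeeping combined with Fubini--Tonelli.

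First I would reduce the right-hand side of (\ref{BindingEn}) to the form in (\ref{ExpansionF}): by Lemma~\ref{Odd0} the contributions with $|I|$ odd vanish, so the binding series equals $-\sum_{j\ge 1}\sum_{I\in\mathcal{I}_{2j}^{(e)}} \la Q_I\ra$. Absolute convergence is then literally the conclusion of the previous lemma, which gives $\sum_{j\ge 1}\sum_{I\in \mathcal{I}_{2j}^{(e)}} |\la Q_I\ra|\le \frac{4}{1-4a}\cdot\frac{e}{\nu}\|\hat{\varrho}\|_*^2$, with the bound uniform in $R$. The hypotheses $1\le \sqrt{2}e\nu_0$ and $\sqrt{2}e\|\hat{\varrho}\|_*<1$ secure $\Omega\ge 0$ so that $\la Q_I\ra$ makes sense through Proposition~\ref{EnergyTr}'s analogue, while $\sqrt{2}\|\hat{\varrho}\|_*<\nu$ together with the large-$L$ input from Theorem~\ref{Rto7} (where $c_\infty<1/2$ forces $\lim_{L\to\infty}a\le c_\infty^2<1/4$) delivers $a<1/4$, which is what the geometric series $\sum_j (4a)^{j-1}$ actually requires.

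For the exchange of $\sum$ and $\int$ inside each $\la\,\cdot\,\ra$, I would invoke Fubini--Tonelli on the product measure space $\BbbR\times \bigsqcup_{j\ge 1}\mathcal{I}_{2j}^{(e)}$ (Lebesgue measure times counting measure). The integrable majorant is supplied by the pointwise estimate extracted in the proof of the previous lemma: combining (\ref{IDS}) with (\ref{SumEI}) yields $\sum_{j,I} s^2 \bigl|\Tr\bigl[(s^2+\Omega_0)^{-1} Q_I(s)\bigr]\bigr|\le D(s)^{1/2}\sum_{j,I}E_I(s)^{1/2}\le \frac{4}{1-4a}D(s)$, and $D(s)\in L^1(\BbbR)$ by (\ref{EstD}). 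Applying Tonelli to the absolute-value version and then Fubini to the signed version legitimizes the interchange and identifies the series of integrals with the integral of the series.

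The only conceivable obstacle is verifying that the hypotheses as stated are indeed strong enough to activate the $a<1/4$ bound; the literal assumption $\sqrt{2}\|\hat{\varrho}\|_*<\nu$ only forces $a<1$. As noted above, this is reconciled by observing that everything is eventually taken in the limit $L\to\infty$ after the cutoff, where $\|\hat{\varrho}\|_*\to\|\hat{\varrho}\|_{L^2}$ and the standing assumption $c_\infty<1/2$ of Theorem~\ref{Rto7} gives $a<1/4$ for all sufficiently large $L$. Everything else in the proof is routine: the Schwarz inequality in trace form, the operator bound $\|Q_i(s)\|\le \frac{\sqrt{2}}{\nu}\|\hat{\varrho}\|_*$, and the elementary counting $\#\mathcal{I}_{2j}^{(e)}\le 2^{2j}$ are already in hand.
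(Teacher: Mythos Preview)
Your proposal is correct and follows essentially the same route as the paper: the corollary is stated there without proof, being an immediate consequence of the preceding lemma's bounds $\sum_{j,I}|\la Q_I\ra|\le \frac{4}{1-4a}\cdot\frac{e}{\nu}\|\hat{\varrho}\|_*^2$ together with the $L^1$ majorant $D(s)$, and your Fubini--Tonelli argument makes this explicit. Your observation that the stated hypothesis $\sqrt{2}\|\hat{\varrho}\|_*<\nu$ literally only yields $a<1$ while the lemma needs $a<1/4$ is a genuine point; the paper tacitly relies, as you note, on the remark inside Lemma~\ref{EstInt}(ii) and its two-electron counterpart that $a<1/4$ holds for $L$ sufficiently large under the standing assumption $c_\infty<1/2$ of Theorem~\ref{Rto7}.
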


\section{Proof of Theorem \ref{Rto7}}\label{PfMainT}

For each $I\in \mathcal{I}_{2j}^{(e)}$, $\# I$ indicates the cardinality of $I$. Notice that $\# I$ is different from $|I|=i_1+\cdots+i_{2j}$.

\subsection[Analysis of $\la Q_I\ra$ with $\#I=2$]{Analysis of $\boldsymbol{\la Q_I\ra}$ with $\boldsymbol{\#I=2}$}

We claim that
\begin{gather}
\la Q_1 Q_2\ra=\la Q_2 Q_1\ra=0. \label{QI=2}
\end{gather}
To see this, let $I=\{1, 2\}$ or $\{2, 1\}$. Trivially, $|I|=1+2=3$. By Lemma~\ref{Odd0}, we conclude~(\ref{QI=2}).

\subsection[Analysis of $\la Q_I\ra$ with $\# I=4$]{Analysis of $\boldsymbol{\la Q_I\ra}$ with $\boldsymbol{\# I=4}$}

In this subsection, we will examine the following terms:
\begin{gather*}
\sum_{i_1, \dots, i_{4}\in \{1,2\}\atop
\{i_1, \dots, i_{4}\}\neq \{1, 1, 1, 1\}, \{2,2,2, 2\}
}\la Q_{i_1}Q_{i_2}Q_{i_3} Q_{i_{4}}\ra =
\mathscr{A}+\mathscr{B},
\end{gather*}
where
\begin{gather}
\mathscr{A}= \la Q_1 Q_1 Q_2 Q_2\ra +\la Q_2Q_2 Q_1 Q_1\ra\label{DefmathA}
\end{gather}
and
\begin{gather}
\mathscr{B}=\la Q_1Q_2Q_1Q_2\ra+\la Q_2 Q_1 Q_2 Q_1\ra+\la Q_2 Q_1 Q_1 Q_2\ra
+\la Q_1 Q_2 Q_2 Q_1\ra.\label{DefmathB}
\end{gather}
In Appendix~\ref{NumC}, we will prove the following lemmas.

\begin{lemm}\label{MainTerm}
We have
\begin{gather*}
\lim_{R\to \infty} \lim_{\Lambda\to \infty} \lim_{L\to \infty} R^7 \mathscr{A} =\frac{23}{4\pi}\left(\frac{1}{2\pi}\right)^2\left(
\frac{1}{4}\alpha_{\mathrm{E, at}}
\right)^2.
\end{gather*}
\end{lemm}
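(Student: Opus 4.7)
The strategy is block-diagonalization followed by rescaling. Set $G:=(s^2+\Omega_0)^{-1}$. Since $Q_i$ couples only the $i$-th electron block to the photon block (block~$3$), $Q_1GQ_1$ and $Q_2GQ_2$ are block-diagonal, and after moving $G$ inside via trace cyclicity (yielding $\operatorname{tr}[G^2\,Q_1\,G\,Q_1\,G\,Q_2\,G\,Q_2]$), only the photon $(3,3)$ component of the product survives. The scalar factor $(s^2+e^2\nu^2)^{-2}$ from the electron slots pulls out, and one more cyclic move brings the $\ell^2$-trace onto $\BbbC^3$, giving
\[
\la Q_1Q_1Q_2Q_2\ra = \frac{e^4}{2\pi}\int_{-\infty}^{\infty}\dm s\,\frac{s^2}{(s^2+e^2\nu^2)^2}\,\operatorname{tr}_{\BbbC^3}[M(r)\,M^{(2)}(r)],
\]
where $M(r):=\mathbb{T}(0)G_S\mathbb{T}^*(r)$, $M^{(2)}(r):=\mathbb{T}(0)G_S^2\mathbb{T}^*(r)$, and $G_S:=(s^2+S_0)^{-1}$. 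Because these kernels depend symmetrically on their arguments, $\la Q_2Q_2Q_1Q_1\ra$ contributes identically and the coefficient in~$\mathscr{A}$ doubles.

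For the limits $L,\Lambda\to\infty$, the polarization identity $\sum_{\lambda=1,2}\vepsilon_i(k,\lambda)\vepsilon_j(k,\lambda)=\delta_{ij}-\hat k_i\hat k_j$ combined with $\cos(k\cdot 0)\cos(k\cdot r)+\sin(k\cdot 0)\sin(k\cdot r)=\cos(k\cdot r)$ (the $\lambda=1,2$ contribution added to the $\lambda=3,4$ contribution) yields
\[
M(r)_{ij} = \int_{\BbbR^3}\dm k\,(\delta_{ij}-\hat k_i\hat k_j)\,\frac{|k|^2\hat\varrho(k)^2\cos(k\cdot r)}{s^2+|k|^2},
\]
with the analogous formula (carrying $(s^2+|k|^2)^{-2}$) for $M^{(2)}$. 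Interchange of the limits with the $s$-integration is justified by dominated convergence via Lemmas~\ref{EstT},~\ref{EstInt} and the Schwartz decay of $\hat\varrho$. Evaluating $\operatorname{tr}_{\BbbC^3}[M(r)M^{(2)}(r)]$ via $\sum_{ij}(\delta_{ij}-\hat k_i\hat k_j)(\delta_{ij}-\hat k'_i\hat k'_j)=1+(\hat k\cdot\hat k')^2$ produces an explicit triple integral. Rescaling $s=u/R$, $k=q/R$, $k'=q'/R$ gives a Jacobian $R^{-7}$; the regulators $(s^2+e^2\nu^2)^{-2}\to(e^2\nu^2)^{-2}=(4e^4\nu_0^4)^{-1}$ and $\hat\varrho(q/R)\to\hat\varrho(0)=(2\pi)^{-3/2}$ converge pointwise with Schwartz majorant, and with $r=R\hat e_3$ the retardation factor becomes $\cos(q_3)\cos(q'_3)$, independent of~$R$. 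Hence
\[
\lim_{R\to\infty}R^7\mathscr{A} = \frac{\hat\varrho(0)^4}{4\pi\nu_0^4}\,I,\qquad I := \int_{\BbbR}\dm u\int_{\BbbR^3\times\BbbR^3}\dm q\,\dm q'\,\frac{u^2|q|^2|q'|^2\,[1+(\hat q\cdot\hat q')^2]\cos(q_3)\cos(q'_3)}{(u^2+|q|^2)(u^2+|q'|^2)^2}.
\]

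The \emph{main obstacle} is the explicit evaluation $I=23\pi^4$. The natural path is to perform the angular integrations over $\hat q$ and $\hat q'$ first (the transverse projectors paired with the axial $\cos(q_3)$ collapse onto vector spherical harmonics aligned with $\hat e_3$), reducing $I$ to a radial-plus-$u$ integral attackable by contour/residue calculus. The signature Casimir--Polder number $23$ emerges from the interference between the $1$ and $(\hat q\cdot\hat q')^2$ pieces of the transverse-projector contraction. Combining $I=23\pi^4$ with $\hat\varrho(0)^4=(2\pi)^{-6}$ and $\alpha_{\mathrm{E,at}}=\nu_0^{-2}$ then reproduces $\frac{23}{4\pi}\left(\frac{1}{2\pi}\right)^2\left(\frac{1}{4}\alpha_{\mathrm{E,at}}\right)^2$ exactly, completing the proof.
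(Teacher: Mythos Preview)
Your reduction to the $\BbbC^3$-trace and the formula
\[
\operatorname{tr}_{\BbbC^3}\big[M(r)M^{(2)}(r)\big]=\int\dm k\,\dm k'\,\frac{|k|^2|k'|^2\hat\varrho(k)^2\hat\varrho(k')^2\big[1+(\hat k\cdot\hat k')^2\big]\cos(k\cdot r)\cos(k'\cdot r)}{(s^2+|k|^2)(s^2+|k'|^2)^2}
\]
is correct and coincides with the paper's (\ref{trTTTT}). The genuine gap is the passage to the limit $R\to\infty$ after your rescaling $s=u/R$, $k=q/R$, $k'=q'/R$. You invoke dominated convergence ``with Schwartz majorant'', but no $R$-independent $L^1$ majorant exists: bounding $|\cos(q_3)|\le 1$ and dropping $\hat\varrho(q/R)$, the $q$-integrand behaves like $|q|^2/(u^2+|q|^2)\to 1$ at infinity, which is not integrable over $\BbbR^3$; the only thing that makes the finite-$R$ integral converge is the regulator $\hat\varrho(q/R)^2$, whose effective support grows with $R$. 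Your target integral $I$ is therefore not absolutely convergent, and it is not even clear in which sense (iterated? principal value? distributional?) you mean it. This is not a technicality one can wave away: the oscillatory nature of $\cos(q_3)\cos(q'_3)$ is precisely what produces the $R^{-7}$ scaling, and extracting it requires keeping the regulator in place until the oscillations have been resolved.

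The paper handles exactly this difficulty by a different route. It performs the $s$-integration \emph{first} (yielding the explicit kernel $I_{2,2,1}$), rescales only $k_1,k_2$ by $R$, and then, rather than dropping $\hat\varrho_{\mathrm{rad}}(r_j/R)$, rewrites the radial oscillatory integrals via the Laplace identity $\frac{1}{r_1+r_2}=\int_0^\infty\ex^{-t(r_1+r_2)}\dm t$ and the convolution theorem, so that $\hat\varrho_{\mathrm{rad}}^2$ enters only through its rapidly decaying Fourier transform $\vphi$. In that representation the integrand is dominated uniformly in $R$ and the limit is legitimate by ordinary dominated convergence; the leading term is then evaluated using the closed-form identity $\int_0^\infty\{\tfrac{3}{2}A(t)^2-A(t)B(t)+\tfrac{3}{2}B(t)^2\}\,\dm t=92\pi^3$ from \cite{MS1}, and the remainder $\mathbf{I}_{\mathrm{ir}}$ is shown to be $O(R^{-9})$. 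Your proposal needs either this machinery or an equivalent oscillatory-integral argument (e.g., integration by parts in $q,q'$ exploiting the Schwartz regularity of $\hat\varrho$) before the limit can be taken; as written, the ``Schwartz majorant'' step is false.
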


\begin{lemm}\label{Error}We have
\begin{gather*}
\lim_{R\to \infty} \lim_{\Lambda\to \infty} \lim_{L\to \infty} R^9\la Q_1Q_2Q_2Q_1\ra
=\lim_{R\to \infty} \lim_{\Lambda\to \infty} \lim_{L\to \infty} R^9\la Q_2Q_1Q_1Q_2\ra
=\frac{g}{e^2\nu^6},
\end{gather*}
where $g$ is a constant independent of $e$, $\nu_0$ and~$R$. Moreover, $\la Q_1Q_2Q_1Q_2\ra=\la Q_2Q_1Q_2Q_1\ra=0$. Thus, $\lim\limits_{R\to \infty} \lim\limits_{\Lambda\to \infty} \lim\limits_{L\to \infty} R^9\mathscr{B}=2g/e^2\nu^6$.
\end{lemm}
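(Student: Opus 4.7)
The plan is to reduce each of the four terms in $\mathscr{B}$ to an explicit integral in $k$, $k'$, and $s$, and then isolate the leading $R^{-9}$ behavior by a rescaling argument in the low-momentum region.

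First I would establish the vanishings $\la Q_1 Q_2 Q_1 Q_2\ra = 0$ and $\la Q_2 Q_1 Q_2 Q_1\ra = 0$ by block-matrix algebra, valid at every finite cutoff $(L,\Lambda)$. In the $3\times 3$ block structure associated with $\Omega_0 = \mathrm{diag}(e^2\nu^2\one_3, e^2\nu^2\one_3, S_0)$, the second row of $Q_1$ is identically zero and so is the first row of $Q_2$. Writing $G = (s^2+\Omega_0)^{-1}$ (block-diagonal), one checks that $Q_1 G Q_2$ has its only non-zero block at position $(1,2)$; multiplying on the right by $G Q_1$ then annihilates it, because the second row of $Q_1$ vanishes. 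Thus $Q_1 G Q_2 G Q_1 \equiv 0$, which forces the $s$-integrand of $\la Q_1 Q_2 Q_1 Q_2\ra$ to vanish pointwise. The argument for $\la Q_2 Q_1 Q_2 Q_1\ra$ is identical with the roles of $Q_1$ and $Q_2$ swapped.

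Second, for the two surviving terms I would carry out the analogous block product. With $M(0,r;s) := \mathbb{T}(0)(s^2+S_0)^{-1}\mathbb{T}^*(r)$ (a $3\times 3$ matrix on $\BbbC^3$), one finds
\begin{gather*}
\la Q_1 Q_2 Q_2 Q_1\ra \;=\; \la Q_2 Q_1 Q_1 Q_2\ra \;=\; \frac{e^4}{2\pi}\int_{\BbbR}\dm s\,\frac{s^2}{(s^2+e^2\nu^2)^3}\,\|M(0,r;s)\|_{\mathrm{HS}}^2,
\end{gather*}
the second equality coming from $M(r,0;s) = M(0,r;s)^*$. Letting $L\to\infty$ and $\Lambda\to\infty$ (justified by the uniform bounds of Lemma~\ref{EstInt}) and applying $\sum_{\lambda=1,2}\vepsilon_i(k,\lambda)\vepsilon_j(k,\lambda) = \delta_{ij}-\hat{k}_i\hat{k}_j$ gives
\begin{gather*}
M(0,r;s)_{ij} \;=\; \int_{\BbbR^3}\dm k\,(\delta_{ij}-\hat{k}_i\hat{k}_j)\,\frac{|k|^2\hat{\varrho}(k)^2\cos(k\cdot r)}{s^2+|k|^2}.
\end{gather*}

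Third, expanding $\|M\|_{\mathrm{HS}}^2$ as a double $k,k'$ integral with angular factor $(1+(\hat{k}\cdot\hat{k}')^2)$ and carrying out the $s$-integral by residues, one sees that the simple poles at $s = \pm i|k|$ and $s = \pm i|k'|$ produce a contribution of the form $\pi(e\nu)^{-6}(|k|+|k'|)^{-1}$ in the regime $|k|,|k'|\to 0$, while the triple pole at $s = \pm ie\nu$ contributes only pieces which are smooth in $(k,k')$ near the origin. Since $\hat{\varrho}$ is Schwartz, the smooth pieces yield rapidly decaying contributions in $R$ via the oscillatory factor $\cos(k\cdot r)\cos(k'\cdot r)$; consequently the $R^{-9}$ decay is generated entirely by the non-analytic factor $(|k|+|k'|)^{-1}$ at the origin of $\BbbR^6$.

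Finally, to extract the asymptotics I would rescale $k = u/R$, $k' = u'/R$: the Jacobian contributes $R^{-6}$, $|k|^2|k'|^2$ contributes $R^{-4}$, and $(|k|+|k'|)^{-1}$ contributes $R$, giving in total $R^{-9}$. Using $\hat{\varrho}(k)\to\hat{\varrho}(0) = (2\pi)^{-3/2}$ after a smooth cut-off splitting the integrand into a low-momentum piece (where rescaling and dominated convergence apply) and a high-momentum Schwartz piece (whose oscillatory integral decays rapidly), the rescaled integral tends to a universal constant
\begin{gather*}
g \;=\; \frac{1}{2(2\pi)^{6}}\int_{\BbbR^3}\!\!\int_{\BbbR^3}\dm u\,\dm u'\,\frac{(1+(\hat{u}\cdot\hat{u}')^2)\,|u|^2|u'|^2}{|u|+|u'|}\cos(u_3)\cos(u'_3),
\end{gather*}
interpreted as a conditionally convergent oscillatory integral, independent of $e$, $\nu_0$ and $R$; the prefactor collects as $e^4\cdot(e\nu)^{-6} = 1/(e^2\nu^6)$, yielding $\mathscr{B} \sim 2g/(e^2\nu^6 R^9)$. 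The main technical obstacle is to justify the interchange of the $s$-integral, the $k,k'$-integrals, and the $R\to\infty$ limit, since the six-dimensional oscillatory integral defining $g$ converges only conditionally; this requires carefully separating the smooth (rapidly decaying) part of the kernel from the singular $(|k|+|k'|)^{-1}$ piece and applying the absolute bounds of the Lemma~\ref{EstInt} type uniformly in $R$.
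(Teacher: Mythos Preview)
Your proposal is correct and follows essentially the same route as the paper. Both arguments (i) kill $\la Q_1Q_2Q_1Q_2\ra$ and $\la Q_2Q_1Q_2Q_1\ra$ by the block structure (your observation that $Q_1GQ_2GQ_1\equiv 0$ is exactly what the formulas in Appendix~\ref{List} encode), (ii) reduce $\la Q_2Q_1Q_1Q_2\ra$ to the double $k$-integral with weight $I_{3,1,1}(e^2\nu^2;|k|^2;|k'|^2)$ and angular factor $1+(\hat k\cdot\hat k')^2$, and (iii) extract the $R^{-9}$ scaling from the leading piece $(e\nu)^{-6}(|k|+|k'|)^{-1}$ of that integral. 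The only cosmetic difference is the order of operations: the paper rescales $k\mapsto k/R$ first and then invokes the explicit expansion~(\ref{I311}) of $I_{3,1,1}$, whereas you compute the $s$-integral by residues first and then rescale; the bookkeeping $R^{-6}\cdot R^{-4}\cdot R = R^{-9}$ is identical. Your remark that the triple pole at $\pm ie\nu$ produces only pieces smooth in $(k,k')$ near the origin (hence rapidly decaying under the oscillation $\cos(k\cdot r)\cos(k'\cdot r)$) is the counterpart of the paper's splitting $\mathbf I_{\mathrm{re}}+\mathbf I_{\mathrm{ir}}$ in the proof of Lemma~\ref{MainTerm}, and the technical caveat you flag about the conditionally convergent oscillatory integral is precisely what the paper means by ``we carefully have to treat the oscillatory integral as we did in the proof of Lemma~\ref{MainTerm}''.
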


\subsection[Analysis of $\la Q_I\ra$ with $\# I\ge 6$]{Analysis of $\boldsymbol{\la Q_I\ra}$ with $\boldsymbol{\# I\ge 6}$}

Let $I=\{i_1, \dots, i_{2j}\}\in \mathcal{I}_{2j}^{(e)}$.
We will examine the following two cases, separately.
\begin{itemize}\itemsep=0pt
\item[]Case 1: There exists a unique number $\ell\in \{1, 2,\dots, 2j-1\}$ such that $i_{\ell}+i_{\ell+1}=3$.
\item[]Case 2: There exist at least two numbers $m, n\in \{1, 2, \dots, 2j-1\}$ such that
$i_{m}+i_{m+1}=i_{n}+i_{n+1}=3$.
\end{itemize}
\begin{exam}For readers' convenience, we provide some examples below:
\begin{itemize}\itemsep=0pt
\item[] Case 1: $
I=\big\{1,1,\overbrace{1,2}^{i_3+i_4=3},2,2,2, 2\big\}$, $\big\{1,1,1,1,\overbrace{1,2}^{i_5+i_6=3},2,2\big\}$.

\item[] Case 2: $I=\big\{1,1,\overbrace{1,2}^{i_3+i_4=3},2,2,2,\overbrace{2,1}^{i_8+i_9=3},1\big\}$, $\big\{1,1,1,1,1,\overbrace{1,2}^{i_6+i_7=3},\overbrace{2,1}^{i_8+i_9=3},1\big\}$.
\end{itemize}
\end{exam}

\subsubsection{Case 1}

In Appendix \ref{NumC}, we will prove the following lemma.
\begin{lemm}\label{Case1}
Assume that $I$ satisfies the condition in Case~$1$. If $R$ is sufficiently large, then we have
\begin{gather*}
\lim_{\Lambda\to \infty} \lim_{L\to \infty}
|\la Q_I\ra|\le
R^{-9}
\alpha_{\mathrm{E, at}}^2\bigg(
\frac{\|\hat{\varrho}\|^2_{L^2}}{3\nu^2}
\bigg)^{\# I/2-2} C,
\end{gather*}
where $C$ is a positive number independent of $e$, $I$, $R$ and $\nu_0$.
\end{lemm}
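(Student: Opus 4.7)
My plan is to reduce $\la Q_I\ra$ for a Case~1 sequence to an explicit $s$-integral whose $R$-dependence is confined to two oscillatory matrix factors, and then to estimate it by combining a H\"older-type trace bound with the same kind of $s,k$-integration used in the proof of Lemma~\ref{MainTerm}.

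First, the block form of $Q_1, Q_2$ from Section~\ref{Dai2}, together with the Case~1 hypothesis and $|I|$ even, forces $I$ to be of the form $(1^{2p}, 2^{2q})$ or $(2^{2q}, 1^{2p})$ with $p, q\ge 1$ and $p+q=j$. A direct matrix-product computation shows that the only nonzero block of $Q_I$ is the $(3,3)$ field block, and, introducing the $3\times 3$ matrix $N(x, y; f):=\mathbb{T}(x) f(S_0) \mathbb{T}^*(y)$, cyclicity of the trace rewrites
\[
s^2 \operatorname{tr}\big[(s^2+\Omega_0)^{-1} Q_I(s)\big] = e^{2j} s^2 G_0(s)^j \operatorname{tr}_{\BbbC^3}\big[\text{(product of $j$ $N$'s)}\big],
\]
with $G_0(s) = (s^2+e^2\nu^2)^{-1}$. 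Exactly two of the $N$'s are \emph{mixed} --- of the form $N(r, 0;\cdot)$ and $N(0, r;\cdot)$, coming from the two cyclic transitions between $Q_1$'s and $Q_2$'s in $I$ --- while the remaining $j-2$ are \emph{same-point} $N(x, x;\cdot)$.

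A short calculation using $\vepsilon(k, 3)=\vepsilon(k, 1)$, $\vepsilon(k, 4)=\vepsilon(k, 2)$, and $\cos^2(k\cdot x)+\sin^2(k\cdot x)=1$ shows that $N(x, x; f)$ is in fact \emph{independent of $x$}; the same-point factors thus carry no $R$-dependence, and all of the $R$-decay must originate in the two mixed factors (whose entries contain the oscillation $\cos(k\cdot r)$). I would then apply H\"older to the $\BbbC^3$-trace to separate the mixed from the same-point factors: each same-point factor is bounded via Lemma~\ref{EstT} and, after the $L\to\infty$ limit, contributes the claimed geometric factor $(\|\hat\varrho\|_{L^2}^2/(3\nu^2))^{\#I/2-2}$, while two of the $G_0(s)$ factors (integrated in $s$) produce $\alpha_{\mathrm{E,at}}^2=\nu_0^{-4}$.

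The residual $R$-dependent $s$-integral then involves only the two mixed matrices together with a surplus $G_0(s)^{j-2}$, and it is analysed by the technique behind Lemma~\ref{MainTerm}: after the Fourier--Yukawa identity $\int \dm k\,(s^2+|k|^2)^{-1}\cos(k\cdot r)\sim R^{-1}e^{-|s|R}$ and the substitution $u=|s|R$, each surplus factor $(s^2+e^2\nu^2)^{-1}$ produces extra inverse powers of $R$, giving at least $R^{-9}$ whenever $j\ge 3$ (equivalently $\#I\ge 6$). The main obstacle is this last step: plain operator-norm bounds on the mixed factors give only $R^{-7}$, and the improvement to $R^{-9}$ genuinely requires exploiting the additional scalar damping $G_0(s)^{j-2}$ in the joint $(s, k_1, k_2)$ integration, paralleling the rather delicate computation of Lemma~\ref{MainTerm}. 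Uniformity in $L, \Lambda$, needed for the thermodynamic and UV limits, follows from the $L, \Lambda$-independent bounds of Lemmas~\ref{EstT}--\ref{EstInt} combined with dominated convergence.
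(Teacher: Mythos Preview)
Your structural reduction is right: in Case~1, $\la Q_I\ra=\la Q_1^{2m}Q_2^{2n}\ra$ (or its transpose), and cyclicity indeed produces a $\BbbC^3$-trace with exactly two mixed (oscillatory) factors and $j-2$ same-point ones. But the proposed mechanism for $R^{-9}$ fails. After you bound the same-point $N$'s by $s$-independent constants via H\"older, what remains is, up to those constants, $\int ds\,s^2 G_0(s)^{j}\operatorname{tr}_{\BbbC^3}[\text{two mixed}]$; since the trace weight $(s^2+S_0)^{-1}$ necessarily lands on one of the two mixed momenta, this residual is precisely $G_0(s)^{j-2}$ times the integrand of $\la Q_1^2Q_2^2\ra$. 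Under the scaling $u=|s|R$ each surplus factor satisfies $G_0(u/R)=(u^2/R^2+e^2\nu^2)^{-1}\to(e\nu)^{-2}$, producing \emph{no} power of $R^{-1}$ whatsoever. Your estimate therefore collapses to a constant multiple of $\la Q_1^2Q_2^2\ra\sim R^{-7}$, one full power short, and this would leave a nonvanishing contribution in~(\ref{Last0}). (A smaller point: an operator-norm bound on a same-point factor yields $2\|\hat{\varrho}\|_*^2$, not $\|\hat{\varrho}\|_*^2/3$; the $1/3$ in the statement comes from the exact angular identity $\sum_{\lambda}\int d\Omega\,|\vepsilon(k,\lambda)\ra\la\vepsilon(k,\lambda)|=\tfrac{4\pi}{3}\one_3$, which H\"older does not see.)

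The paper does not separate by H\"older. It writes $\la Q_1^{2m}Q_2^{2n}\ra$ as a single $(m{+}n)$-fold $k$-integral with propagator $I_{m+n,2,1^{m+n-1}}$ and polarization chain $\la\vepsilon_1|\vepsilon_2\ra\cdots\la\vepsilon_{m+n}|\vepsilon_1\ra$, rescales only the two oscillatory momenta together with $s\to s/R$, then integrates the angular variables of the $j-2$ non-oscillatory momenta \emph{exactly} (this is the origin of $(4\pi/3)^{j-2}$ and of the constant $1/(3\nu^2)$), and only afterwards bounds the remaining radial piece $\mathscr{F}_R$ pointwise in the rescaled $s$ by its $R\to\infty$ value. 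The crucial difference from your scheme is that the extra $s$-damping is supplied by the \emph{photon} propagators $(r_j^2+R^{-2}s^2)^{-1}$ attached to the unscaled momenta, not by the atomic $G_0$'s; after this bound the surviving $s$-integral is $\int ds\,s^2(s^2+r_m^2)^{-1}(s^2+r_{m+1}^2)^{-1}=1/(r_m+r_{m+1})$, one $R$-power better than the $I_{2,2,1}$-type integral your H\"older route leaves, and this is what produces the claimed $R^{-9}$.
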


\subsubsection{Case 2}
The purpose here is to prove Lemma \ref{QIEst} below. To this end, we begin with the following lemma.

\begin{lemm}\label{EstQG}Let $G=\big(s^2+\Omega_0\big)^{-1}$. For each $j\in \{1, 2\}$, we have
\begin{gather*}
\|Q_j G\| \le D(\hat{\varrho}),
\end{gather*}
where $D(\hat{\varrho})=\max\big\{
\sqrt{2}e \big\||k|^{-1} \hat{\varrho}\big\|_{*}, \frac{\sqrt{2}}{e\nu^2} \||k| \hat{\varrho}\|_{*}\big\}$.
\end{lemm}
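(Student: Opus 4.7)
The plan is to exploit the explicit block structure of $Q_j$ and $\Omega_0$ to reduce the claim to norm bounds on $\mathbb{T}(x)(s^2+S_0)^{-1}$ and $\mathbb{T}(x)$, which I will control using the polarization sum identity $\sum_{\lambda=1,2}|\vepsilon(k,\lambda)\ra\la \vepsilon(k,\lambda)|=\one_3-|\hat{k}\ra\la \hat{k}|$.

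First, I would write $G=(s^2+\Omega_0)^{-1}=\operatorname{diag}\big((s^2+e^2\nu^2)^{-1},(s^2+e^2\nu^2)^{-1},(s^2+S_0)^{-1}\big)$ and observe that $Q_jG$ is a $3\times3$ block operator whose only nonzero entries sit at positions $(1,3)$ and $(3,1)$ (for $j=1$) or $(2,3)$ and $(3,2)$ (for $j=2$). Because the two nonzero blocks lie in disjoint rows and columns, a direct computation of $(Q_jG)^*(Q_jG)$ gives
\begin{gather*}
\|Q_jG\|=e\,\max\Big(\big\|\mathbb{T}(x_j)(s^2+S_0)^{-1}\big\|,\,(s^2+e^2\nu^2)^{-1}\big\|\mathbb{T}^*(x_j)\big\|\Big),
\end{gather*}
where $x_1=0$ and $x_2=r$. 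So it suffices to bound each of these two factors uniformly in $s\in\BbbR$ and $r\in \BbbR^3$.

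Next, I would compute $\mathbb{T}(x)(s^2+S_0)^{-2}\mathbb{T}^*(x)$ on $\BbbC^3$. Since $\vepsilon(k,3)=\vepsilon(k,1)$, $\vepsilon(k,4)=\vepsilon(k,2)$, and $F_x(k,1)^2+F_x(k,3)^2=F_x(k,2)^2+F_x(k,4)^2=(2\pi/L)^3|k|^2|\hat\varrho(k)|^2$ (using $\cos^2+\sin^2=1$, which kills all $x$-dependence), this operator equals
\begin{gather*}
\sum_{k\in M}\Big(\tfrac{2\pi}{L}\Big)^{3}\frac{|k|^2|\hat\varrho(k)|^2}{(s^2+|k|^2)^2}\big(\one_3-|\hat k\ra\la \hat k|\big).
\end{gather*}
Its operator norm is therefore at most $\sum_k(2\pi/L)^3|k|^2|\hat\varrho(k)|^2/(s^2+|k|^2)^2\le \||k|^{-1}\hat\varrho\|_*^2$, because $(s^2+|k|^2)^2\ge|k|^4$. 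Hence $\|\mathbb{T}(x)(s^2+S_0)^{-1}\|\le \||k|^{-1}\hat\varrho\|_*$. An entirely analogous calculation (taking $s=0$ in the preceding display, formally) gives $\|\mathbb{T}(x)\|^2=\|\mathbb{T}(x)\mathbb{T}^*(x)\|\le \||k|\hat\varrho\|_*^2$, i.e., $\|\mathbb{T}^*(x)\|\le\||k|\hat\varrho\|_*$.

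Assembling everything, and using the trivial estimate $(s^2+e^2\nu^2)^{-1}\le(e^2\nu^2)^{-1}=(e\nu)^{-2}$, I obtain
\begin{gather*}
\|Q_jG\|\le \max\!\Big(e\,\||k|^{-1}\hat\varrho\|_*,\,\tfrac{1}{e\nu^2}\,\||k|\hat\varrho\|_*\Big)\le D(\hat\varrho),
\end{gather*}
finishing the proof. There is no serious obstacle here; the only thing worth flagging is the book-keeping step where one checks that the sine/cosine phases in $F_r(k,\lambda)$ combine to give an $r$-independent bound via $\cos^2(k\cdot r)+\sin^2(k\cdot r)=1$, which is what makes the estimate uniform in $R$ and is the reason both $j=1,2$ cases are controlled by the same constant $D(\hat\varrho)$.
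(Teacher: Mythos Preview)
Your proof is correct and follows essentially the same line as the paper's: both reduce the estimate, via the block structure of $Q_jG$, to bounding $\|\mathbb{T}(x)(s^2+S_0)^{-1}\|$ and $\|(s^2+e^2\nu^2)^{-1}\mathbb{T}^*(x)\|$. The only noteworthy difference is that you compute $\mathbb{T}(x)(s^2+S_0)^{-2}\mathbb{T}^*(x)$ and invoke the polarization identity $\sum_{\lambda=1,2}|\vepsilon(k,\lambda)\ra\la\vepsilon(k,\lambda)|=\one_3-|\hat k\ra\la\hat k|$, which yields the constants $e\||k|^{-1}\hat\varrho\|_*$ and $(e\nu^2)^{-1}\||k|\hat\varrho\|_*$, a factor $\sqrt{2}$ sharper than the paper's bounds obtained by the cruder estimate $\sum_{\lambda=1}^4\|\vepsilon(k,\lambda)\|^2\,\cdot=2$; since your constants are dominated by $D(\hat\varrho)$, the conclusion follows.
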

\begin{proof} By (\ref{TT*}) and (\ref{T*T}), we readily show that
\begin{gather*}
\big\|\mathbb{T}(r) \big(s^2+S_0\big)^{-1}\big\| \le\sqrt{2}\big\| |k|^{-1} \hat{\varrho}\big\|_{*},\\
\big\|\big(s^2+e^2\nu^2\big)^{-1} \mathbb{T}^*(r)\big\| \le \frac{\sqrt{2}}{e^2\nu^2} \||k| \hat{\varrho}\|_{*}.
\end{gather*}
This concludes the proof of Lemma \ref{EstQG}. \end{proof}

\begin{lemm}\label{QIEst}
Let $j\ge 3$. For each $I\in \mathcal{I}_{2j}^{(e)}$ satisfying the condition in Case~$2$, we have
\begin{gather*}
|\la Q_I\ra| \le c_L^{2j-4} \la Q_2Q_1Q_1Q_2\ra,
\end{gather*}
where
$c_L=\max\big\{D(\hat{\varrho}), \frac{\sqrt{2}} {\nu} \|\hat{\varrho}\|_{*}\big\}$.
\end{lemm}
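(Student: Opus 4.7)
The plan is to establish a pointwise-in-$s$ domination of the integrand of $\la Q_I\ra$ by $c_L^{2j-4}$ times that of $\la Q_2 Q_1 Q_1 Q_2\ra$. Writing $G = (s^2+\Omega_0)^{-1}$ and $Q_i(s) = G^{1/2} Q_i G^{1/2}$, the goal is
\[
s^2 \bigl|\operatorname{tr}\bigl[G\, Q_{i_1}(s) \cdots Q_{i_{2j}}(s)\bigr]\bigr| \le c_L^{2j-4}\, s^2\, \operatorname{tr}\bigl[G\, Q_2(s) Q_1(s) Q_1(s) Q_2(s)\bigr],
\]
after which integration against $\tfrac{1}{2\pi} ds$ yields the lemma. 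The two workhorse estimates are $\|Q_i(s)\| \le \tfrac{\sqrt{2}}{\nu}\|\hat{\varrho}\|_*$ (from Lemma~\ref{EstT} together with~(\ref{Q^2Formula})) and $\|Q_j G\| \le D(\hat{\varrho})$ (Lemma~\ref{EstQG}), both dominated by $c_L$.

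To extract the four-letter core, I would use the Case-2 hypothesis together with cyclic invariance of the trace. In any cyclic word on $\{1,2\}$ the numbers of $21$- and $12$-transitions coincide, so having at least two transitions in $I$ forces at least one transition of each type. After a cyclic rotation of $I$ the word may be assumed to begin with $2, 1$, and the matching $12$-transition lies among the remaining $2j-2$ letters. Applying Cauchy--Schwarz in the form $|\operatorname{tr}[X^* Y]|^2 \le \|X\|_{\mathrm{HS}}^2 \|Y\|_{\mathrm{HS}}^2$ with $X^* = G^{1/2} Q_{i_1}(s) Q_{i_2}(s)$ and $Y = Q_{i_3}(s) \cdots Q_{i_{2j}}(s) G^{1/2}$ yields
\[
|\operatorname{tr}[G\, Q_I(s)]|^2 \le \operatorname{tr}\bigl[G\, Q_2(s) Q_1(s) Q_1(s) Q_2(s)\bigr] \cdot \operatorname{tr}\bigl[G\, Q_{i_{2j}}(s)\cdots Q_{i_3}(s) Q_{i_3}(s)\cdots Q_{i_{2j}}(s)\bigr].
\]
The first factor is exactly the target core trace; the second is a positive palindromic trace of ``length'' $4j-4$.

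To bound the palindromic residual by $c_L^{4j-8}$ times the same core trace, I would iterate the Schwarz step, peeling off pairs of outermost $Q_i(s)$'s using $\|Q_i(s)\| \le c_L$, until the residual reduces to the four-letter palindrome $\operatorname{tr}[G\, Q_a(s) Q_b(s) Q_b(s) Q_a(s)]$ with $\{a, b\} = \{1, 2\}$. The symmetry $\la Q_1 Q_2 Q_2 Q_1\ra = \la Q_2 Q_1 Q_1 Q_2\ra$, coming from the electron-label swap and the corresponding invariance of $\Omega_0$, allows replacement of the residual core by the target core. Taking the outer square root converts $c_L^{4j-8}$ into $c_L^{2j-4}$ and gives the pointwise bound; integrating in $s$ concludes the proof.

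The main obstacle is the combinatorial bookkeeping across the iterated Schwarz steps: tracking the exact power of $c_L$ produced after a sequence of square roots, while keeping the residual core aligned with the target, requires careful accounting. A secondary subtlety arises for exceptional configurations of $I$, such as $\#I = 6$ with adjacent transitions, where the preliminary cyclic rotation to $i_1 i_2 = 2, 1$ may leave the residual incompatible with the direct iteration; in these small cases a more tailored estimate (or a direct operator-norm bound on the entire product) is likely required.
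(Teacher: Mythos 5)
Your overall plan — split the trace by Cauchy--Schwarz so one factor becomes the four-letter core and the other is a positive ``palindrome'' to be contracted down to the core with a power of $c_L$ — is indeed the right shape, and it matches the paper's strategy. But two steps, as you describe them, do not go through.

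First, the preliminary ``cyclic rotation of $I$'' is not available. The quantity $\la Q_I\ra$ is \emph{not} invariant under cyclic permutation of the word $I$: writing everything out, the integrand is $\operatorname{tr}\big[G^{2}Q_{i_1}GQ_{i_2}G\cdots GQ_{i_{2j}}\big]$ with $G=(s^2+\Omega_0)^{-1}$, and rotating $I$ shifts which pair of neighbours is separated by $G^2$ rather than $G$. Since $G$ does not commute with the $Q_i$, the rotated trace is generically different. (Reversal of $I$ \emph{is} a symmetry, by transposition, but that is not enough here.) So you cannot assume WLOG that $i_1i_2=2,1$, and your $\Phi_1$ factor $\operatorname{tr}\big[G^{1/2}Q_{i_1}(s)Q_{i_2}(s)Q_{i_2}(s)Q_{i_1}(s)G^{1/2}\big]$ need not be the core trace. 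The paper avoids this by splitting $I=A\cup\{1,2\}\cup B\cup\{1,2\}\cup C$ at a point \emph{inside} the word and accepting that the resulting $\Phi_1$ carries an extra outer factor $Q_A(s)\cdots Q_A^*(s)$; that outer factor is then stripped off by conjugating with $G^{\pm1/2}$ inside the trace and using $\|Q_jG\|\le D(\hat\varrho)$.

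Second, and related, the peeling of the residual palindrome by ``$\|Q_i(s)\|\le c_L$'' alone does not land on the core. If you peel from the inner end of $Q_{i_3}(s)\cdots Q_{i_{2j}}(s)G^{1/2}$ you end at $\operatorname{tr}\big[G^{1/2}Q_{i_{2j}}Q_{i_{2j-1}}Q_{i_{2j-1}}Q_{i_{2j}}G^{1/2}\big]$; if you peel from the outer (adjacent to $G^{1/2}$) end you end at $\operatorname{tr}\big[GQ_{i_4}Q_{i_3}Q_{i_3}Q_{i_4}\big]$. Neither pair $(i_3,i_4)$ nor $(i_{2j-1},i_{2j})$ need be a transition — e.g.\ $I=(2,1,2,1,1,1,1,1)$ is in $\mathcal I_{8}^{(e)}$ and satisfies Case~2, yet its tail $i_7i_8=1,1$ is constant. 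So this is not a rare ``exceptional configuration''; it is generic, and the step you flag as a ``secondary subtlety'' is really the crux. To make the peeling stop at the second transition you must peel from \emph{both} sides, and the outer side cannot be handled by $\|Q_i(s)\|\le c_L$: there you need $\|Q_iG\|\le D(\hat\varrho)$ (the paper's Lemma~\ref{EstQG}), since one is effectively estimating $\|G^{-1/2}Q_i(s)G^{1/2}\|=\|Q_iG\|$. This is precisely what the paper does — $\|Q_jG\|\le D(\hat\varrho)$ for the outer blocks $A,C$ via the cyclic/conjugation trick, and $\|Q_i(s)\|\le\frac{\sqrt2}{\nu}\|\hat\varrho\|_*$ for the sandwiched block $B$ — and it is why $c_L$ is defined as the maximum of the two constants. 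Your write-up invokes only one of the two bounds, and in a position where it cannot do the job.
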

\begin{proof} By the assumption in the condition Case~2, there exist at least two numbers $m, n\in \{1, 2, \dots, 2j-1\}$ such that $i_m+i_{m+1}=i_n+i_{n+1}=3$. Hence, $I$ can be decomposed as $I=A \cup \{i_m, i_{m+1}\}\cup B \cup \{i_n, i_{n+1}\} \cup C$. Without loss of generality, we may assume that $\{i_m, i_{m+1}\}=\{i_n, i_{n+1}\}=\{1, 2\}$. Thus,
\begin{gather*}
\la Q_I\ra= \la Q_A Q_1Q_2 Q_B Q_1 Q_2 Q_C\ra.
\end{gather*}
Let $Q_I(s)=Q_{i_1}(s)Q_{i_2}(s)\cdots Q_{i_{2j}}(s)$. By the Schwarz inequality, we have
\begin{gather}
\big|\operatorname{tr} \big[G^{1/2} Q_I(s) G^{1/2}\big]\big|\le \Phi_1^{1/2} \Phi_2^{1/2}, \label{Phi}
\end{gather}
where
\begin{gather*}
\Phi_1=\operatorname{tr} \big[G^{1/2}Q_A(s)Q_1(s) Q_2(s) Q_2(s)Q_1(s) Q_A^*(s) G^{1/2}\big],\\
\Phi_2= \operatorname{tr}\big[G^{1/2} Q_C^*(s)Q_2(s)Q_1(s) Q_B^*(s)Q_B(s) Q_1(s)Q_2(s) Q_C(s) G^{1/2}\big].
\end{gather*}
First, we estimate $\Phi_1$. By the cyclic property of the trace, we have
\begin{align}
\Phi_1 &=\operatorname{tr}
\big[
Q_2(s)Q_1(s)Q_A^*(s) G^{1/2} G^{1/2} Q_A(s) Q_1(s)Q_2(s)
\big]\nonumber\\
&= \operatorname{tr}\big[Q_2(s)Q_1(s)G^{1/2} G^{-1/2}Q_A^*(s) G^{1/2} G^{1/2} Q_A(s)G^{-1/2} G^{1/2} Q_1(s)Q_2(s)\big]. \label{Cycle}
\end{align}
Because
\begin{gather*}
G^{1/2} Q_A(s) G^{-1/2}=(GQ_{a_1})(GQ_{a_2}) \cdots (GQ_{a_{\# A}}),
\end{gather*}
where $A=\{a_1, a_2, \dots, a_{\#A}\}$, we have, by Lemma~\ref{EstQG},
\begin{gather*}
\big\|G^{1/2} Q_A(s)G^{-1/2}\big\| \le D(\hat{\varrho})^{\# A}.
\end{gather*}
Thus, by (\ref{Cycle}) and the cyclic property of the trace,
\begin{gather}
\Phi_1 \le D(\hat{\varrho})^{2\# A} \operatorname{tr}
\big[G^{1/2} Q_1(s)Q_2(s)Q_2(s)Q_1(s) G^{1/2}\big]. \label{Cycle2}
\end{gather}
As for $\Phi_2$, we have
\begin{gather*}
\Phi_2 \le \|Q_B(s)\|^2 \operatorname{tr}\big[ G^{1/2} Q_C^*(s) Q_2(s)Q_1(s)Q_1(s)Q_2(s)Q_C(s)G^{1/2} \big].
\end{gather*}
By an argument similar to the one in the proof of (\ref{Cycle2}), one obtains that
\begin{gather}
 \operatorname{tr}
\big[ G^{1/2} Q_C^*(s) Q_2(s)Q_1(s) Q_1(s)Q_2(s)Q_C(s) G^{1/2}
\big] \nonumber\\
\qquad {}\le
D(\hat{\varrho})^{2\# C}
\operatorname{tr} \big[G^{1/2} Q_2(s)Q_1(s)Q_1(s)Q_2(s) G^{1/2}\big]. \label{TrQC}
\end{gather}
By using the fact $\|Q_B(s)\| \le \big(\frac{\sqrt{2}}{\nu} \|\hat{\varrho}\|_{*}\big)^{2\# B}$ and (\ref{TrQC}),
we have
\begin{gather}
\Phi_2 \le \left(
\frac{\sqrt{2}}{\nu}
\| \hat{\varrho}\|_{*}
\right)^{2\# B}
D(\hat{\varrho})^{2\# C}
\operatorname{tr}
\big[
G^{1/2} Q_2(s)Q_1(s)Q_1(s)Q_2(s) G^{1/2}
\big]. \label{Cycle3}
\end{gather}
Combining (\ref{Phi}), (\ref{Cycle2}) and (\ref{Cycle3}), we arrive at
\begin{align*}
|\la Q_I\ra|
& \le \left(
\frac{\sqrt{2}}{\nu}
\|
\hat{\varrho}
\|_{*}
\right)^{\# B}
D(\hat{\varrho})^{\# A+\# C} \la Q_1Q_2Q_2Q_1\ra^{1/2}\la Q_2Q_1Q_1Q_2\ra^{1/2}\\
& \le c_L^{2j-4} \la Q_1Q_2Q_2Q_1\ra^{1/2} \la Q_2Q_1Q_1Q_2\ra^{1/2}.
\end{align*}
Because $\la Q_1Q_2Q_2Q_1\ra=\la Q_2Q_1Q_1Q_2\ra$, we obtain the desired result.
\end{proof}

\subsection{Completion of the proof of Theorem \ref{Rto7}}
First, remark that $\lim\limits_{\Lambda\to \infty} \lim\limits_{L\to \infty}E_{L, \Lambda}=E
$ and $\lim\limits_{\Lambda\to \infty} \lim\limits_{L\to \infty}E_{L, \Lambda}(R)=E(R)
$ by Proposition~\ref{ResolCon}. We divide $\mathcal{I}_{2j}^{(e)}$ as $\mathcal{I}_{2j}^{(e)}=\mathcal{I}^{(e)}_{2j, 1}\cup \mathcal{I}_{2j, 2}^{(e)}$, where
\begin{gather*}
\mathcal{I}_{2j ,\alpha}^{(e)}=\big\{I\in \mathcal{I}_{2j}^{(e)}\, |\, \mbox{$I$ satisfies the condition in Case $\alpha$}\big\},\qquad \alpha=1, 2.
\end{gather*}
Note that $\#\mathcal{I}^{(e)}_{2j, 1}=2j-1$ and $\# \mathcal{I}_{2j, 2}^{(e)}\le 2^{2j}$.

By (\ref{ExpansionF}) and (\ref{QI=2}), one obtains that
\begin{gather*}
2E_{L, \Lambda}-E_{L, \Lambda}(R)=\mathscr{A}+\mathscr{B}+\sum_{j\ge 3}\sum_{I\in \mathcal{I}_{2j, 1}^{(e)}} \la Q_I\ra+\sum_{j\ge 3} \sum_{I\in \mathcal{I}_{2j, 2}^{(e)}} \la Q_I\ra,
\end{gather*}
where $\mathscr{A}$ and $\mathscr{B}$ are defined by (\ref{DefmathA}) and (\ref{DefmathB}), respectively.
Therefore,
\begin{gather}
\big|
R^7
\big\{
2E_{L, \Lambda}-E_{L, \Lambda}(R)-\mathscr{A}
\big\}
\big|
\le
R^7\mathscr{B}+\sum_{j\ge 3} \sum_{I\in \mathcal{I}_{2j, 1}^{(e)}} R^7 |\la Q_I\ra|+\sum_{j\ge 3} \sum_{I\in \mathcal{I}_{2j, 2}^{(e)}} R^7 |\la Q_I\ra|. \label{Last0}
\end{gather}
We will estimate the three terms in the right-hand side of (\ref{Last0}). By Lemma~\ref{Error}, we can easily control the first term. As for the second term, by Lemma~\ref{Case1}, we have
\begin{align}
\lim_{\Lambda\to \infty} \lim_{L\to \infty}\sum_{j\ge 3} \sum_{I\in \mathcal{I}_{2j, 1}^{(e)}} R^7 |\la Q_I\ra|
\underset{{\rm Lemma~\ref{Case1}}}\le & R^{-2} \sum_{j\ge 3}\big(
\# \mathcal{I}^{(e)}_{2j, 1}
\big) C \alpha_{\mathrm{E, at}}^2 \left(
\frac{\|\hat{\varrho}\|_{L^2}^2}{3\nu^2}
\right)^{j-2}\nonumber\\
=\ \ \ \ &R^{-2} C \alpha_{\mathrm{E, at}}^2\sum_{j\ge 3}(2j-1) \left(
\frac{\|\hat{\varrho}\|_{L^2}^2}{3\nu^2}
\right)^{j-2}. \label{Last1}
\end{align}
Note that because $\|\hat{\varrho}\|_{L^2}^2/3\nu^2<1$, the right-hand side of (\ref{Last1}) converges.
On the other hand, using Lemma~\ref{QIEst}, one obtains that
\begin{align}
\sum_{j\ge 3} \sum_{I\in \mathcal{I}_{2j, 2}^{(e)}} R^7 |\la Q_I\ra|
\underset{{\rm Lemma~\ref{QIEst}}}{\le}& \sum_{j\ge 3}
\big(
\# \mathcal{I}_{2j, 2}^{(e)}
\big) c_L^{2j-4} R^7\la Q_2Q_1Q_1Q_2\ra\no
\le\ \ \ \ &
\sum_{j\ge 3}
2^{2j} c_L^{2j-4} R^7\la Q_2Q_1Q_1Q_2\ra. \label{Last2}
\end{align}
Note that because $\lim\limits_{L\to \infty}c_L=c_{\infty}<1/2$, the right-hand side of~(\ref{Last2})
converges, provided that~$L$ is sufficiently large.

Combining (\ref{Last0}), (\ref{Last1}) and (\ref{Last2}), and using Lemma~\ref{Error}, we finally arrive at
\begin{gather*}
 \lim_{R\to \infty}\left|R^7\left\{2E-E(R)-\frac{23}{4\pi}\left(\frac{1}{2\pi}\right)^2\left(
\frac{1}{4}\alpha_{\mathrm{E, at}}\right)^2\right\} \right|\\
\qquad {} \le \left\{2 +c_{\infty}^{-4}\sum_{j\ge 3}(2c_{\infty})^{2j}\right\}
\lim_{R\to \infty } \lim_{\Lambda\to \infty} \lim_{L\to \infty} R^7\la Q_1Q_2Q_2Q_1\ra\\
\qquad\quad{} +\lim_{R\to \infty}R^{-2}C \alpha_{\mathrm{E, at}}^2\sum_{j\ge 3}(2j-1)
\left(\frac{\|\hat{\varrho}\|_{L^2}^2}{3\nu^2}\right)^{2j-2} =0.
\end{gather*}
This concludes the proof of Theorem~\ref{Rto7}.

\section{Discussions}\label{Discuss}
\subsection{Indistinguishability of the electrons}
 The original Hamiltonian $H_{\mathrm{2e}}$ has the indistinguishability of the electrons, i.e.,
the Hamiltonian is unchanged under the exchange of $x_1 \leftrightarrow x_2+r$. In contrast to this,
the approximated Hamiltonian~$H_{\mathrm{D2e}}$ breaks the indistinguishability.
Nevertheless, the Hamiltonian~$H_{\mathrm{D2e}}$ does explain the Casimi--Polder potential as we show in Theorem~\ref{Rto7}. The distinguishability comes from the assumptions (C.1) and (C.2).
 However, to justify the assumptions is still open.

One way to avoid the unjustified derivation of~$H_{\mathrm{D2e}}$ is to
directly start with the Hamiltonian~$H$ given by~(\ref{DefH}) without the last term, which can for
instance be directly taken from \cite[equation~(13.127)]{Spohn} and then extended to the two-particle case. Alternatively and equivalently, the many-particle case is presented, e.g.,
in \cite[Section~4]{Loudon}. If we start from this form, the necessary assumptions are stated as follows:
\begin{itemize}\itemsep=0pt
 \item We assume distinguishability of the two electrons by localizing electron $1$ at $0$, such that electron 1 experiences the field $\hat{E}(0)$, while electron $2$ is localized at~$r$ and hence experiences the field $\hat{E}(r)$.
\item We discard all self-interaction terms and approximate the atomic Coulomb potential by a~harmonic potential.
\end{itemize}
In this manner, we can construct a minimal QED model which describes the Casimir--Polder potential.
Note that, since the particle $1$ and $2$ only communicate via the photon field, and due to distinguishability, the actual choice of coordinate systems is insubstantial such that we can choose for particle~$2$ a coordinate system
that is centered at $r$.

\subsection{Cancellation mechanism of the van der Waals--London force }

As we performed in \cite{MS1}, the attractive $R^{-7}$ decay (the retarded van der Waals potential)
appears due to the exact cancellation of the terms with~$R^{-6}$ decay (the van der Waals--London potential)
originating from $V_R$ by the contribution from the quantized radiation field. Note that the conditions (A.1)--(A.3) are assumed in~\cite{MS1} as well, but (C.1) and (C.2) are not. As we saw in the present paper, this kind of the cancellation mechanism cannot be reproduced under the conditions (C.1), (C.2) and (A.1)--(A.4). In this sense, our assumptions, especially~(C.1) and~(C.2) would be unphysical.

In many literatures, the retardation on the van der Waals potential is examined under the condition (C.1) alone. In these studies, the cancellation of the terms with $R^{-6}$ decay is presupposed and only the $4$-th order perturbation theory is performed without estimating higher order terms.\footnote{A kind of weak cancellation mechanics is discussed in~\cite{Koppen} by imposing the infrared cutoff.} As far as we know, to examine the exact cancellation mechanism under only the condition (C.1) is still unsolved. This problem could be a key to achieving mathematically complete understanding of the retarded van der Waals potential.

\appendix

\section{Useful formulas}\label{List}

In this appendix, we give a list of useful formulas. Let $\mathbb{T}_s(x)= \mathbb{T}(x)\big(s^2+S_0\big)^{-1/2}$. First, we give some formulas for $Q_i$:
\begin{gather}
Q_1(s)=e\big(s^2+e^2\nu^2\big)^{-1/2}\begin{pmatrix}
0&0& \mathbb{T}_s(0)\\
0&0&0\\
\mathbb{T}_s^*(0)&0&0
\end{pmatrix},\nonumber\\
Q_2(s)=e \big(s^2+e^2\nu^2\big)^{-1/2}\begin{pmatrix}
0&0&0\\
0&0& \mathbb{T}_s(r) \\
0&\mathbb{T}_s^*(r)&0
\end{pmatrix},\nonumber\\
Q_1(s)^2=e^2\big(s^2+e^2\nu^2\big)^{-1}\begin{pmatrix}
\mathbb{T}_s(0) \mathbb{T}_s^*(0)&0& 0\\
0&0&0\\
0&0& \mathbb{T}_s^*(0) \mathbb{T}_s(0)
\end{pmatrix},\label{Q3}\\
Q_2(s)^2=e^2 \big(s^2+e^2\nu^2\big)^{-1}\begin{pmatrix}
0&0&0\\
0& \mathbb{T}_s(r)\mathbb{T}_s^*(r)&0 \\
0&0&\mathbb{T}_s^*(r)\mathbb{T}_s(r)
\end{pmatrix}, \label{Q4}\\
Q_1(s)Q_2(s) =e^2\big(s^2+e^2\nu^2\big)^{-1}\begin{pmatrix}
0& \mathbb{T}_s(0)\mathbb{T}_s^*(r)&0 \\
0&0&0\\
0&0& 0
\end{pmatrix},\nonumber\\
Q_2(s)Q_1(s) =e^2 \big(s^2+e^2\nu^2\big)^{-1}\begin{pmatrix}
0&0&0\\
\mathbb{T}_s(r) \mathbb{T}^*_s(0)& 0&0 \\
0&0&0
\end{pmatrix}.\nonumber
\end{gather}

Let $\mathbb{M}(r, r')$ be a linear operator on $\ell^2(M\times \{1, \dots, 4\})$ defined by
\begin{gather*}
(\mathbb{M}(r, r') {\boldsymbol f})(k, \lambda) =\sum_{\lambda'=1}^4\sum_{k'\in M} M_{k, \lambda; k', \lambda'}(r, r') f(k', \lambda'),\qquad {\boldsymbol f} \in \ell^2(M\times \{1, \dots, 4\}),\\ 
M_{k,\lambda; k', \lambda'}(r, r') =\la \vepsilon(k, \lambda)|\vepsilon(k', \lambda')\ra_{3} F_r(k, \lambda)
F_{r'}(k', \lambda').
\end{gather*}

The following formulas are readily checked:
\begin{gather}
\mathbb{T}_s^*(r) \mathbb{T}_s(r') = \big(s^2+S_0\big)^{-1/2} \mathbb{M}(r, r')\big(s^2+S_0\big)^{-1/2}, \label{TT*}\\
\mathbb{T}_s(r) \mathbb{T}_s^*(r') =
\sum_{\lambda=1}^4\sum_{k\in M} |\vepsilon(k, \lambda)\ra\la \vepsilon(k, \lambda)| \big(s^2+k^2\big)^{-1} F_r(k, \lambda)F_{r'}(k, \lambda).\label{T*T}
\end{gather}
Note that $\mathbb{T}_s^*(r) \mathbb{T}_s(r')$ is a map from $\ell^2(M\times \{1, \dots ,4\})$
to $\ell^2(M\times \{1, \dots ,4\})$, while $\mathbb{T}_s(r) \mathbb{T}_s^*(r')$ is a map from $\BbbC^3$ to $\BbbC^3$.

\section{Numerical computations}\label{NumC}

\subsection{Proof of Lemma \ref{MainTerm}}
We will extend the methods in \cite{MS1, MS2}.
By (\ref{Q3}) and (\ref{Q4}), we have
\begin{gather*}
Q_1(s)^2Q_2(s)^2 =e^4 \big(s^2+e^2\nu^2\big)^{-2}\begin{pmatrix}
0&0&0\\
0&0&0 \\
0&0&\mathbb{T}_s^*(0)\mathbb{T}_s(0)\mathbb{T}_s^*(r)\mathbb{T}_s(r)
\end{pmatrix},
\end{gather*}
which implies that
\begin{gather*}
\la Q_1Q_1 Q_2Q_2\ra=\frac{e^4}{2\pi} \int_{\BbbR} \dm s \, \frac{s^2}{\big(s^2+e^2\nu^2\big)^2}
\operatorname{tr}\big[
\mathbb{T}_s^*(0)\mathbb{T}_s(0)\mathbb{T}_s^*(r)\mathbb{T}_s(r)
\big].
\end{gather*}
By (\ref{TT*}) and the fact $\sum\limits_{\lambda_1, \lambda_2=1, 2}(\la \vepsilon(k_1, \lambda_1)|
\vepsilon(k_2, \lambda_2)\ra_3)^2=1+\big(\hat{k}_1\cdot \hat{k}_2\big)^2
$ with $\hat{k}=k/|k|$, we have
\begin{gather}
\operatorname{tr}\big[
\mathbb{T}_s^*(0)\mathbb{T}_s(0)\mathbb{T}_s^*(r)\mathbb{T}_s(r)
\big] =\operatorname{tr}\big[
\big(s^2+\Omega_0\big)^{-2} \mathbb{M}(0, 0)\big(s^2+\Omega_0\big)^{-1}\mathbb{M}(r, r)
\big]\nonumber\\
= \sum_{k_1, \lambda_1}\sum_{k_2, \lambda_2} \big(s^2+k_1^2\big)^{-2}\big(s^2+k_2^2\big)^{-1}
(\la \vepsilon(k_1, \lambda_1)|
\vepsilon(k_2, \lambda_2)\ra_3)^2 \nonumber\\
\quad{} \times k_1^2k_2^2 \hat{\varrho}(k_1)^2\hat{\varrho}(k_2)^2 \cos(k_1\cdot r) \cos(k_2\cdot r)\label{trTTTT}\\
= \sum_{k_1}\sum_{k_2} \big(s^2+k_1^2\big)^{-2}\big(s^2+k_2^2\big)^{-1}
\big(1+\big(\hat{k_1}\cdot \hat{k}_2\big)^2\big) k_1^2k_2^2 \hat{\varrho}(k_1)^2\hat{\varrho}(k_2)^2 \cos(k_1\cdot r) \cos(k_2\cdot r).\nonumber
\end{gather}
By using the assumption (A.3), we have
\begin{gather*}
 \mbox{the r.h.s.\ of (\ref{trTTTT})}\\
\qquad{} = \sum_{k_1}\sum_{k_2} \big(s^2+k_1^2\big)^{-2}\big(s^2+k_2^2\big)^{-1}
\big(1+\big(\hat{k_1}\cdot \hat{k}_2\big)^2\big) k_1^2k_2^2 \hat{\varrho}(k_1)^2\hat{\varrho}(k_2)^2 \cos((k_1+k_2)\cdot r).
\end{gather*}
Hence, we arrive at
\begin{gather*}
\la Q_1Q_1 Q_2Q_2\ra =\frac{e^4}{2}
\left(\frac{2\pi}{L}\right)^{6}\sum_{k_1, k_2\in M}|k_1|^2|k_2|^2 \big(1+\big(\hat{k}_1\cdot \hat{k}_2\big)^2
\big)\hat{\varrho}^2(k_1)\hat{\varrho}^2(k_2) \cos\{(k_1+k_2)\cdot r\} \\
\hphantom{\la Q_1Q_1 Q_2Q_2\ra =}{} \times I_{2,2,1}\big(e^2\nu^2; |k_1|^2; |k_2|^2\big)\label{R7term}
\end{gather*}
where
\begin{gather}
I_{n_a, n_b, n_c}(a; b; c)=
\frac{1}{\pi}\int_{-\infty}^{\infty} \dm s \, \frac{s^2}{\big(s^2+a\big)^{n_a}\big(s^2+b\big)^{n_b}\big(s^2+c\big)^{n_c}}.\label{DefIabc}
\end{gather}
Thus, we obtain that
\begin{gather}
 \lim_{\Lambda\to \infty} \lim_{L\to \infty} \la Q_1Q_1Q_2Q_2\ra
= \frac{e^4}{2} \int_{\BbbR^3\times \BbbR^3} \dm k_1\dm k_2|k_1|^2|k_2|^2 \big(1+\big(\hat{k}_1\cdot \hat{k}_2\big)^2\big)\hat{\varrho}^2(k_1)\hat{\varrho}^2(k_2) \nonumber\\
\hphantom{\lim_{\Lambda\to \infty} \lim_{L\to \infty} \la Q_1Q_1Q_2Q_2\ra =}{}\times \cos\{(k_1+k_2)\cdot r\} I_{2, 2, 1}\big(e^2\nu^2; |k_1|^2; |k_2|^2\big).\label{R7termA}
\end{gather}
By scalings $Rk_1\leadsto k_1$ and $ Rk_2\leadsto k_2$, we have
\begin{gather}
 \mbox{the r.h.s.\ of (\ref{R7termA})} = R^{-10}\frac{e^4}{2}\!
\int\! \dm k_1 \dm k_2\,
|k_1|^2|k_2|^2 \big(1+\big(\hat{k}_1\cdot \hat{k}_2\big)^2\big)\hat{\varrho}^2(k_1/R)\hat{\varrho}^2(k_2/R) \ex^{\im (k_1+k_2)\cdot \hat{n}} \nonumber\\
\hphantom{\mbox{the r.h.s.\ of (\ref{R7termA})} =}{}\times
 I_{2, 2, 1}\big(e^2\nu^2; |k_1|^2/R^2; |k_2|^2/R^2\big), \label{R7term2}
\end{gather}
where $\hat{n}=r/R=(0,0,1)$. Let us switch to spherical coordinates $(r, \vphi, \theta)$ by
\begin{gather*}
\hat{k}=(Y\cos \vphi, Y\sin \vphi, X),\qquad X=\cos \theta, \qquad Y=\sin \theta.
\end{gather*}
 Clearly $X^2+Y^2=1$. Then we have
 \begin{gather*}
 \hat{k}_1\cdot \hat{k}_2=\cos (\vphi_1-\vphi_2)Y_1Y_2+X_1X_2
 \end{gather*}
and hence, by taking the symmetry between $r_1$ and $r_2$ variables into consideration, we obtain
\begin{gather*}
\mbox{the r.h.s.\ of (\ref{R7term2})} =R^{-10} \frac{e^4}{2} \int_0^{\infty} \dm r_1 \int_0^{\infty} \dm r_2 \int_{-1}^1 \dm X_1 \int_{-1}^1 \dm X_2\, \mathfrak{S}(X_1, X_2) \\
\hphantom{\mbox{the r.h.s.\ of (\ref{R7term2})} =}{} \times r_1^4 r_2^4 \ex^{\im r_1X_1} \ex^{\im r_2 X_2} \mathbf{I}\big(e^2\nu^2; r_1^2/R^2; r_2^2/R^2\big)\hat{\varrho}_{\mathrm{rad}}^2 (r_1/R)\hat{\varrho}_{\mathrm{rad}}^2(r_2/R),
\end{gather*}
where
\begin{gather}
\mathfrak{S}(X_1, X_2)= \int_0^{2\pi} \dm \vphi_1 \int_0^{2\pi}\dm \vphi_2\,
\big\{1+\big(\cos(\vphi_1-\vphi_2)Y_1Y_2+X_1X_2\big)^2\big\}\nonumber\\
\hphantom{\mathfrak{S}(X_1, X_2)}{} = 6\pi^2-2\pi^2 \big(X_1^2+X_2^2\big)+6\pi^2 X_1^2X_2^2.\label{DefS}
\end{gather}
and
\begin{gather*}
\mathbf{I}(a; b; c)=\frac{1}{2} \big\{I_{2, 2, 1}(a; b; c)+I_{2, 1, 2}(a; b; c)\big\}.
\end{gather*}
By (\ref{I221}) and (\ref{I212}), we decompose $\mathbf{I}(a; b; c)$ as
\begin{gather*}
\mathbf{I}(a; b; c)=\mathbf{I}_{\mathrm{re}}(a; b; c)+\mathbf{I}_{\mathrm{ir}}(a; b; c),
\end{gather*}
where
\begin{gather*}
\mathbf{I}_{\mathrm{re}}(a; b; c) =\frac{I_{1, 1, 1}(a; b; c)}{8\sqrt{abc}}\left(\frac{1}{A}+\frac{1}{C}\right),\\
\mathbf{I}_{\mathrm{ir}}(a; b; c) =\frac{I_{1, 1, 1}(a; b; c)}{8\sqrt{ab}}\frac{1}{A}\left(
\frac{2}{A}+\frac{1}{C}\right)+ \frac{I_{1, 1, 1}(a; b; c)}{8\sqrt{ac}}\frac{1}{C}\left(
\frac{2}{C}+\frac{1}{A}\right)
\end{gather*}
with $A=\sqrt{a}+\sqrt{b}$, $B=\sqrt{b}+\sqrt{c}$ and $C=\sqrt{c}+\sqrt{a}$. First, we compute the contribution from the term $\mathbf{I}_{\mathrm{re}}$. By the formula
\begin{gather*}
 \mathbf{I}_{\mathrm{re}}\big(e^2\nu^2; r_1^2/R^2; r_2^2/R^2\big)\\
\qquad{} =
\frac{R^3}{8e\nu r_1r_2(e\nu+r_1/R)(e\nu+r_2/R)}
\left(\frac{1}{e\nu+r_1/R}+\frac{1}{e\nu+r_2/R}\right)
 \int_0^{\infty }\dm t \, \ex^{-t (r_1+r_2)},
\end{gather*}
the contribution can be expressed as
\begin{gather}
 \frac{e^3}{16\nu}R^{-7}\int_0^{\infty} \dm t \int_0^{\infty} \dm r_1 \int_0^{\infty}\dm r_2
\left(\frac{1}{e\nu+r_1/R}+\frac{1}{e\nu+r_2/R}\right)\nonumber\\
\qquad{}\times \left\{\frac{3}{2}[1]_{r_1}[1]_{r_2}-[1]_{r_1}[X^2]_{r_2}+\frac{3}{2}[X^2]_{r_1}[X^2]_{r_2}\right\}, \label{IntOsc}
\end{gather}
where
\begin{gather*}
[A(X)]_r(t)=(2\pi)\frac{r^3 \ex^{-tr}\hat{\varrho}_{\mathrm{rad}}^2(r/R)}{e\nu+r/R} \int_{-1}^1 \dm X\, \ex^{\im rX} A(X).
\end{gather*}
For readers' convenience, we will explain how to compute the integral~(\ref{IntOsc}). Let $\vphi(x)$ be the Fourier transformation of $\hat{\varrho}_{\mathrm{rad}}^2(r)$:
$
\vphi(x)=(2\pi)^{-1/2}\int_{\BbbR}\ex^{-\im rx}\hat{\varrho}_{\mathrm{rad}}^2(r)\dm r
$.
Here, we extend $\hat{\varrho}_{\mathrm{rad}}^2$ to a~function on~$\BbbR$ by
$\hat{\varrho}_{\mathrm{rad}}^2(-r) := \hat{\varrho}_{\mathrm{rad}}^2(r)$ for $r>0$.
Note that $\vphi(x)$ decays rapidly by the assumption (A.3). By the convolution theorem in the Fourier analysis, we have
\begin{gather}
 \int_0^{\infty} \dm r\, [1]_r=\int_0^{\infty}\dm s \int_{\BbbR} \dm x \, (1+x/R)\ex^{-se\nu} \frac{
12(t+s/R)^2-4(1+x/R)^2}{\big\{(t+s/R)^2+(1+x/R)^2\big\}^3} \vphi(x) \label{Int1}
\end{gather}
and
\begin{gather*}
\int_0^{\infty} \dm r \, \frac{ [1]_r}{e\nu+r/R}
= \int_0^{\infty}\dm s \int_{\BbbR} \dm x\, (1+x/R)s \ex^{-se\nu} \frac{12(t+s/R)^2-4(1+x/R)^2
}{\big\{(t+s/R)^2+(1+x/R)^2\big\}^3} \vphi(x).
\end{gather*}
[Here, we explain how we derive (\ref{Int1}). First, we observe that
\begin{align*}
\int_0^{\infty}\dm r [1]_r&=4\pi \int_0^{\infty} \dm r\, \frac{r^2\ex^{-tr} \hat{\varrho}_{\mathrm{rad}}^2(r/R)
}{e\nu+r/R} \sin r\no
&= 4\pi \int_0^{\infty}\dm s \, \ex^{-s e\nu } \operatorname{Im} \int_{\BbbR}\dm r \underbrace{1_+(r)r^2 \ex^{-(t+s/R)r}}_{=: f(r)}
\hat{\varrho}_{\mathrm{rad}}^2(r/R)\ex^{\im r},
\end{align*}
where $1_+(r)=1$ if $r>0$, $1_+(r)=0$ if $r\le 0$. By the convolution theorem $\big((2\pi)^{1/2} \big(\hat{g}\hat{h}\big)^{\vee}=g*h\big)$, we have
\begin{gather*}
(2\pi)^{1/2} \operatorname{Im} \int_{\BbbR}\dm r\, f(r)
\hat{\varrho}_{\mathrm{rad}}^2(r/R)\ex^{\im r x}
=\big(\vphi_R* \operatorname{Im} \check{f}\big)(x),
\end{gather*}
where $\vphi_R(x) =R\vphi(Rx)$ and $(g*h)(x)=\int_{\BbbR} g(y)h(x-y)\dm y$.
Because
\begin{gather*}
 \operatorname{Im} \check{f}(x)=(2\pi)^{-1/2} \frac{6x(t+s/R)^2-2x^3}{\big\{(t+s/R)^2+x^2\big\}^3},
\end{gather*}
we get (\ref{Int1}).] Hence, by the dominated convergence theorem, we obtain
\begin{gather*}
 \lim_{R\to \infty}\frac{e^3}{16\nu}\int_0^{\infty} \dm t \int_0^{\infty} \dm r_1 \int_0^{\infty}\dm r_2
\left(\frac{1}{e\nu+r_1/R}+\frac{1}{e\nu+r_2/R}\right) [1]_{r_1}[1]_{r_2}\\
\qquad{} = \frac{\pi}{8\nu^4} \hat{\varrho}^4(0)\int_0^{\infty} \dm t \, \frac{\big({-}4+12t^2\big)^2}{\big(t^2+1\big)^6}.
\end{gather*}
Similarly, we obtain that
\begin{gather*}
 \lim_{R\to \infty}\frac{e^3}{16\nu}\int_0^{\infty} \dm t \int_0^{\infty} \dm r_1 \int_0^{\infty}\dm r_2
\left(
\frac{1}{e\nu+r_1/R}+\frac{1}{e\nu+r_2/R}
\right) [1]_{r_1}\big[X^2\big]_{r_2}\\
\qquad{} = \frac{\pi}{8\nu^4} \hat{\varrho}^4(0)\int_0^{\infty} \dm t \, \frac{\big({-}4+12t^2\big)\big({-}12+4t^2\big)}{\big(t^2+1\big)^6}
\end{gather*}
and
\begin{gather*}
 \lim_{R\to \infty}\frac{e^3}{16\nu}\int_0^{\infty} \dm t \int_0^{\infty} \dm r_1 \int_0^{\infty}\dm r_2
\left(\frac{1}{e\nu+r_1/R}+\frac{1}{e\nu+r_2/R}\right) \big[X^2\big]_{r_1}\big[X^2\big]_{r_2}\\
\qquad{} = \frac{\pi}{8\nu^4} \hat{\varrho}^4(0)\int_0^{\infty} \dm t\, \frac{\big({-}12+4t^2\big)^2}{\big(t^2+1\big)^6}.\label{Int6}
\end{gather*}
Summarizing the above results, we arrive at
\begin{gather*}
\frac{e^4}{2}
\int_0^{\infty} \dm r_1 \int_0^{\infty} \dm r_2 \int_{-1}^1 \dm X_1 \int_{-1}^1 \dm X_2 \, \mathfrak{S}(X_1, X_2) r_1^4 r_2^4 \ex^{\im r_1X_1} \ex^{\im r_2 X_2} \\
\qquad{} \times \mathbf{I}_{\mathrm{re}}\big(e^2\nu^2; r_1^2/R^2; r_2^2/R^2\big)\hat{\varrho}^2_{\mathrm{rad}}(r_1/R)\hat{\varrho}^2_{\mathrm{rad}}(r_2/R)
= \frac{23\pi^3}{2\nu^4}\hat{\varrho}^4(0) R^{-7}+o\big(R^{-7}\big),
\end{gather*}
where we used the following formula in \cite{MS1}:
\begin{gather*}
\int_0^{\infty}\dm t \left\{
\frac{3}{2}A(t)^2-A(t)B(t)+\frac{3}{2}B(t)^2
\right\}=92\pi^3
\end{gather*}
with
\[
A(t)=\frac{-4+12t^2}{\big(1+t^2\big)^3} \qquad \text{and}\qquad
B(t)=\frac{4\big({-}3+t^2\big)}{\big(1+t^2\big)^3}.
\]

As for the contribution from $\mathbf{I}_{\mathrm{ir}}$, we have, by an argument similar to that of the computation concerning with $\mathbf{I}_{\mathrm{re}}$,
\begin{gather*}
 \frac{e^4}{2}
\int_0^{\infty} \dm r_1 \int_0^{\infty} \dm r_2 \int_{-1}^1 \dm X_1 \int_{-1}^1 \dm X_2\, \mathfrak{S}(X_1, X_2) \ r_1^4 r_2^4 \ex^{\im r_1X_1} \ex^{\im r_2 X_2} \\
\qquad {} \times \mathbf{I}_{\mathrm{ir}}\big(e^2\nu^2; r_1^2/R^2; r_2^2/R^2\big)\hat{\varrho}_{\mathrm{rad}}^2(r_1/R)\hat{\varrho}_{\mathrm{rad}}^2(r_2/R)
= \mathrm{const} \cdot R^{-9}+o\big(R^{-9}\big).
\end{gather*}
To summarize, we obtain that
\[
\lim_{R\to \infty} \lim_{\Lambda\to \infty} \lim_{L\to \infty} R^7\la Q_1Q_1Q_2Q_2\ra= \frac{23}{8\pi}\left(\frac{1}{2\pi}\right)^2\left(
\frac{1}{4}\alpha_{\mathrm{E, at}}
\right)^2.
\]
Similarly, we get
\[
\lim_{R\to \infty} \lim_{\Lambda\to \infty} \lim_{L\to \infty} R^7\la Q_2Q_2Q_1Q_1\ra= \frac{23}{8\pi}\left(\frac{1}{2\pi}\right)^2\left(
\frac{1}{4}\alpha_{\mathrm{E, at}}\right)^2.
\]
This concludes the proof of Lemma~\ref{MainTerm}.

\subsection{Proof of Lemma \ref{Error}}

We readily see that $\la Q_1Q_2Q_1Q_2\ra=\la Q_2Q_1Q_2Q_1\ra=0$ by the formulas in Appendix \ref{List}.
In what follows, we evaluate $\la Q_1Q_2Q_2Q_1\ra$ and $\la Q_2Q_1Q_1Q_2\ra$.
Because the argument here is almost pallarel to the proof of Lemma~\ref{MainTerm}, we provide a sketch only.
As before, we have
\begin{gather}
 \lim_{\Lambda\to \infty}\lim_{L\to \infty}\la Q_2Q_1Q_1Q_2\ra
=\frac{e^4}{2\pi} \int\dm k_1 \dm k_2\,
\big( 1+\big(\hat{k}_1\cdot \hat{k}_2\big)^2 \big)
|k_1|^2|k_2|^2 \hat\varrho^2(k_1) \hat{\varrho}^2(k_2) \nonumber\\
\qquad\quad{} \times \cos(k_1\cdot r) \cos (k_2\cdot r)
I_{3, 1,1}\big(e^2\nu^2; |k_1|^2; |k_2|^2\big)\nonumber\\
\qquad{} = R^{-10} \frac{e^4}{2}
\int_0^{\infty} \dm r_1 \int_0^{\infty} \dm r_2 \int_{-1}^1 \dm X_1 \int_{-1}^1 \dm X_2 \, \mathfrak{S}(X_1, X_2) r_1^4 r_2^4 \hat{\varrho}_{\mathrm{rad}}^2(r_1/R)\hat{\varrho}_{\mathrm{rad}}^2(r_2/R) \nonumber\\
\qquad\quad{} \times \cos (r_1X_1)\cos(r_2 X_2)I_{3, 1, 1}\big(e^2\nu^2; r_1^2/R^2; r^2_2/R^2\big).\label{Q2112}
\end{gather}
Remark the following formula:
\begin{gather*}
I_{3, 1, 1}\big(e^2\nu^2; r_1^2/R^2; r^2_2/R^2\big)= (e\nu)^{-6} \frac{R}{r_1+r_2}+o(R),
\end{gather*}
which follows from (\ref{I311}).
Inserting this into (\ref{Q2112}), we formally obtain that
\begin{gather*}
\lim_{\Lambda\to \infty}\lim_{L\to \infty}\la Q_2Q_1Q_1Q_2\ra=\frac{g}{e^2\nu^6}R^{-9}+o\big(R^{-9}\big).
\end{gather*}
To justify this rough argument, we carefully have to treat the oscillatory integral as we did in the proof of Lemma \ref{MainTerm}. Similarly, we see that $\lim\limits_{\Lambda\to \infty}\lim\limits_{L\to \infty}\la Q_1Q_2Q_2Q_1\ra=\frac{g}{e^2\nu^6}R^{-9}+o\big(R^{-9}\big).$

\subsection{Proof of Lemma \ref{Case1}}
In this case, there exist two numbers $m, n\in \BbbN$ with $m+n\ge 3$ such that $\la Q_I\ra=\big\la Q_1^{2m} Q_2^{2n}\big\ra$ or $\la Q_I\ra=\big\la Q_2^{2m}Q_1^{2n}\big\ra$. We will study the case where $\la Q_I\ra=\big\la Q_1^{2m}Q_2^{2n}\big\ra$ only. By using the formulas in Appendix~\ref{List}, one obtains that
\begin{gather}
 \lim_{\Lambda\to \infty}\lim_{L\to \infty} \big\la Q_1^{2m}Q_2^{2n}\big\ra \nonumber\\
 \qquad{}
= e^{2(m+n) }\sum_{\lambda_1, \dots, \lambda_{m+n}=1, 2} \int\dm k_1\cdots \dm k_{m+n}\,
I_{m+n, 2, 1^{m+n-1}}\big(e^2\nu^2; |k_1|^2; \dots; |k_{m+n}|^2\big) \nonumber\\
\qquad\quad{}\times \left[\prod_{j=1}^{m+n} |k_j|^2\hat{\varrho}^2(k_j)\right]
\la \vepsilon_1|\vepsilon_2\ra\la \vepsilon_2|\vepsilon_3\ra\cdots \la \vepsilon_{m+n}|\vepsilon_1\ra
\cos(k_m\cdot r) \cos(k_{m+1}\cdot r), \label{QQ1}
\end{gather}
where $\vepsilon_j=\vepsilon(k_j, \lambda_j)$ and
$
I_{m+n, 2, 1^{m+n-1}}(a_0; \dots ;a_{m+n})=I_{m+n, 2, \underbrace{1, \dots, 1}_{m+n-1}}(a_0; \dots; a_{m+n})
$ with
\begin{gather*}
I_{n_0,n_1, \dots, n_k}(a_0;a_1; \dots; a_k)
=\frac{1}{\pi}\int_{-\infty}^{\infty} \dm s\, \frac{s^2}{\prod\limits_{j=0}^k(s^2+a_j)^{n_j}}.
\end{gather*}
By scalings $Rk_m\leadsto k_m$ and $Rk_{m+1}\leadsto k_{m+1}$, we get
\begin{gather}
 \mbox{the r.h.s.\ of (\ref{QQ1})}= e^{2(m+n) } R^{-10}
\sum_{\lambda_1, \dots, \lambda_{m+n}=1, 2} \int\dm k_1\cdots \dm k_{m+n} \nonumber\\
\quad {}\times I_{m+n, 2, 1^{m+n-1}}\big(
e^2\nu^2; |k_1|^2; \dots ;|k_{m-1}|^2; |k_m|^2/R^2; |k_{m+1}|^2/R^2; |k_{m+2}|^2; \dots; |k_{m+n}|^2
\big) \nonumber\\
\quad{}\times \bigg[
\prod_{j\neq m, m+1} |k_j|^2 \hat{\varrho}^2(k_j)
\bigg]\la \vepsilon_1|\vepsilon_2\ra\la \vepsilon_2|\vepsilon_3\ra\cdots \la \vepsilon_{m+n}|\vepsilon_1\ra \nonumber\\
\quad{}\times |k_m|^2|k_{m+1}|^2 \hat{\varrho}^2(k_m/R) \hat{\varrho}^2(k_{m+1}/R)
\cos(k_m\cdot \hat{n}) \cos(k_{m+1}\cdot \hat{n}).\label{QQ2}
\end{gather}
Switching to the polar coordinates as we did in the proof of Lemma \ref{MainTerm}, we see that
\begin{gather}
 \mbox{the r.h.s.\ of (\ref{QQ2})}=e^{2(m+n)}R^{-9}\sum_{\lambda_1, \dots, \lambda_{m+n}=1, 2} \left[
\prod_{j=1}^{m+n} \int \dm r_j \int \dm X_j \int \dm \vphi_j\right]
\left[
\frac{1}{\pi} \int_{-\infty}^{\infty} \dm s\,s^2
\right] \nonumber\\
\qquad{} \times
\mathscr{F}_R(r_1, \dots, r_{m-1}, r_{m+2}, \dots, r_{m+n}; s) \mathscr{G}_R(r_m, r_{m+1}; s) \nonumber\\
\qquad{} \times \la \vepsilon_1|\vepsilon_2\ra\la \vepsilon_2|\vepsilon_3\ra\cdots \la \vepsilon_{m+n}|\vepsilon_1\ra \cos(r_mX_m) \cos(r_{m+1}X_{m+1}), \label{QQQ1}
\end{gather}
where
\begin{gather*}
\mathscr{F}_R =\big(e^2\nu^2+R^{-2} s^2\big)^{-(m+n)}
\Bigg[\prod_{j\neq m, m+1}r_j^4 \hat{\varrho}_{\mathrm{rad}}^2(r_j) \big(r_j^2+R^{-2}s^2\big)^{-1}\Bigg],\\
\mathscr{G}_R=r_m^4r_{m+1}^4 \hat{\varrho}_{\mathrm{rad}}^2(r_m/R)\hat{\varrho}_{\mathrm{rad}}^2(r_{m+1}/R)
\big(s^2+r_m^2\big)^{-1}\big(s^2+r_{m+1}^2\big)^{-1}.
\end{gather*}
Next, we will perform $X_j$- and $\vphi_j$-integrations for $j\neq m, m+1$.
For this purpose, we remark that
\[
\sum_{\lambda=1, 2} \int_{-1}^1 \dm X \int_0^{2\pi} \dm \vphi\, |\vepsilon(k, \lambda)\ra\la \vepsilon(k, \lambda)|=\frac{4\pi}{3} \one_3,
\]
where $\one_3$ is the identity matrix acting in $\BbbC^3$.
Using this and the fact that
\[
\sum_{\lambda_m, \lambda_{m+1}=1, 2} (\la \vepsilon_m|\vepsilon_{m+1}\ra_3)^2
=1+\big(\hat{k}_{m}\cdot \hat{k}_{m+1}\big)^2,
\]
we get
\begin{gather}
 \mbox{the r.h.s.\ of (\ref{QQQ1})} =e^{2(m+n)}R^{-9}\left(
\frac{4\pi}{3}
\right)^{m+n-2}\left[
\frac{1}{\pi} \int_{-\infty}^{\infty} \dm ss^2
\right]
\left[
\prod_{j\neq m, m+1}\int \dm r_j
\right] \mathscr{F}_R \nonumber\\
\quad{}\times
\left[
\int \dm r_m \dm r_{m+1} \int \dm X_m \dm X_{m+1}
\right]\mathfrak{S}(X_m, X_{m+1}) \mathscr{G}_R \cos(r_mX_m) \cos(r_{m+1}X_{m+1}), \label{QQQ2}
\end{gather}
where $\mathfrak{S}(X_m, X_{m+1})$ is defined by~(\ref{DefS}). Because
\begin{gather*}
\mathscr{F}_R\le \big(e^2\nu^2\big)^{-(m+n)}
\Bigg[\prod_{j\neq m, m+1}r_j^2\hat{\varrho}^2_{\mathrm{rad}}(r_j)\Bigg],
\end{gather*}
we obtain that
\begin{gather}
 \mbox{the r.h.s.\ of (\ref{QQQ2})} \le R^{-9} \left(\frac{\|\hat{\varrho}\|_{L^2}^2}{3\nu^2}
\right)^{m+n-2}\nu^{-4}\frac{1}{\pi}\int_{-\infty}^{\infty}\dm s \, s^2
 \int \dm r_m\dm r_{m+1} \nonumber\\
 \hphantom{\mbox{the r.h.s.\ of (\ref{QQQ2})} \le}{} \times \!\int\!\dm X_m\dm X_{m+1}\,
\mathscr{G}_R \mathfrak{S}(X_m, X_{m+1}) \cos(r_mX_m) \cos (r_{m+1}X_{m+1}).\!\!\!\!\!\label{QQQ3}
\end{gather}
Here, we used the fact that the factor
\[
\int\dm r_m \int\dm r_{m+1}\int \dm X_m \dm X_{m+1}[\cdots]
\] in the r.h.s.\ of~(\ref{QQQ3}) is positive for all $s\ge 0$. Using the elementary formula
\begin{gather*}
\frac{1}{\pi} \int_{-\infty}^{\infty} \dm s \, \frac{s^2}{\big(s^2+r_m^2\big)\big(s^2+r_{m+1}^2\big)}=\frac{1}{r_m+r_{m+1}}=\int_0^{\infty} \dm t\, \ex^{-t(r_m+r_{m+1})},
\end{gather*}
we have
\begin{gather}
\mbox{the r.h.s.\ of (\ref{QQQ3})}=R^{-9}
\left(\frac{\|\hat{\varrho}\|^2_{L^2}}{3\nu^2}
\right)^{m+n-2} \nonumber\\
\hphantom{\mbox{the r.h.s.\ of (\ref{QQQ3})}=}{}\times \nu^{-4} \int_0^{\infty}\dm t \left\{
\frac{3}{2} [\![1]\!] [\![1]\!] - [\![1]\!]\big[\!\big[X^2\big]\!\big]+\frac{3}{2}\big[\!\big[X^2\big]\!\big]\big[\!\big[X^2\big]\!\big]\right\}, \label{QQ6}
\end{gather}
where
\[
[\![A(X)]\!](t)= (2\pi)\int_0^{\infty} \dm r\, r^4 \ex^{-tr} \hat{\varrho}_{\mathrm{rad}}^2(r/R) \int_{-1}^1 \dm X \ex^{\im rX} A(X).
\]
We can compute $[\![1]\!]$ and $[\![X^2]\!]$ as
\begin{gather*}
[\![1]\!] =\int_{\BbbR} \dm x\, \frac{-24t^3+24t(1+x/R)^2}{\big\{t^2+(1+x/R)^2 \big\}^4} \vphi(x),\\
[\![X^2]\!]=\int_{\BbbR}\dm x\, F_t(1+x/R)\vphi(x),
\end{gather*}
where
\begin{gather*}
F_t(a)=8\frac{(1-a)t^5-2a\big(a^2+a-3\big)t^3-a^3\big(a^2+3a+6\big)t}{\big(t^2+a^2\big)^4}.
\end{gather*}
Since $\vphi(x)$ decays rapidly, we readily see that the integral in~(\ref{QQ6}) is uniformly bounded provided that~$R$ is sufficiently large.

\section[Basic properties of $I_{n_a, n_b, n_c}(a; b; c)$]{Basic properties of $\boldsymbol{I_{n_a, n_b, n_c}(a; b; c)}$}\label{BasicI}

Here, we will give a list of basic properties of $I_{n_a, n_b, n_c}(a; b; c)$ defined by~(\ref{DefIabc}).

The following result is easily checked:
\begin{gather}
I_{1, 1, 1}(a; b; c)=\frac{1}{ABC}, \label{I111}
\end{gather}
where $A=\sqrt{a}+\sqrt{b}$, $B=\sqrt{b}+\sqrt{c}$ and $C=\sqrt{c}+\sqrt{a}$. Using this, we have
\begin{gather}
I_{2, 2, 1}(a; b; c) =\frac{1}{4\sqrt{ab}} I_{1, 1, 1}(a; b; c) \left\{
\frac{2}{A^2} +\frac{1}{AC}+\frac{1}{AB}+\frac{1}{BC}\right\},\label{I221}\\
I_{2, 1, 2}(a; b; c) =\frac{1}{4\sqrt{ac}} I_{1, 1, 1}(a; b; c) \left\{
\frac{2}{C^2}+\frac{1}{AC}+\frac{1}{BC}+\frac{1}{AB}\right\}\label{I212}
\end{gather}
and
\begin{gather}
I_{3, 1, 1}(a; b; c) =\frac{1}{8a}I_{1, 1, 1}(a; b; c)\left\{
\frac{2}{A^2}+\frac{2}{C^2}+\frac{2}{AC}+\frac{1}{\sqrt{a}A}+\frac{1}{\sqrt{a}C}\right\}.\label{I311}
\end{gather}

\subsection*{Acknowledgements}
The original idea of the present paper comes from an unpublished sketch by Herbert Spohn. I~would like to thank the kind referees for very helpful comments. The discussions in Section~\ref{Discuss} heavily rely on their comments. This work was partially supported by KAKENHI 18K03315.

\pdfbookmark[1]{References}{ref}
\LastPageEnding

\end{document}